\def\eps{\epsilon}%
\def\tensor{\,\raise2pt\hbox{${}_{\otimes}$}\,}
\def\fdg{\,:\,}
\def\ptl{\partial}
\def\rest#1{\raise-2pt\hbox{${\lfloor_{#1}}$}}
\def\mbo#1{\boldsymbol{#1}}
\def\ip#1#2{\langle#1,#2\rangle}
\def\olin#1{\overline{#1}{}}
\def\grad{{\nabla}}
\newcommand{\leftexp}[2]{{\vphantom{#2}}^{#1}{#2}}
\def\halb{\frac{1}{2}}
\def \a{\alpha}
\def \b {\beta}
\newtheorem{theorem}{Theorem}[section]
\newtheorem{lemma}[theorem]{Lemma}
\newtheorem{corollary}[theorem]{Corollary}
\newtheorem{remark}[theorem]{Remark}
\newtheorem{definition}[theorem]{Definition}
\newtheorem{claim}[theorem]{Claim}
\newcommand{\ba}{\begin{array}}
\newcommand{\ea}{\end{array}}
\newcommand{\bea}{\begin{eqnarray}}
\newcommand{\eea}{\end{eqnarray}}
\newcommand{\bee}{\begin{eqnarray*}}
\newcommand{\eee}{\end{eqnarray*}}
\renewcommand{\vec}[1]{\mbox{\boldmath $#1$}}
\renewcommand{\a}{\alpha}
\renewcommand{\b}{\beta}
\renewcommand{\r}{\rho}
\newcommand{\green}[1]{{\color{green}#1}}
\newcounter{mnotecount}[section]
\renewcommand{\themnotecount}{\thesection.\arabic{mnotecount}}
\newcounter{mymnotecount}[section]
\renewcommand{\themymnotecount}{\thesection.\arabic{mymnotecount}}
\newcommand{\mymnote}[1]{\protect{\stepcounter{mymnotecount}}${\raisebox{0.5\baselineskip}[0pt]{\makebox[0pt][c]{\color{green}{\tiny\em$\bullet$\themnotecount}}}}$\marginpar{\raggedright\tiny\em$\!\bullet$\themymnotecount:

\green{#1}}\ignorespaces}
\renewcommand{\mymnote}[1]{}
\begin{document}

\title[ ADM Energy for Axially Symmetric Perturbations of Kerr Metric]{ Axially Symmetric Perturbations of Kerr Black Holes I: A gauge-invariant construction of ADM Energy} 


\author{Nishanth Gudapati}
\address{Center of Mathematical Sciences and Applications, Harvard University, 20 Garden Street, Cambridge, MA-02138, USA}
\email{nishanth.gudapati@cmsa.fas.harvard.edu}

\subjclass[2010]{Primary: 83C57, Secondary: 58E30, 58E20}


\begin{abstract}
Based on the Hamiltonian dimensional reduction of $3+1$ axially symmetric, Ricci-flat Lorentzian spacetimes to a $2+1$ Einstein-wave map system with the (negatively curved) hyperbolic 2-plane target, we construct a positive-definite, (spacetime) gauge-invariant energy functional for linear axially symmetric perturbations in the exterior of Kerr black holes, in a manner that is also gauge-independent on the target manifold. We also show that the positive-definite energy functional serves as a Hamiltonian for the constrained evolution of the linear perturbations. 

\end{abstract}

\maketitle

\section{Geometric Mass-Energy and Perturbations of Black Holes}
The stability of stationary solutions of a physical law serves as an impetus to the validity of the law. In the context of Einstein's equations for general relativity, an important stationary solution is the  Kerr family of black holes 
which is also an asymptotically flat, axially symmetric family of solutions of the $3+1$ dimensional vacuum Einstein equations for general relativity: 

\begin{align}\label{EVE}
\bar{R}_{\mu \nu} =0, \quad (\bar{M}, \bar{g}). 
\end{align}
In parts due to the physical relevance and the mathematical beauty arising from its multiple miraculous properties (see e.g., \cite{Teukolsky_15, Chandrasekhar_83}), the problem of stability of Kerr black hole spacetimes within the class of  Einstein's equations \eqref{EVE} has been a subject of active research interest since their discovery by R. Kerr in 1963. However, geometric properties of Kerr black hole spacetimes such as stationarity (as opposed to staticity), trapping of null geodesics and the general issue of gauge dependence of metric perturbations cause significant obstacles in the resolution of this `black hole stability' problem. 
In this work, we focus on the issue caused by the stationarity of the Kerr metric: 

\begin{description}
\item [\textbf{(P1)}] The problem of the ergo-region, the lack of a positive-definite and conserved energy and the superradiance, caused by the shift vector of the Kerr metric. 
\end{description}

It may be noted that  an asymptotically flat spacelike Riemannian hypersurface $(\olin{\Sigma}, \bar{q})$ such that $\bar{M} = \olin{\Sigma} \times \mathbb{R}$ satisfies the Einstein's equations for general relativity \eqref{EVE}, has a positive-definite total (ADM) mass $m_{\text{ADM}}:$

\begin{align}\label{ADM-def}
m_{\text{ADM}} \fdg =  \lim_{r \to \infty} \int_{\mathbb{S}^2(r)} \sum^3_{i,j,k=1} (\ptl_k \bar{q}_{i \ell} - \ptl_i \bar{q}_{\ell k}) \frac{x^i}{r} \bar{\mu}_{\mathbb{S}^2}, \quad \bar{q} \quad \text{is asymptotically Euclidean}
\end{align}

 from the celebrated positive-mass theorems of Schoen-Yau and Witten \cite{schoen-yau-1,schoen-yau-2,witten-pmt}. However, it is not necessary that the positivity of energy carries forward to the perturbative theory of Einstein's equations. In a general asymptotically flat manifold, it is a priori not determinate whether the mass-energy at infinity increases or decreases for small perturbations.  
 This outcome can be seen in the energy of a (linear) scalar wave equation propagating in the exterior of Kerr black holes - an illustrative, albeit a special `test' case of the perturbations of the Kerr metric. 
 
 However, as we already alluded to, the difficulty of constructing a positive-definite energy for the perturbative theory is not only due to the shift vector (or the ergo-region) of the Kerr metric. Even if one considers the Schwarzschild metric (the special case of vanishing angular momentum of Kerr),
 
 \begin{align}\label{sch}
 \bar{g} =- f^{-1} dt^2+ f dr^2 + r^2 d\omega^2_{\mathbb{S}^2},
 \end{align}
 where $f \fdg = (1-2mr^{-1}),$ it is not immediate that there exists a positive-definite energy for the perturbative theory of \eqref{sch}. In 1974, Moncrief had devised a `Hamiltonian' for the perturbative theory of Schwarzschild based on the ADM formalism of Einstein's equations \cite{Moncrief_74}. Suppose the Lorentzian spacetime $(\bar{M}, \bar{g})$ admits a $3+1$ ADM decomposition: 
\begin{align}
\bar{g} = - N^2 dt^2 + \bar{q}_{ij} (dx^i + N^i dt) \otimes (dx^j + N^j dt)
\end{align}
then the ADM constraint and evolution equations are given by the variational principle for the phase space $X_{\text{ADM}} \fdg = \{ (\mbo{\bar{\pi}}^{ij}, \bar{q}_{ij}),\, i, j= 1, 2, 3 \}$: 

\begin{subequations}
\begin{align}
I_{\text{ADM}} \fdg =& \int \left( \bar{\mbo{\pi}}^{ij} \ptl_t \bar{q}_{ij} - N H -N^i H_i \right)d^4x, 
\intertext{where}
H \fdg=& \bar{\mu}^{-1}_{\bar{q}} \left( \Vert \mbo{\bar{\pi}} \Vert^2_{\bar{q}} - \halb \text{Tr}_{\bar{q}} (\mbo{\bar{\pi}})^2 \right) - \bar{\mu}_{\bar{q}} R_{\bar{q}} \\
H_i \fdg=& -2 \leftexp{(\bar{q})}{\grad}_j \mbo{\bar{\pi}}^j_i
\end{align}
\end{subequations}
and $\{ N, N^i \}$ are the Lagrange multipliers. Suppose $(\bar{q}, \bar{\mbo{\pi}})$ are such that $\bar{q} -\bar{\delta} \in H^2_{-1/2}$ and  $ \bar{\mbo{\pi}} \in H^1_{-3/2}$  asymptotically flat i.e., diffeomorphic to $\mathbb{R}^3 \setminus B_1(0)$ in the complement of a compact set in $\olin{\Sigma}$ and
\begin{align}
q_{ij} =& \left(1+ \frac{M}{r} \right)  \bar{\delta}_{ij} + \mathcal{O}(r^{-1-\a}) \notag\\
\bar{\mbo{\pi}}^{ij} =& \mathcal{O}(r^{-2-\a}),
\end{align}
$ r= \vert x \vert,$ for some $\a>0,$ outside the compact set in $\olin{\Sigma}.$  As a consequence $M =m_{\text{ADM}}.$

It follows that the evolution equations are given by
\begin{align}
\ptl_t \bar{q}_{ij} =& 2 \bar{N} \bar{\mu}_{\bar{q}} \left(\bar{\mbo{\pi}}_{ij} - \halb q_{ij} \text{Tr}_{\bar{q}}(\bar{\mbo{\pi}}) \right) + \leftexp{(\bar{q})}{\grad}_j \bar{N}_i+ \leftexp{(\bar{q})}{\grad}_i \bar{N}_j \\
\ptl_t \bar{\mbo{\pi}}^{ij} =& - \bar{N} \bar{\mu}_{\bar{q}} \left( \leftexp{(\bar{q})}{R}^{ij} - \halb \bar{q}^{ij} R_{\bar{q}}  \right ) + \halb \bar{N} \bar{\mu}^{-1}_{\bar{q}} q^{ij} \left( \text{Tr}_{\bar{q}} (\bar{\mbo{\pi}}^2) - \halb \text{Tr}_{\bar{q}}(\bar{\mbo{\pi}})^2 \right) 
\notag\\
&-2 \bar{N} \bar{\mu}^{-1}_{\bar{q}} (\bar{\pi}^{im} \bar{\mbo{\pi}}^j_m - \halb \bar{\mbo{\pi}}^{ij} \text{Tr}_{\bar{q}}(\bar{\mbo{\pi}})) + \bar{\mu}_q ( \leftexp{(\bar{q})}{\grad}^
i \leftexp{(\bar{q})}{\grad}^
j \bar{N} \notag\\
&- \bar{q}^{ij} \leftexp{(\bar{q})}{\grad}
^m \leftexp{(\bar{q})}{\grad}_m \bar{N}) + \leftexp{(\bar{q})}{\grad}_m (\bar{\mbo{\pi}}^{ij} \bar{N} ^m) - \leftexp{(\bar{q})}{\grad}_m \bar{N}^i \bar{\mbo{\pi}}^{mj} 
- \leftexp{(\bar{q})}{\grad}_m \bar{N}^j \bar{\mbo{\pi}}^{mi}
\end{align}
Suppose we consider the small perturbations of the initial data of Schwarzschild black hole spacetimes: $\bar{q}= \bar{q}_{s} + \eps q'$ and $\bar{\mbo{\pi}}= \bar{\mbo{\pi}}_s + \eps \bar{\mbo{\pi}}',$ Moncrief's Hamiltonian energy formula is 

\begin{align}\label{vm-74}
H_{\text{pert}} \fdg =& \int_{\olin{\Sigma}} \Big \{ \bar{N} \bar{\mu}^{-1}_{\bar{q}} \Big(\Vert \bar{\mbo{\pi}}' \Vert_{\bar{q}}^2 - \halb \text{Tr}_{\bar{q}}(\bar{\mbo{\pi}}')^2 \Big) + \halb \bar{N} \bar{\mu}_{\bar{q}} \Big(\halb \leftexp{(\bar{q})}{\grad}_k \bar{q}'_{ij} \leftexp{(\bar{q})}{\grad}^k \bar{q}'^{ij} \notag\\
& \quad - \leftexp{(\bar{q})}{\grad}_k \bar{q}'_{ij} \leftexp{(\bar{q})}{\grad}^j \bar{q}'^{ik} 
- \halb \leftexp{(\bar{q})}{\grad}_i \bar{q}' \leftexp{(\bar{q})}{\grad}^i \bar{q}'
+ 2\leftexp{(\bar{q})}{\grad}_i \bar{q}' \leftexp{(\bar{q})}{\grad}_j \bar{q}'^{ij} \notag\\
& \quad + \bar{q}' \leftexp{(\bar{q})}{\grad}^2_{ij} \bar{q}'^{ij} - \bar{q}' \bar{q}'_{ij} R^{ij}_{\bar{q}} \Big) \Big \} d^3x,
\end{align}
which is a volume integral on the hypersurface $\olin{\Sigma},$ where $\bar{q}' = \text{Tr}(\bar{q}'_{ij}).$
Moncrief used the Hamiltonian formulation to decompose the metric perturbations into gauge-dependent, gauge-independent and constraints; and ultimately reconciled with the Regge-Wheeler-Zerilli results \cite{Regge-Wheeler_57, Zerilli_70}. An important feature of these results  is that  the energy functional \eqref{vm-74} can be realized to be positive-definite for both odd and even parity perturbations. Using tensor harmonics, positive-definite energy functionals for both odd and even parity  perturbations of Schwarzschild black holes were constructed in \cite{Moncrief_74}.  In this spirit, a number of pioneering articles on the perturbations of static black holes were written by Moncrief \cite{Moncrief_74_1, Moncrief_74_2, Moncrief_74_3}.

The subject of this article is to focus on axially symmetric perturbations of the Kerr metric. In precise terms, the Kerr metric $(\bar{M}, \bar{g})$ can be represented in Boyer-Lindquist coordinates $(t, r, \theta, \phi)$ as  

 \begin{align}\label{BL-Kerr}
\bar{g} =& - \left( \frac{\Delta - a^2 \sin^2 \theta}{\Sigma} \right) dt^2 - \frac{2a \sin^2 \theta (r^2 + a^2 -\Delta)}{\Sigma} dt d\phi \notag\\
&+ \left( \frac{(r^2 +a^2)^2 -\Delta a^2 \sin^2 \theta}{\Sigma}\right) \sin^2 \theta d\phi^2 + \frac{\Sigma}{\Delta} dr^2  + \Sigma d\theta^2
\end{align}
where,
\begin{subequations}
\begin{align}
\Sigma \fdg =&\, r^2 + a^2 \cos^2 \theta \\
\Delta \fdg =&\, r^2 -2Mr + a^2, \quad \textnormal{with the real roots} \quad \{r_-,r_+\} \\
r_{+} \fdg =&\, M + \sqrt{M^2- a^2} > r_{-} \notag
\intertext{and}
\theta \in [0, \pi],&\quad r \in (r_+, \infty),\quad \phi \in [0, 2\pi).
\end{align}
\end{subequations}
It is well known that Einstein's equations  \eqref{EVE} on spacetimes $(\bar{M}, \bar{g})$ with one isometry $(\frac{\ptl}{\ptl \phi})$, represented in Weyl-Papapetrou coordinates, 
\begin{align}
\bar{g} = e^{-2\gamma} g + e^{2\gamma} (d \phi + A_\nu dx^\nu)^2,
\end{align}
admit a dimensional reduction to a 2+1 dimensional Einstein wave map system 

\begin{subequations} \label{ewm-system}
\begin{align}
E_{\mu \nu} =& \, T_{\mu \nu},  \\
\square_g U^A + \leftexp{(h)}{\Gamma}^A_{BC} g^{\mu \nu} \ptl_\mu U^B \ptl_\nu U^C=& \,0,  \quad \text{on} \quad (M, g).
\end{align}
\end{subequations}
where $\square_g$ is the covariant wave operator, $E_{\mu \nu}$ the Einstein tensor in the interior of the quotient $(M, g) \fdg= (\bar{M}, \bar{g})/ SO(2)$ and $T$ is the stress energy tensor of the wave map $U \fdg (M, g) \to (\mathbb{N}, h),$ $\mathbb{N}$ is the negatively curved hyperbolic $2$-plane, 

\begin{align}
T_{\mu \nu} = \ip{\ptl_\mu U} {\ptl_\nu U }_{h(U)} - \halb g_{\mu \nu} \ip{\ptl_\sigma U}{ \ptl^\sigma U}_{h(U)}.
\end{align}
Introducing the coordinates $(\r, z)$ such that $\r = R \sin \theta$  $z= R \cos 
\theta,$ where $R \fdg= \halb (r-m+ \sqrt{\Delta}),$ the Kerr metric \eqref{BL-Kerr} can be represented in the Weyl-Papapetrou form as 

\begin{align}
\bar{g} = \Sigma \zeta^{-1} (-\Delta dt^2 + \zeta R^{-2} (d\r^2 + dz^2)) + \sin^2\theta \Sigma^{-1} \zeta
(d \phi - 2aMr \zeta^{-1} dt))^2
\end{align}
where $\zeta = (r^2 +a^2)^2 - a^2 \Delta \sin^2 \theta.$ Furthermore, the Kerr metric can also be represented in the Weyl-Papapetrou form using functions $(\bar{\r}, \bar{z})$ 
such that 
\begin{align}
\bar{\r} = \r - \frac{(m^2 -a^2)}{4 R^2}\r, \quad \text{and} \quad \bar{z} = z + \frac{(m^2 -a^2)}{4 R^2}z
\end{align}
(cf. Appendix A in \cite{GM17} for details). Now we shall turn to the axially symmetric perturbation theory of the Kerr metric. 

In view of the peculiar behaviour of the 2+1 Einstein-wave map system, a detailed discussion of our methods is relevant for our article and perhaps also interesting to the reader. Consider the Hamiltonian energy of an axially symmetric linear wave equation propagating on the Kerr metric ($\square_g u =0$),
\begin{align}
H^{\text{LW}} \fdg = \int_{\olin{\Sigma}}  \left( \halb \bar{N}\bar{\mu}^{-1}_q v^2 + vN^i \ptl_i u+  \halb N \bar{\mu}_q \bar{q}^{ij} \ptl_i u \ptl_j u  \right) d^3 x
\end{align}
where $v$ is the conjugate momentum of $u,$ the energy is directly positive-definite. 
However this simplification does not carry forward to the Maxwell equations on the Kerr metric

\begin{align}
H^{\text{Max}} \fdg =&  \int \left( \halb N \bar{q}_{ij} \bar{\mu}_q (\mathfrak{E}^i \mathfrak{E}^j + \mathfrak{B}^i \mathfrak{B}^j) + N^i \eps_{ijk}\mathfrak{E}^j \mathfrak{B}^k   \right) d^3 x
\intertext{where}
\mathfrak{E}^i \fdg =& \halb \eps^{ijk} \leftexp{*}F_{jk}, \quad
\mathfrak{B}^i \fdg = \halb \eps^{ijk} F_{jk}.
\end{align}
Actually, one can construct counter examples of positivity of energy density, for instance using, time-symmetric Maxwell fields (cf. the discussion in Section 2 in\cite{GM17}).  In a crucial work, Dain-de Austria \cite{DA_14} had arrived at a positive-definite energy for the gravitational perturbations of extremal Kerr black holes using the Brill mass formula \cite{D09} and subsequent use of Carter's identity, originally developed for black hole uniqueness theorems. In a Weyl coordinate system for the spacelike hypersurface $(\olin{\Sigma}, q)$ in extremal Kerr spacetime, their positive-definite energy for axially symmetric perturbations is obtained
from perturbations of the Brill mass formula, which in turn is obtained from multiplying a factor with the Hamiltonian constraint that conveniently results in a volume form (in the chosen Weyl coordinate system) that is useful in its representation.

In order to construct a positive-definite energy for the perturbations of Kerr-Newman metric for the full-subextremal range, we  delve into the variational structure of the relevant field equations. The beautiful linearization stability framework, developed by V. Moncrief, J. Marsden and A. Fischer \cite{FM_75,Mon_75, FMM_80}, provides a 
natural mechanism to construct an energy-functional based on the kernel of the adjoint of the deformations around the Kerr metric of the dimensionally reduced constraint map. This recognition allows us to extend results to the full sub-extremal range $(\vert a \vert, \vert Q \vert <M)$ of the perturbations of the Kerr-Newman metric \cite{GM17}, which is a solution of Einstein-Maxwell equations of general relativity. 



 Consider the ADM decomposition of $\bar{M}=\olin{\Sigma} \times \mathbb{R}.$ Suppose the group $SO(2)$ acts on $\olin{\Sigma}$ through isometries such that $\Gamma$ is the fixed point set. Suppose the norm squared of the Killing vector generating the rotational isometry is denoted by $e^{2\gamma}$. Let the Lorentzian manifold with boundary $\Sigma \times \mathbb{R}$ be denoted as $(M, g).$   In the dimensional reduction ansatz, the metric $\bar{g}$ is
\begin{align}\label{KK-ADM}
\bar{g} = e^{-2\gamma} (-N^2 dt^2 + q_{ab} (dx^a + N^a dt) \otimes (dx^b + N^b dt)) + e^{2\gamma} (d\phi + \mathcal{A}_0 dt + \mathcal{A}_a dx^a)^2.
\end{align}
In the dimensional reduction framework, identifying the reduced conjugate momenta, which form the reduced phase space in $(M, g);$ and the corresponding reduced Hamiltonian formalism is nontrivial. This construction was  done in \cite{kaluz1}. Define the conjugate momentum corresponding to the metric $q_{ab}$ as follows: 
\begin{align}
\mbo{\pi}^{ab} = e^{-2\gamma} \bar{\mbo{\pi}}^{ab}, \quad \bar{q}_{ab} = e^{-2\gamma} q_{ab} + e^{2\gamma} \mathcal{A}_a \mathcal{A}_b.
\end{align} 
As a consequence, the ADM action principle transforms to 
\begin{align}
J = \int^{t_2}_{t_1}\int_{\Sigma} \big(   \mbo{\pi}^{ab} \ptl_t q_{ab} + \mathcal{E}^a \ptl_t \mathcal{A}_a + p \ptl_t \gamma - N H - N^aH_a+ \mathcal{A}_0 \ptl_a \mathcal{E}^a \big) d^2xdt
\end{align}
where the phase-space is now 
\begin{align}
\big \{ (q, \mbo{\pi}), (\mathcal{A}_a, \mathcal{E}^a), (\gamma, p) \big\}
\intertext{with the Lagrange multipliers}
\big \{ N, N^a, \mathcal{A}_0 \big\}
\end{align}
and the constraints: 
\begin{subequations} \label{2+1constraints}
\begin{align}
H=& \bar{\mu}^{-1}_q (\Vert \mbo{\pi} \Vert^2_q - \text{Tr}_q(\mbo{\pi})^2) + \frac{1}{8} p^2 + \halb e^{-4\gamma}
q_{ab} \mathcal{E}^a \mathcal{E}^b + \bar{\mu}_q (-R_q + 2 q^{ab} \ptl_a \gamma \ptl_b \gamma) \notag\\
&+ \frac{1}{4} e^{4\gamma} q^{ab}q^{bd} \ptl_{[b} \mathcal{A}_{a]}  \ptl_{[d} \mathcal{A}_{c]}, \\
H_a=& -2 \leftexp{(q)}{ \grad}_b \mbo{\pi}^b_a + p \ptl_a \gamma + \mathcal{E}^b (\ptl_{[a}\mathcal{A}_{b]} ) ,\\
\ptl_a \mathcal{E}^a =&0
\end{align}
\end{subequations}
After applying the Poincar\`e Lemma and introducing the twist potential such that $\mathcal{E}^a = \fdg \eps^{ab} \ptl_b \omega$ we transform into the phase space 

\[ X_{\text{EWM}} = \big \{ (\gamma, p), (\omega, \mbo{r}), (q_{ab}, \mbo{\pi}^{ab}) \big \}\] and the variational principle reduces to 

\begin{subequations} \label{2+1-no-constraints}
\begin{align}
\tilde{J} \fdg=&  \int^{t_2}_{t_1}\int_{\Sigma} \left(  \mbo{\pi}^{ab} \ptl_t q_{ab} + p \ptl_t \gamma + r \ptl_t \omega - N H - N^a H_a \right) d^2x dt,
\intertext{where $H$ and $H_a$ are now}
H=&  \bar{\mu}^{-1}_q \left(  \Vert \mbo{\pi} \Vert^2_q - \text{Tr}_q (\mbo{\pi})^2 + \frac{1}{8} p^2 + \frac{1}{2} e^{4\gamma} \mbo{r}^2 \right) \notag\\
& \quad + \bar{\mu}^{-1}_q \Big( -R_q + 2 q^{ab} \ptl_a \gamma \ptl_b \gamma + \halb e^{-4\gamma} q^{ab} \ptl_a \omega \ptl_b \omega \Big) \\
H_a=& -2 \leftexp{(q)}{\grad}_b \mbo{\pi}^b_a + p \ptl_a \gamma + \mbo{r} \ptl_a \omega.
\end{align}
\end{subequations}
\noindent with the Lagrange multipliers $N, N_a.$ After computing the field  equations in the perturbed phase space \[ X' \fdg = \big \{(\gamma', p'), (\omega', \mbo{r}'), 
(q'_{ab}, \mbo{\pi}^{' ab} ) \big\} \] 
It was noted that $ (N, 0)^{\text{T}}$ is an element of the kernel of the adjoint of the perturbed constraint map. This in turn provides a candidate for the energy, analogous to \eqref{vm-74}. The resulting expression has the potential energy
\begin{align}
D^2 \cdot \mathcal{V} = \bar{\mu}_q q^{ab} \big( 4 \ptl_a \gamma'
\ptl_b \gamma'  + e^{-4\gamma} \ptl_a \omega' \ptl_b \omega'   + 8e^{-4\gamma} \gamma'^2 \ptl_a \omega \ptl_b \omega-8e^{-4\gamma} \gamma' \ptl_a \omega \ptl_b \omega'\big).
\end{align}
which is then transformed to a positive-definite form using the Carter-Robinson identities. Firstly, it may be noted that, in the original Carter-Robinson identities are not restrictive to the choice of the function `$\r$' (   in \cite{Car_71} and in eq. (5) in \cite{Rob_74} ) and can thus be generalized as follows 

\begin{align}
&\bar{\mu}_q q^{ab} \big( 4 \ptl_a \gamma'
\ptl_b \gamma'  + e^{-4\gamma} \ptl_a \omega' \ptl_b \omega'   + 8e^{-4\gamma} \gamma'^2 \ptl_a \omega \ptl_b \omega-8e^{-4\gamma} \gamma' \ptl_a \omega \ptl_b \omega'\big) \notag\\
&+ \ptl_b (N \bar{\mu}_q q^{ab} ( -2 e^{-4\gamma} \ptl_a \gamma \omega' + e^{-4\gamma} \omega' + 4 e^{-4\gamma} \gamma' \ptl_a \omega)) \notag\\
&+ \halb \bar{\mu}_q e^{-4\gamma} L_1 (e^{-2\gamma} \omega') + \bar{\mu}_q L_2 (-4\gamma' \omega')  \notag\\
&= N \bar{\mu}_q q^{ab} (\leftexp{(1)}{V}_a \leftexp{(1)}{V}_b + \leftexp{(2)}{V}_a \leftexp{(2)}{V}_b + \leftexp{(3)}{V}_a \leftexp{(3)}{V}_b),
\end{align}
where,
\begin{subequations}
\begin{align}
\leftexp{(1)}{V}_a =& 2 \ptl_a \gamma' + e^{-4\gamma} \omega' \ptl_a \omega, \\
\leftexp{(2)}{V}_a=& -\ptl_a (e^{-2\gamma}\omega') + 2 e^{-2\gamma}\gamma' \ptl_a \omega, \\
\leftexp{(3)}{V}_a =& 2 \ptl_a \gamma \omega' -2\gamma' \ptl_a \omega,
\end{align}
\end{subequations}
and 
\begin{subequations}
\begin{align}
L_1 \fdg =& e^{-2 \gamma} ( \ptl_b (N \bar{\mu}_q q^{ab} \ptl_a \gamma) + N e^{-4\gamma} \bar{\mu}_q q^{ab} \ptl_a \omega \ptl_b \omega) \\
L_2 \fdg =& - \ptl_b( N \bar{\mu}_q q^{ab} e^{-4\gamma} \ptl_a \omega) 
\end{align}
\end{subequations}
results in a positive-definite energy of the form
\begin{align}
H^{\text{Reg}} = \int_{\Sigma} &  \Big \{ N \bar{\mu}^{-1}_q   \Big(
\varrho'^{b}_a \varrho'^{a}_b + \frac{1}{8} p'^2+ \halb
e^{4\gamma} \mbo{r}'^2  \Big) - \halb \bar{\mu}_q \tau'^2 \Big) \notag\\
&+ N \bar{\mu}_q q^{ab} \Big( 2 (\ptl_a \gamma' + \halb e^{-4\gamma} \omega' \ptl_a \omega) (\ptl_b \gamma' + \halb e^{-4\gamma} \omega' \ptl_b \omega) \notag\\
&+ 2(\gamma' e^{-2\gamma}\ptl_a \omega - \ptl_a (e^{-2\gamma} \omega'))(\gamma' e^{-2\gamma}\ptl_b \omega - \ptl_b (e^{-2\gamma} \omega' )) \notag\\ 
 &+ 2e^{-4\gamma} ( \ptl_a \gamma \omega' - \gamma' \ptl_a \omega)( \ptl_b \gamma \omega' - \gamma' \ptl_b \omega) \Big)
\Big\}\, d^2 x
\end{align}
modulo a time-coordinate gauge condition $\tau'=0.$
It is then shown that this energy functional is a Hamiltonian for the dynamics of the reduced Einstein equations in the perturbative phase-space and a spacetime divergence-free vector field density is constructed:

\begin{align}
J^{\text{Reg}} = (J^{\text{Reg}})^t \ptl_t + (J^{\text{Reg}})^a \ptl_a 
\end{align}
where $(J^{\text{Reg}})^t = \mathbf{e}^{\text{Reg}}$ and 

\begin{align}
(J^{\text{Reg}})^a =& N^2 \bar{\mu}^{-1}_q \big ( (p' \bar{\mu}_q q^{ab} \ptl_b \gamma') + e^{4\gamma} \mbo{r}' (e^{-4\gamma} \bar{\mu}_q q^{ab} \ptl_b \omega' ) \big) + \gamma' \mathcal{L}_{N'} (4N \bar{\mu}_q q^{ab} \ptl_b \gamma) \notag\\
&+ \omega' \mathcal{L}_{N'} (N e^{-4\gamma} \bar{\mu}_q q^{ab} \ptl_b \omega) + 2 \mathcal{L}_{N'} N ( \bar{\mu}_q q^{ab} \ptl_b \mbo{\nu}') + 2 \mathcal{L}_{X'} \mbo{\nu}' \bar{\mu}_q q^{ab} \ptl_b N  \notag\\
&- 2 X^a (\bar{\mu}_q q^{bc} \ptl_b \mbo{\nu}' \ptl_c N) + 2 N  q_0^{ac} \varrho'^{b}_c e^{-2\nu} \ptl_b N' + (N \ptl_b N' - N' \ptl_ b N) \tau' \bar{\mu}_q q^{ab}  \notag\\
& -2N' q_0^{ac} \varrho'^{b}_c e^{-2\gamma} \ptl_b N
\end{align} 
after the imposition of the linearly perturbed constraints. In the above, we restricted our discussion to the vacuum (Kerr metric) case because it is directly relevant for our work, but several technicalities related to the asymptotics, global regularity and the related hyperbolic and elliptic theory, in the context of the global Cauchy problem of the more general Kerr-Newman black holes, are addressed comprehensively in \cite{GM17}.
\\

It may be noted that the analogous transformations also resolve the positivity problem for the energy of axially symmetric Maxwell's equations if one does the dimensional reduction to introduce the twist potentials $\lambda, \eta$ corresponding to the $\mathfrak{E}$ and $\mathfrak{B}$ fields (cf. Section 1 in \cite{GM17}). 
\\

In the axially symmetric case, even though the original Maxwell equations are linear, a nonlinear transformation 
is used to reduce the 3+1 Einstein-Maxwell equations to an Einstein-wave map system \cite{kaluz2}, which introduces nonlinear coupling within the Maxwell `twist' fields. However, if we turn off the background $\mathfrak{E}$ and $\mathfrak{B}$ fields (e.g., restrict attention to the Kerr metric), the Maxwell equations in twist potential variables reduce to \emph{linear} hyperbolic PDE. 
 \\
   
Somewhat interestingly, it appears that the construction of a positive-definite energy for the axially symmetric Maxwell equations on Kerr black hole spacetimes does not easily follow from the Carter's identity, but can be realized a special case of the full Robinson's identity. In separate work, Prabhu-Wald have constructed a `canonical energy' for axially symmetric Maxwell equations on Kerr black holes. The connection between the our energy, the Robinson's identity and its generalizations and their canonical energy can be found in Section 1 of \cite{NG_19_1}.
\\

In \cite{NG_17_2}, a positive-definite Hamiltonian energy functional for axially symmetric Maxwell equations propagating on Kerr-de Sitter black hole spacetimes was constructed, using modified Einstein-wave maps for the Lorentzian Einstein manifolds with one rotational isometry \cite{NG_17_1}.
\\

In this work, we shall extend this result and construct a positive-definite energy in a way that is gauge-invariant on the target manifold $(\mathbb{N}, h)$. As we shall see, this is based on negative curvature of the target manifold $(\mathbb{H}^2, h)$ and the convexity of $2+1$ wave maps. The construction of an energy-functional based on the convexity of wave maps, together with our application of the linearization stability methods, suggests why the positivity of our (global) energy for the perturbative theory is to be expected in general, not relying on the insightful and elaborate identity of Carter, which relies on a specific gauge on the target. In such a formulation, the intrinsic geometry within the 2+1 Einstein wave map system becomes more transparent. 
\\

 In the context of black hole uniqueness theorems, extensions along these lines, from  the initial Carter-Robinson results,  were done by Bunting \cite{Bunt_83} and Mazur (\cite{Maz_00} and references therein). In the mathematics literature, convexity of harmonic maps for axially symmetric (Brill) initial data was established by Schoen-Zhou \cite{schoen-zhou_13}, which is often used in geometric inequalities between the area of the horizon, angular-momentum and the mass. 
 \\
 
 In general, due to the geometric nature of the construction, the linearization stability machinery provides a robust mechanism to deal with the stability problems of black holes within a symmetry class, including the initial value problem on hypersurfaces that intersect null infinity. The linearization stability machinery is also equipped for dealing with projections from higher- dimensional $(n+1, n>3)$ black holes with suitable symmetries (toroidal $\mathbb{T}^{n-2}$ spacelike symmetries), including 5D Myers-Perry black holes, the stability of which is the main open problem in the stability of higher dimensional black holes (see e.g., \cite{EmRe_08}). Indeed, most of our current work, especially the local aspects, readily extend to perturbations within the aforementioned symmetry class of higher-dimensional black holes (see below). However, we propose to carefully address the global aspects of this problem, using the methods developed in \cite{GM17}, in a future work. 
\\

We would like to point out there  are related and independent works, based on the `canonical energy' of Hollands-Wald \cite{WH_13}. In \cite{WP_13}, Prabhu-Wald have extended \cite{WH_13} by associating the axisymmetric stability to the existence of a positive-definite `canonical energy'. A positive-definite energy functional was constructed by Dafermos-Holzegel-Rodnianski \cite{HDR_16} in the context of their proof of linear stability of Schwarzschild black holes (see also \cite{GH_16}). Subsequently, a positive-definite energy was constructed by Prabhu-Wald using the canonical energy methods, that is consistent with both \cite{HDR_16} and \cite{Moncrief_74}. Their approach is based on the construction of metric perturbations using the Teukolsky variable as the Hertz potential. A `canonical energy' for perturbations of higher-dimensional black holes within the $\mathbb{T}^{n-2}$ symmetry class, which results in a symplectic structure in the orbit space $\bar{M}/ \mathbb{T}^{n-2},$  is constructed by Hollands-Wald \cite{WH_19}. We would like to point out that their `canonical energy' expression is comparable with the energy constructed in our work (eq. \eqref{e-reg-den} and \eqref{e-reg}). 

From a PDE perspective, a suitable notion of (positive-definite) energy is crucial to control the dynamics of a given system of PDEs. In case the scaling symmetries of a nonlinear hyperbolic PDE and its corresponding energy match, powerful techniques come into play that characterize blow up (concentration) and scattering categorically. This problem is referred to as `energy critical'. In the context of 2+1 critical flat-space wave maps: 

\begin{align}
U \fdg \mathbb{R}^{2+1} \to (\mathbb{N}, h)
\end{align} 

\noindent the fact that this characterization can be made was demonstrated in the landmark works \cite{chris_tah1, jal_tah, jal_tah1, struwe_equi, struwe_sswm, tao_all, krieg_schlag_ccwm,sterb_tata_long, sterb_tata_main} in the analysis of geometric wave equations. It may be noted that 3+1 Einstein’s equations with one translational isometry can be reduced to the 2+1 Einstein-wave map system \eqref{ewm-system}. In this case the notion of a positive-definite, gauge-invariant Hamiltonian mass-energy is provided by Ashtekar-Varadarajan \cite{ash_var} (see also Thorne's C-energy \cite{thorne_cenergy})
\begin{align}
q_{ab} = r^{-m_{AV}} (\delta_{ab} + \mathcal{O}(r^{-1}))
\end{align}
in the asymptotic region of asymptotically flat $(\Sigma, q).$ In a previous work \cite{diss_13}, it was noted that the aforementioned fundamental results on flat space wave maps can be extended to the 2 + 1 Einstein-wave map system resulting from 3 + 1 Einstein’s equations with translational symmetry using the AV-mass, which in turn is related to the energy of 2 + 1 wave maps arising from the energy-momentum tensor and a local conservation law in the equivariant case. We would like to point out that, even though the dimensional reduction of 3 + 1 dimensional axially symmetric,  asymptotically flat spacetimes 
results in the same 2 + 1 Einstein-wave map system locally, the axisymmetric problem is not a (geometric) mass-energy-critical problem \cite{NG17}. There is yet another dimensional reduction, based on the ‘time-translational’ Killing vector of stationary class of spacetimes, in which the Kerr metric also belongs, that results in harmonic maps. This distinction between each of the three cases, which is relevent for the applicable methods therein, is explained in \cite{NG17} for the interested reader.
 \\
 
\noindent Without the energy-criticality of the 2+1 Einstein-wave map system, a direct consideration of the nonlinear problem, analogous to \cite{diss_13, AGS_15}, is infeasible. A long-standing approach that is commonly used in the stability problems of Einstein's equations, is to first consider the linear perturbations and hope to control the nonlinear (higher-order perturbations) using the linear perturbation theory. 
However, the problem of what is the natural notion of energy for the linear perturbative theory, that is consistent with the dimensional reduction and wave map structure of field equations remains open: 

\begin{description}
\item [(\textbf{P2})] Is there a natural notion of mass-energy for the axially symmetric linear peturbative theory of Kerr black hole spacetimes that is consistent with the dimensional reduction and the wave map structure of the equations? 
\end{description}

\noindent This question is closely related to whether there exists a natural factor that multiplies the dimensionally reduced Hamiltonian constraint
of the system (compare with the discussion in pp. 3-4 in \cite{NG17}), which provides a natural notion of energy for our linearized problem. 
We point out that the linearization stability methods employed in our works provide a natural mechanism that kills both the `birds' \textbf{(P1)} $(\vert a \vert < m)$ and \textbf{(P2)} with one `shot', if one may use this terminology. 
\\

\noindent Nevertheless, dealing with a plethora of boundary terms that arise in the construction of the positive-definite energy, in connection with the gauge-conditions  and the dimensional reduction, is nontrivial\footnote{ this is in contrast with the Maxwell perturbations on Kerr black hole spacetimes, which is a (locally) gauge-invariant problem}.  These aspects shall be dealt with in detail in \cite{GM17} for the coupled Kerr-Newman problem. 
\\

\noindent In the current work, after establishing that the constraints for our system are scleronomic, 
we prove that our energy functional drives the  constrained Hamiltonian dynamics of our system 
and  that it forms a (spacetime) divergence-free vector field density, after the imposition of the constraints. 
In the process of obtaining our results, we construct several variational principles  from both Lagrangian and Hamiltonian perspectives, for the nonlinear (exact) and linear perturbative theories. These may be of interest in their own right. 
\\

\noindent As we already remarked, the black hole stability problem is a very active research area. The decay of Maxwell equations on Schwarzschild was proved in \cite{PB_08}. The linear stability of Schwarzschild was established in \cite{HDR_16}. Likewise, the linear stability of Schwarzschild black hole spacetimes using the Cauchy problem for metric coefficients was established in \cite{HKW_16_1, HKW_16_2}, which was recently extended to higher dimensional Schwarzschild-Tangherlini black holes \cite{HKW_18}. A Morawetz estimate for  linearized gravity on Schwarzschild black holes was established in \cite{ABW_17}. These results build on the classic results \cite{Regge-Wheeler_57, Zerilli_70, Zerilli_74, Moncrief_74}. 
\\

The important case of linear wave perturbations of Kerr black holes was studied in several fundamental works for small angular momentum \cite{LB_15_1, Tato_11, DR_11}. Likewise, the decay of Maxwell perturbations of Kerr was established in \cite{LB_15_2}. A uniform energy bound and Morawetz estimate for the  $ \vert s \vert =1, 2$ Teukolsky equations was established in \cite{SMa_17_1,SMa_17_2}. Boundedness and decay for the $\vert s \vert=2$ Teukolsky equation was established in \cite{HDR_17}.  A positive-definite energy for axially symmetric NP-Maxwell scalars was constructed in \cite{NG_19_1}, extending our aforementioned results on Maxwell equations. 
Recently, the linear stability of Kerr black holes was announced in \cite{ABBM_19} by extending the works \cite{SMa_17_1, SMa_17_2} for small $\vert a \vert.$ 
\\

\noindent The effects of the ergo-region become more subtle for rapidly rotating (but $\vert a \vert < M$) Kerr black holes. The decay of the scalar wave for fixed azimuthal modes was established in \cite{FKSY_06, FKSY_08, FKSY_08_E} using spectral methods. The decay of a general linear wave equation was established in \cite{DRS_16}. We would like to point out that the global behaviour, especially the decay estimates, of Maxwell and linearized Einstein perturbations of Kerr black holes,  is relatively less understood for the large, but sub-extremal $(\vert a \vert < M)$ case. We expect that our work will be useful to fill this gap.

\section{A Hamiltonian Formalism for Axially Symmetric Spacetimes}
Recall that $\bar{M} = \olin{\Sigma} \times \mathbb{R} $ is a $3+1$ Lorentzian spacetime, such that the rotational vector field $\Phi$ acts on $\olin{\Sigma}$ as an isometry with the fixed point set $\Gamma.$ In the case of Kerr black hole spacetime, $\Gamma$ is a union of two disjoint sets (the `axes'). It follows that the quotient $\Sigma \fdg= \olin{\Sigma}/SO(2)$ and $M \fdg = \Sigma \times \mathbb{R}$ are manifolds with boundary $\Gamma.$
\noindent Consider the Einstein-Hilbert action on $(\bar{M}, \bar{g})$
\begin{align} \label{EH}
S_{\text{EH}} \fdg = \int \bar{R}_{\bar{g}} \, \bar{\mu}_{\bar{g}}.
\end{align}
Suppose the axially symmetric $(\bar{M}, \bar{g})$ is a critical point of \eqref{EH}. In the Weyl-Papapetrou coordinates,

\begin{align}
\bar{g} = \vert \Phi \vert^{-1} \bar{g} +  \vert \Phi \vert (d \phi + A_\nu dx^\nu)^2
\end{align}
$ \vert \Phi \vert$ is the norm squared of the Killing vector $ \Phi \fdg = \ptl_\phi,$ and $g$ is the metric on the quotient $M \fdg = \bar{M}/SO(2).$ Suppose  $II$ is the second fundamental form of the embedding $(M, \tilde{g}) \hookrightarrow (\bar{M}, \bar{g})$, $\tilde{g} = \vert \Phi \vert^{-1} g$  then following the Gauss-Kodazzi equations and the  conformal transformation, 

\begin{align}
\tilde{R}_{\tilde{g}} = \vert \Phi \vert^{-1} (R_g -4  g^{\mu \nu} \grad_\mu \grad_\nu \log \vert \Phi \vert^{1/2}  - 2 g^{\mu \nu} \grad_\mu \log \vert \Phi \vert^{1/2} \grad_\nu \log \vert \Phi \vert^{1/2} )
\end{align}
The Einstein-Hilbert action \eqref{EH} can be reduced to 
\begin{align}\label{EWM-Lag}
L_{\text{EWM}} \fdg = \halb \int  \left(\frac{1}{\kappa} R_g - h_{AB} (U) g^{\a \b} \ptl_\a U^A \ptl_\b U^B \right) \bar{\mu}_g
\end{align}
for $\kappa =2$ and $U$ is a wave map 
\begin{align}
U \fdg (M, g)  \to (\mathbb{N}, h)
\end{align}
to a hyperbolic 2-plane target $(\mathbb{N}, h)$, whose components are associated to the norm and the twist (potential) of the Killing vector. The tangent bundle of the configuration space of \eqref{EWM-Lag} is now
\begin{align}
C_{\text{EWM}} \fdg = \left \{ (g, \dot{g}), (U^A, \dot{U}^A) \right \}
\end{align}
where the dot (e.g., $\dot{U}$) denotes derivative with respect to a time-coordinate function $t.$
We would like to perform the Hamiltionian reduction of the system \eqref{EWM-Lag}. Recall the ADM decomposition of $(M, g) = (\Sigma, q) \times \mathbb{R}$
\begin{align}
g = -N^2 dt^2 + q_{ab} (dx^a + N^a dt) \otimes (dx^b + N^b dt) 
\end{align}

 Let us split the geometric part and the wave map part of the variational principle \eqref{EWM-Lag} as $L_{\text{EWM}} = L_{\text{geom}} + L_{\text{WM}}.$ 
 Let us now start with the Hamiltonian reduction of the wave map Lagrangian $L_{\text{WM}}$
 \begin{align}\label{WM-lag}
L_{\text{WM}} \fdg= -\halb\int  (h_{AB} (U) g^{\a \b} \ptl_\a U^A \ptl_\b U^B) \bar{\mu}_g
\end{align}
 over the tangent bundle of the configuration space of wave maps, $C_{\text{WM}} = \{ (U^A, \dot{U}^A) \}.$

Suppose we denote the Lagrangian density of \eqref{WM-lag} as $\mathcal{L}$  and conjugate momenta as $p_A,$ we have 

\begin{align}
p_A = \frac{1}{N} \bar{\mu}_q h_{AB}(U)  \ptl_t U^B -  \frac{1}{N} \bar{\mu}_q h_{AB} (U) \mathcal{L}_N  U^B
\end{align}
where $\mathcal{L}_N$ is the Lie derivative with respect to the shift $N^a.$  As a consequence, we have 
\begin{align}
h_{AB} (U)\ptl_t U^B =&\, \frac{1}{\bar{\mu}_q} N p_A + h_{AB} (U) \mathcal{L}_N U^B,
\intertext{ and the Lagrangian density $\mathcal{L}_{\text{WM}}$ can be expressed in terms of the wave map phase space $X_{\text{WM}} \fdg = \{ (U^A, p_A) \}$ as}
\mathcal{L}_{\text{WM}} =&\, \halb p_B \ptl_t U^B - \halb p_B \mathcal{L}_N U^B - \halb  N \bar{\mu}_q h_{AB} (U) q^{ab} \ptl_a U^a \ptl_b U^B. 
\end{align}                    
Let us now define the Hamiltonian density as follows, 
\begin{align}
\mathcal{H}_{\text{WM}} \fdg = \halb p_B \ptl_t U^B + \halb p_B \mathcal{L}_N U^B + \halb  N \bar{\mu}_q h_{AB} (U) q^{ab} \ptl_a U^A \ptl_b U^B. 
\end{align}
As a consequence, we formulate the ADM variational principle for the Hamiltonian dynamics of the wave map phase space $X_{\text{WM}}$ as 

\begin{align}
L_{\text{WM}} [X_{\text{WM}}] \fdg =  \halb \int ( p_A \ptl_t U^A -  p_B \mathcal{L}_N U^B -  N \bar{\mu}_q h_{AB} (U) q^{ab} \ptl_a U^a \ptl_b U^B )d^3x,
\end{align}
which has the field equations, 
\begin{align} \label{u-dot}
p_A = \frac{1}{N} \bar{\mu}_q h_{AB} (U)  \ptl_t U^B -  \frac{1}{N} \bar{\mu}_q h_{AB} (U) \mathcal{L}_N  U^B
\end{align}
and the critical point with respect to  (the first variation $ D_{U^A}  \cdot L_{\text{WM}}=0$) $U^a$ gives 
\begin{align}\label{p-dot}
\ptl_t p_A=& -N \bar{\mu}^{-1}_q \frac{\ptl}{\ptl U^A}h^{BC} p_B p_C + h_{AB} \ptl_a(N \bar{\mu}_q q^{ab} \ptl_b U^B) \notag\\ +   & N\bar{\mu}_q h_{AB} \leftexp{(h)}{\Gamma}^B_{CD}(U)q^{ab}\ptl_a U^C \ptl_b U^D 
 + \mathcal{L}_N p_A,
\intertext{where $\leftexp{(h)}{\Gamma}$ are the Christoffel symbols}
\leftexp{(h)}{\Gamma}^A_{BC} \fdg =&\, \halb h^{AD}(U) (\ptl_C h_{BD} + \ptl_B h_{DC} - \ptl_D h_{BC})
\end{align}
It is straight-forward to verify that the canonical equations

\begin{align}
D_{p_A}\cdot H_{\text{WM}} = \ptl_t U^A \quad \text{and} \quad D_{U^A}\cdot H_{\text{WM}} = - \ptl_t p_A
\end{align}
correspond to \eqref{u-dot} and \eqref{p-dot} respectively,
where $H_{\text{WM}} \fdg = \int \mathcal{H}_{\text{WM}} d^2x $ is the (total) Hamiltonian. Subsequently, if we use the Gauss-Kodazzi equation for the ADM $2+1$ decomposition and defining the (geometric) phase space
\begin{align}
X_{\text{geom}} \fdg = \{ (q_{ab}, \mbo{\pi}^{ab}) \},
\end{align}
 we can represent the gravitational Lagrangian density as follows: 
\begin{align}
\mathcal{L}^{\text{Alt}}_{\text{geom}} \fdg = (- q_{ab}\ptl_t \pi^{ab} - N H_{\text{geom}} - N_a H_{\text{geom}}^a)
\end{align}

where, 
\begin{subequations}
\begin{align}
H_{\text{geom}}\fdg =& \bar{\mu}^{-1}_q  \left( \Vert \mbo{\pi} \Vert_q^2 -  \text{Tr}_q( \mbo{\pi} )^2 \right) - \bar{\mu}_q R_q  \\
H_{\text{geom}}^a \fdg =& -2\, \leftexp{(q)}{\grad}_b \mbo{\pi}^{ab}
\end{align}
\end{subequations}

It may be noted that, the conjugate momentum tensor of the reduced metric and the corresponding components of the conjugate momentum of the $3$ metric are related as follows 

\begin{align}
 \vert \Phi \vert \bar{\mbo{\pi}}^{ab} =   \mbo{\pi}^{ab}, \quad \text{where} \quad \mbo{\pi}^{ab} = \bar{\mu}_q \left(q^{ab} \text{Tr}_q(K) - K^{ab} \right) 
\end{align}

\noindent Consequently, we have the variational principle for Hamiltonian dynamics of the reduced Einstein wave map system. 

\begin{align} \label{ham-var}
J_{\text{EWM}} \fdg =  \int^{t_2}_{t_1}\int_{\Sigma} \left( \mbo{\pi}^{ab} \ptl_t q_{ab} + p_A \ptl_t U^A - N H - N^a H_a \right) d^2x dt 
\end{align}
where now the reduced $H$ and $H_a$ are 
\begin{subequations}
\begin{align}
H= & \bar{\mu}^{-1}_q \left( \left(\Vert \mbo{\pi} \Vert^2_q - \text{Tr}_q (\mbo{\pi})^2 \right) + \halb p_A p^A \right) + \bar{\mu}_q \left(- R_q + \halb h_{AB} q^{ab} \ptl_a U^A  \ptl_b U^B \right) \\
H_a=& - 2\, \leftexp{(q)}{\grad}_b \mbo{\pi}^b_a +  p_A \ptl_a U^A.
\end{align}
\end{subequations}
Therefore, we have proved the theorem
\begin{theorem}
Suppose $(\bar{M}, \bar{g})$ is an axially symmetric, Ricci-flat, globally hyperbolic Lorentzian spacetime and that $\bar{g}$ admits the decomposition \eqref{KK-ADM} in a local coordinate system, then the dimensionally reduced field equations in the interior of $M = \Sigma \times \mathbb{R},$ where $\Sigma = \olin{\Sigma}/SO(2),$ are derivable from the variational principle \eqref{ham-var} for the reduced phase space:
\begin{align}
X_{\textnormal{EWM}} \fdg = \{ (q_{ab}, \mbo{\pi}^{ab}), (U^A, p_A) \}
\end{align} 
\noindent with the Lagrange multipliers $\{ N, N^a \}.$
\end{theorem}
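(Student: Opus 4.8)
The plan is to assemble the chain of reductions carried out above and then verify that the Euler--Lagrange equations of \eqref{ham-var} are precisely the dimensionally reduced field equations. First, note that global hyperbolicity of $(\bar{M},\bar{g})$ supplies the Cauchy foliation underlying the product structures $\bar{M}=\olin{\Sigma}\times\mathbb{R}$ and $M=\Sigma\times\mathbb{R}$ together with a global time function $t$, so that both the Kaluza--Klein ADM ansatz \eqref{KK-ADM} and the $2+1$ ADM split $g=-N^2dt^2+q_{ab}(dx^a+N^adt)\otimes(dx^b+N^bdt)$ are globally meaningful on the interior of $M$. Ricci-flatness is equivalent to $\bar{g}$ being a critical point of the Einstein--Hilbert action \eqref{EH}, which, after inserting the conformal transformation of the scalar curvature recorded above, reduces $S_{\text{EH}}$ to the $2+1$ Einstein-wave map Lagrangian \eqref{EWM-Lag}, with $U^A$ the pair consisting of the logarithmic norm of $\Phi$ and the twist potential extracted from $\mathcal{A}_\nu$ via the Poincar\'e lemma, valued in the hyperbolic plane $(\mathbb{N},h)$; this identification and the trade of the one-form $\mathcal{A}_a$ together with its Gauss constraint $\ptl_a\mathcal{E}^a=0$ for the free field $\omega$ through $\mathcal{E}^a=\eps^{ab}\ptl_b\omega$ is the content of \cite{kaluz1}, so that no residual constraint survives beyond $H$ and $H_a$.

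Next I would perform the Legendre transform sector by sector. For the wave-map Lagrangian \eqref{WM-lag} the conjugate momenta $p_A=\tfrac{1}{N}\bar{\mu}_q h_{AB}(U)(\ptl_t U^B-\mathcal{L}_N U^B)$ are invertible for $\ptl_t U^A$, which yields the Hamiltonian density $\mathcal{H}_{\text{WM}}$ and the canonical pair $(U^A,p_A)$; one then checks directly that $D_{p_A}\cdot H_{\text{WM}}=\ptl_t U^A$ and $D_{U^A}\cdot H_{\text{WM}}=-\ptl_t p_A$ reproduce \eqref{u-dot} and \eqref{p-dot}, i.e.\ the covariant wave map equation $\square_g U^A+\leftexp{(h)}{\Gamma}^A_{BC}g^{\mu\nu}\ptl_\mu U^B\ptl_\nu U^C=0$. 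For the geometric Lagrangian, the Gauss--Codazzi relations for the $2+1$ split produce the ADM form with momentum $\mbo{\pi}^{ab}=\bar{\mu}_q(q^{ab}\text{Tr}_q(K)-K^{ab})$, related to the conjugate momentum of the reduced metric by $\vert\Phi\vert\,\bar{\mbo{\pi}}^{ab}=\mbo{\pi}^{ab}$ and to the $3$-metric data by $\bar{q}_{ab}=e^{-2\gamma}q_{ab}+e^{2\gamma}\mathcal{A}_a\mathcal{A}_b$, together with the densities $H_{\text{geom}}$ and $H^a_{\text{geom}}$.

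Then I would add the two Legendre-transformed Lagrangian densities. Since the lapse $N$ and shift $N^a$ enter linearly and multiply the sum of the geometric and wave-map contributions, the combined action is exactly \eqref{ham-var} with $H=H_{\text{geom}}+\bar{\mu}^{-1}_q\halb p_Ap^A+\bar{\mu}_q\halb h_{AB}q^{ab}\ptl_a U^A\ptl_b U^B$ and $H_a=H^a_{\text{geom}}+p_A\ptl_a U^A$, as displayed. Varying $N$ and $N^a$ returns the Hamiltonian and momentum constraints; varying $(q_{ab},\mbo{\pi}^{ab})$ and $(U^A,p_A)$ returns the evolution equations. Together these are equivalent to the $2+1$ Einstein-wave map system \eqref{ewm-system} along the foliation, hence, under the ansatz \eqref{KK-ADM} and by \cite{kaluz1}, to $\bar{R}_{\mu\nu}=0$, which establishes the theorem.

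The main obstacle is bookkeeping rather than structural: one must propagate the conformal weight $e^{2\gamma}=\vert\Phi\vert$ consistently through the conformal transformation of $R_g$ and through the momenta so that the cross terms between the geometric and wave-map sectors organize precisely into the stated $H$ and $H_a$, and one must discard the total-divergence terms generated both in the Legendre transform and in the Gauss--Codazzi reduction. Because the theorem concerns the \emph{interior} of $M$, these boundary contributions at the axis $\Gamma$ (where $\vert\Phi\vert\to 0$) and at spatial infinity are invisible here; the cancellations that justify dropping them, together with the attendant asymptotic and regularity analysis, are the delicate points deferred to \cite{GM17}.
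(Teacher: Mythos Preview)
Your proposal is correct and follows essentially the same route as the paper: the theorem is stated as the culmination of the computations in Section~2, namely the reduction of the Einstein--Hilbert action to the $2+1$ Einstein--wave map Lagrangian \eqref{EWM-Lag} via the conformal transformation, followed by a sector-by-sector Legendre transform (first of $L_{\text{WM}}$ to obtain $(U^A,p_A)$ and verify \eqref{u-dot}--\eqref{p-dot}, then of $L_{\text{geom}}$ via Gauss--Codazzi to obtain $(q_{ab},\mbo{\pi}^{ab})$), and finally assembly into \eqref{ham-var}. Your added remarks on global hyperbolicity supplying the foliation and on the bookkeeping of conformal weights and boundary terms are helpful context but go slightly beyond what the paper records before declaring ``Therefore, we have proved the theorem.''
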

\noindent As a consequence, we have the field equations for Hamiltonian dynamics in $X_{\text{EWM}}$
\begin{subequations}
\begin{align}
h_{AB} \ptl_t U^B =& N \bar{\mu}^{-1}_q p_A + h_{AB} \mathcal{L}_N U^B  \\
\ptl_t p_A=& - N \bar{\mu}^{-1}_q\frac{\ptl}{\ptl U^A}h^{BC} (U) p_B p_C + h_{AB} \ptl_a(N \bar{\mu}_q q^{ab} \ptl_b U^B) \notag\\
&+ N \bar{\mu}_q h_{AB} \leftexp{(h)}{\Gamma}^B_{CD}(U)q^{ab}\ptl_a U^C \ptl_b U^D + \mathcal{L}_N p_A \\
\ptl_t q_{ab} =& 2 N \bar{\mu}^{-1}_q ( \mbo{\pi}_{ab} - q_{ab} \text{Tr} (\mbo{\pi})) + \leftexp{(q)}{\grad}_a N_b + \leftexp{(q)}{\grad}_b N_a \label{qdot}\\
\ptl_t \mbo{\pi}^{ab}=& \halb N \bar{\mu}^{-1}_q q^{ab} (\Vert \mbo{\pi} \Vert^2_q - \text{Tr}(\mbo{\pi})^2)- 2N \bar{\mu}^{-1}_q \left(  \mbo{\pi}^{ac} \mbo{\pi}^{b}_c - \mbo{\pi}^{ab} \text{Tr}(\mbo{\pi}) \right) \notag\\
&+\bar{\mu}_q (\leftexp{(q)}{\grad}^b \,\leftexp{(q)}{\grad}^a N -q^{ab}\, \leftexp{(q)}{\grad}_c \leftexp{(q)}{\grad}^c N ) \notag\\
&+ \leftexp{(q)}{\grad}_c(\mbo{\pi}^{ab} N^c) - \leftexp{(q)}{\grad}_c N^a \mbo{\pi}^{cb} - \leftexp{(q)}{\grad}_c N^b \mbo{\pi}^{ca}\notag\\
&+ \frac{1}{4}\bar{\mu}_q^{-1}N q^{ab} p_A p^A + \halb N \bar{\mu}_q h_{AB} (q^{ac}q^{bd} - \halb q^{ab} q^{cd})\ptl_c U^A \ptl_d U^B 
\end{align}
\end{subequations}
and the constraint equations
\begin{subequations}\label{constraints}
\begin{align} 
H=0, \\
H_a =0.
\end{align}
\end{subequations}
It should be pointed out that, analogous to original ADM formulation \cite{ADM_62}, we have made a simplification with the coupling constant (see also the discussion in pp. 520-521 in \cite{MTW}). In case the precise coupling between the 2+1 Einstein's equations and its wave map source is relevant, the original coupling can be reinstated by simply substituting the following formulas throughout our work:  

\begin{subequations}
\begin{align}
\mbo{\pi}_{\text{true}} \fdg =& \frac{1}{2\kappa} \mbo{\pi}= \frac{1}{2\kappa}  \bar{\mu}_q \left(q^{ab} \text{Tr}(K) - K^{ab} \right),  \\
H_{\text{true}} \fdg=& \frac{1}{2\kappa} H_{\text{geom}} = \bar{\mu}^{-1}_q \frac{1}{2\kappa}  \left( \Vert \mbo{\pi} \Vert_q^2 -  \text{Tr}( \mbo{\pi} )^2 \right) - \frac{1}{2\kappa} \bar{\mu}_q R_q,  \\
(H_{\text{true}})_a \fdg=& \frac{1}{2 \kappa} (H_{\text{geom}})_a = -\frac{1}{\kappa} \leftexp{(q)}{\grad}_b \mbo{\pi}^{b}_a.
\end{align}
\end{subequations}
We would like to remind the reader  that, in the dimensional reduction process, we introduce the closed 1-form $G$ such that 
\begin{align}
\vert \Phi \vert^{-2} \eps_{\mu \nu \b} g^{\b \a} G_\a = F_{\mu \nu}
\end{align}
where $F = dA.$ In our simply connected domain, $G= dw,$ where $\omega$ is the gravitational twist potential and one of the components of the wave map $U.$

\subsection*{Nonlinear Conservation Laws}
Following Komar's definition of angular momentum, 
\begin{align}
J = \frac{1}{16\pi} \int_{\Sigma} \star d \Phi, \quad \text{(Komar angular momentum)}
\end{align}
it follows that for the Kerr metric $J = a M.$ In view of the well known fact that the angular momentum is conserved for our vacuum axisymmetric problem, without effective loss of generality, we shall assume that the perturbation of the angular-momentum is zero.

The dimensional reduction provides additional structure for the original field equations. As noted by Geroch \cite{Geroch_71}, the Lie group $SL(2, \mathbb{R})$ acts on the resulting target $(\mathbb{N}, h)$ in the dimensional reduction procedure. The M\"obius transformations, which are the isometries of $(\mathbb{N}, h)$ provide us a Poisson algebra of nonlinear conserved quantities. 

\begin{corollary}
Suppose $U: \Sigma \times (t_1, t_2 ) \to (\mathbb{N}, h)$ is the wave map coupled to 2+1 Einstein equations as above, 
then there exist  (spacetime) divergence-free vector fields $J_i,\, i =1, 2, 3$ such that if $C_i$ is the flux of $J_i$ at $\Sigma_t, \, t \in (t_1, t_2)$ hypersurface,
\begin{enumerate}
\item 
\begin{align}
\{C_i, C_j \} = \sigma^{k}_{ij} C_k, \quad i\neq j \neq k,
\end{align}
\end{enumerate}
where $\{ \cdot, \cdot \}$ is the Poisson bracket in the phase space $X_{\textnormal{EWM}}$ and  $\sigma^k_{ij}$ are the structure constants of the (M\"obius) isometries $\{ K_i, K_2, K_3\}$ of $(\mathbb{N}, h).$
\end{corollary}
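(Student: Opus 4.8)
The plan is to exploit the Hamiltonian structure established in Theorem~1.1 together with Noether's theorem applied to the isometry group of the wave map target $(\mathbb{N},h)\cong \mathbb{H}^2$. Since $(\mathbb{N},h)$ is the hyperbolic $2$-plane, its isometry group contains $SL(2,\mathbb{R})$ (equivalently $PSL(2,\mathbb{R})$ acting by Möbius transformations), whose Lie algebra is three-dimensional with structure constants $\sigma^k_{ij}$; fix the three generating Killing vector fields $K_1,K_2,K_3$ on $(\mathbb{N},h)$. First I would recall that for a wave map $U\colon(M,g)\to(\mathbb{N},h)$ each target Killing field $K_i$ produces, via the stress-energy tensor $T_{\mu\nu}=\langle\partial_\mu U,\partial_\nu U\rangle_h - \tfrac12 g_{\mu\nu}\langle\partial_\sigma U,\partial^\sigma U\rangle_h$, a spacetime current $(J_i)^\mu = T^{\mu}{}_{\nu}\,(K_i\circ U)^{\nu}$ — more precisely, pulling back, $(J_i)^\mu = g^{\mu\nu} h_{AB}(U)\,K_i^A(U)\,\partial_\nu U^B$ — which is divergence-free, $\nabla_\mu (J_i)^\mu = 0$, as a consequence of the wave map equation \eqref{ewm-system}(b) and the Killing equation $\nabla_{(A}(K_i)_{B)}=0$ on the target. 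This uses the Einstein constraint and evolution equations only through the fact that $U$ solves the coupled wave map equation; the geometric part $E_{\mu\nu}=T_{\mu\nu}$ is not needed for the conservation of $(J_i)^\mu$ itself, but it is what makes these genuinely conserved along the constrained dynamics of $X_{\textnormal{EWM}}$.

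Next I would pass to the Hamiltonian/phase-space description. In the ADM decomposition $g=-N^2dt^2+q_{ab}(dx^a+N^adt)(dx^b+N^bdt)$ the conserved charge associated to $K_i$ is the flux
\begin{align}
C_i \fdg = \int_{\Sigma_t} (J_i)^\mu\, n_\mu\, \bar{\mu}_q = \int_{\Sigma_t} p_A\, K_i^A(U)\; d^2x,
\end{align}
where $p_A$ is the wave map conjugate momentum identified in the reduction, $p_A = \tfrac1N \bar{\mu}_q h_{AB}(U)(\partial_t U^B - \mathcal{L}_N U^B)$. The key structural fact is that $C_i$ is precisely the Hamiltonian generator (the moment map component) of the symplectic action of the one-parameter subgroup $\exp(sK_i)$ on the phase space $X_{\textnormal{WM}}=\{(U^A,p_A)\}$: under $\delta U^A = \epsilon K_i^A(U)$, $\delta p_A = -\epsilon\, p_B\,\partial_A K_i^B(U)$ the canonical symplectic form $\int \delta p_A \wedge \delta U^A\, d^2x$ is preserved, and $C_i$ is its Hamiltonian function. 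This is a direct computation: $\{C_i, U^A\} = K_i^A(U)$ and $\{C_i, p_A\} = -p_B\,\partial_A K_i^B(U)$, which are exactly the infinitesimal Möbius transformations lifted to the cotangent bundle.

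The conclusion $\{C_i,C_j\} = \sigma^k_{ij} C_k$ is then the standard statement that the moment map of a Hamiltonian group action is (equivariant, hence) a Lie algebra homomorphism, provided the cocycle obstruction vanishes — and for a cotangent-lifted action it always does, so there is no central extension and the bracket closes with exactly the target structure constants. Concretely one computes
\begin{align}
\{C_i, C_j\} = \int_{\Sigma} \Big( \{C_i,p_A\} K_j^A(U) + p_A\, \partial_B K_j^A(U)\,\{C_i,U^B\}\Big) d^2x = \int_{\Sigma} p_A\,[K_i,K_j]^A(U)\, d^2x = \sigma^k_{ij} C_k,
\end{align}
using $[K_i,K_j] = \sigma^k_{ij} K_k$ from the $SL(2,\mathbb{R})$ algebra and the fact that the $q_{ab},\mbo{\pi}^{ab}$ sector Poisson-commutes with the $C_i$ (since $C_i$ depends only on $(U,p)$ and the cross terms in $H_a$ do not enter the bracket of two charges). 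The main obstacle I anticipate is not the algebra but the \emph{boundary and fall-off analysis}: $\Sigma$ has boundary $\Gamma$ (the axis), and since $e^{2\gamma}=|\Phi|^2$ degenerates on $\Gamma$ and grows at spatial infinity, one must check that the currents $(J_i)^\mu$ have no flux through $\Gamma$ and that the integrals defining $C_i$ converge and are genuinely conserved — i.e., that the spacetime divergence theorem applies with vanishing boundary contributions at $\Gamma$ and at infinity. This requires the asymptotic and axis regularity of the wave map components (the norm $\gamma$ and twist potential $\omega$) that is developed in \cite{GM17}; with those in hand, integrating $\nabla_\mu(J_i)^\mu=0$ over $\Sigma\times(t_1,t_2)$ gives the $t$-independence of $C_i$, and the Poisson-bracket identity is then an on-shell computation within the constraint surface $\mathscr{C}_H\cap\mathscr{C}_{H_a}$.
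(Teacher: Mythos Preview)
Your proposal is correct and follows essentially the same route as the paper: both identify the Noether charges $C_i=\int_{\Sigma_t} K_i^A(U)\,p_A\,d^2x$ associated to the target Killing fields $K_i$, verify that the corresponding spacetime current is divergence-free using the wave map equation together with the Killing equation on $(\mathbb{N},h)$, and then compute the Poisson bracket directly to obtain $\{C_i,C_j\}=\int p_A\,[K_i,K_j]^A=\sigma^k_{ij}C_k$. The only cosmetic difference is that you phrase the conservation law covariantly via the pulled-back current $J_i^\mu=g^{\mu\nu}h_{AB}K_i^A\partial_\nu U^B$, whereas the paper carries out the equivalent computation directly in ADM variables by evaluating $\partial_t(K_i^A p_A)$ from the Hamiltonian evolution equations; your added remarks on the moment-map interpretation and the axis/infinity boundary behaviour go slightly beyond what the paper records in its proof.
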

\begin{proof}
The M\"obius transformations on the target $(\mathbb{N}, h)$, the hyperbolic 2-pane, are isometries corresponding to translation, dilation and inversion $\{ K_1, K_2, K_3 \}.$ It follows that, 
\begin{align*}
\leftexp{(h)}{\grad}_A (K_i)_B + \leftexp{(h)}{\grad}_B (K_i)_A =0, \, \forall \, i = 1, 2, 3,
\end{align*}
Consider the quantity
\begin{align}
\ptl_t (K^A_i p_A) =&  \ptl_C K_i^A p_A ( N \bar{\mu}_q p^C + \mathcal{L}_N U^C)  \notag\\
&+ K_i^A \big(-N \bar{\mu}^{-1}_q \ptl_A h^{BC} p_B p_C
+ h_{AB} \ptl_a (N \bar{\mu}_q q^{ab} \ptl_b U^B) \notag\\
&+ N \bar{\mu}_q h_{AB} \leftexp{(h)}{\Gamma}^B_{CD} q^{ab} \ptl_a U^C \ptl_b U^D + \mathcal{L}_N p_A \big).
\end{align}
Now consider,
\begin{align}
\ptl_a (N \bar{\mu}_q q^{ab} \ptl_b U^B K_i^A h_{AB}) =& K_i^A h_{AB} \ptl_a (N \bar{\mu}_q q^{ab} \ptl_b U^B) \notag\\
&+ N \bar{\mu}_q q ^{ab} \ptl_a U^C \ptl_c K_i^A h_{AB} \ptl_b U^B \notag\\
&+ N \bar{\mu}_q q^{ab} \ptl_b U^B K_i^A \ptl_C h_{AB} \ptl_a U^C
\intertext{and note that}
 -N \bar{\mu}_q q ^{ab} \ptl_a U^C \ptl_c K^A h_{AB} \ptl_b U^B - & N \bar{\mu}_q q^{ab} \ptl_b U^B K^A \ptl_C h_{AB} \ptl_a U^C \notag\\
+ N \bar{\mu}_q h_{AB} \leftexp{(h)}{\Gamma}^B_{CD} q^{ab} \ptl_a U^C \ptl_b U^D K_i^A =& -\halb N \bar{\mu}_q q^{ab} \ptl_a U^C \ptl_b U^D \ptl_A h_{CD} \notag\\
&- N \bar{\mu}_q^{ab} \ptl_a U^C \ptl_c U^B \ptl_C K_i^A h_{AB}    \notag\\ 
=&\, 0
\end{align}
after relabeling of indices and on account of the fact that the deformation tensor of $K_i$ in the target $h$ is zero. Let us now define, 
\begin{align*}
(J_i)^t =& \, K^A_i p_A \\
(J_i)^a =& \,  \ptl_a ( N \bar{\mu}_q q^{ab} \ptl_b U^B K_i^A h_{AB} + N^a K_i^A p_A ). 
\end{align*}
It follows from above that each $J_i,\, i = 1, 2, 3$ is a spacetime divergence-free vector density. Now then, 
\begin{align}
C_i \fdg = \int_{\Sigma_t}K_i^A p_A d^2 x
\end{align}

\noindent and consider the Poisson bracket: 
\begin{align}
\{ C_i, C_j \} =&  \Bigg \{ \int_{\Sigma_t} K^A_i p_A \,,\, \int_{\Sigma_t} K^A_j p_A  \Bigg \} \notag\\
=& \int_{\Sigma_t} \left (\ptl_{U^A} K^B_i K^A_j - K^A_i \ptl_{U^A} K^B_j \right) p_B \notag\\
=& \int_{\Sigma_t} \big[K_i, K_j \big]^A p_A = \int_{\Sigma_t} \sigma^k_{ij} K^A_k p_A \notag\\
=& \sigma^k_{ij} C_k, \quad i \neq j \neq k.
\end{align}
This result generalizes the equivalent result in \cite{kaluz1}, where each $\Sigma$ is $\mathbb{S}^2$, to our non-compact case and a general gauge on the target metric $h$. We would like to point out that these conservation laws are closely related to the `moment maps' associated to the M\"obius transformations in the phase space. It may be noted that our arguments readily extend to the $(n+1)$ higher-dimensional context, where the target is $SL (n-2)/SO(n-2).$
\end{proof}

\noindent In the Weyl-Papapetrou coordinates, define a quantity $\mbo{\nu}$ such that 

\begin{align} \label{conformal}
q= e^{2\mbo{\nu}} q_0 
\end{align} 
where $q_0$ is the flat metric and  (mean curvature) scalar $\mbo{\tau} \fdg = \bar{\mu}^{-1}_q q_{ab} \mbo{\pi}^{ab}.$ In our work, it will be convenient to split $2$-tensors into a trace part and the conformal Killing operator \cite{kaluz1}. In the following lemma, we shall streamline the related discussion and results obtained in \cite{kaluz1}.   

\begin{lemma}
Suppose the phase space variables $ \big \{ (q, \mbo{\pi}), (U^A, p_A) \big\} \in X$ are smooth in the interior of $\Sigma$  then, we have 
\begin{enumerate}
\item Suppose  a vector field $Y \in T (\Sigma),$ then conformal Killing operator defined as
\begin{align}
\big(\textnormal{CK}(Y, q)\big)^{ab}\fdg = \bar{\mu}_q ( \leftexp{(q)}{\grad}^b Y^a + \leftexp{(q)}{\grad}^a Y^b - q^{ab} \, \leftexp{(q)}{\grad}_c Y^c) ,
\end{align}
is invariant under a conformal transformation, i.e., $\textnormal{CK} (Y,q) = \textnormal{CK}(Y, q_0).$
\item There exists a vector field $Y \in T(\Sigma),$ which is determined uniquely up a conformal Killing vector,  such that 
\begin{align}
\mbo{\pi}^{ab}=  e^{-2 \mbo{\nu}} (\textnormal{CK}(Y, q))^{ab} + \halb \mbo{\tau} \bar{\mu}_q q^{ab}
\end{align}
\item If  $ \big \{ (q, \mbo{\pi}), (U^A, p_A) \big\} \in X$ satisfy the constraint equations, the Hamilton and momentum constraint equations \eqref{constraints} can be represented as the elliptic equations:
\begin{align}
&\bar{\mu}^{-1}_{q_0} (e^{-2 \mbo{\nu}} \Vert \varrho \Vert^2_{q_0} - \halb \tau^2 e^{2 \mbo{\nu}} \bar{\mu}^2_{q_0} + \halb  p_A p^A) \notag\\
&+ \bar{\mu}_{q_0} (2 \Delta_0 \mbo{\nu} + h_{AB} q_0^{ab} \ptl_a U^A \ptl_b U^B) =0 
\intertext{and}
&-\leftexp{(q_0)}{\grad}_b \varrho^b_a  - \halb \ptl_a \mbo{\tau} e^{2 \mbo{\nu}} \bar{\mu}_{q_0} + \halb p_A \ptl_a U^A=0, \quad (\Sigma, q_0) 
\end{align}
respectively, where 
\begin{align}\label{varrho-def}
\varrho^a_c= \bar{\mu}_{q_0} (\leftexp{(q_0)}{\grad}_c Y^a + \leftexp{(q_0)}{\grad}^a Y_c - \delta^a_c \, \leftexp{(q_0)}{\grad}_b Y^b )
\end{align}
\end{enumerate}
\end{lemma}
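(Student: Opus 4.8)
The plan is to treat the three items separately: item (1) by a direct conformal-transformation computation, item (2) by reducing to an elliptic ($\bar\partial$-type) solvability problem on the simply connected orbit space, and item (3) by substituting (1) and (2) into the constraint system \eqref{constraints}.

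For (1), I would first recast the operator as $\big(\textnormal{CK}(Y,q)\big)^{ab}=\bar\mu_q\big((\mathcal{L}_Y q)^{ab}-\tfrac12 q^{ab}\,\mathrm{Tr}_q(\mathcal{L}_Y q)\big)$, which is legitimate in two dimensions because there $(\mathcal{L}_Y q)^{ab}=\leftexp{(q)}{\grad}^a Y^b+\leftexp{(q)}{\grad}^b Y^a$ and the usual trace coefficient $2/n$ equals $1$. Two observations then finish the argument: the trace-free part of a $(2,0)$-tensor is independent of the conformal representative of the metric (the factors $e^{\pm 2\mbo{\nu}}$ carried by $q^{ab}$ and by $q_{cd}$ in the trace cancel), and under $q\mapsto e^{2\mbo{\nu}}q_0$ one has $\mathcal{L}_Y(e^{2\mbo{\nu}}q_0)=e^{2\mbo{\nu}}\big(\mathcal{L}_Y q_0+2(Y\!\cdot\!\nabla\mbo{\nu})\,q_0\big)$, so after raising two indices an overall factor $e^{-2\mbo{\nu}}$ appears while the extra pure-trace term $\propto q_0^{ab}$ is annihilated by the trace-free projection; this $e^{-2\mbo{\nu}}$ is exactly cancelled by $\bar\mu_{e^{2\mbo{\nu}}q_0}=e^{2\mbo{\nu}}\bar\mu_{q_0}$. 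Hence $\textnormal{CK}(Y,q)=\textnormal{CK}(Y,q_0)$.

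For (2), I would split $\mbo{\pi}^{ab}$ algebraically into its $q$-trace-free part $S^{ab}:=\mbo{\pi}^{ab}-\tfrac12\mbo{\tau}\bar\mu_q q^{ab}$ (using $\mbo{\tau}=\bar\mu_q^{-1}q_{ab}\mbo{\pi}^{ab}$ and $q_{ab}q^{ab}=2$) and the already-identified trace piece, and then produce $Y$ with $e^{-2\mbo{\nu}}\textnormal{CK}(Y,q)^{ab}=S^{ab}$, equivalently, by (1), $\textnormal{CK}(Y,q_0)^{ab}=e^{2\mbo{\nu}}S^{ab}$. In a global conformally flat chart on $(\Sigma,q_0)$ and in complex notation, a symmetric trace-free $(2,0)$-object has a single complex component and $\textnormal{CK}(Y,q_0)$ acts on the complex component of $Y$ as a first-order $\bar\partial$-type operator, so the equation is solvable on a simply connected planar domain; the point where the detailed elliptic/functional-analytic work of \cite{kaluz1} enters is precisely the claim that there is no residual transverse–traceless obstruction and that the solution is compatible with the prescribed behaviour on the axis $\Gamma$. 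Uniqueness modulo a conformal Killing vector is then immediate, since $\ker\textnormal{CK}(\cdot,q_0)$ consists precisely of the conformal Killing fields of $q_0$, hence — by (1) — of $q$.

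For (3), I would substitute $\mbo{\pi}^{ab}=e^{-2\mbo{\nu}}\textnormal{CK}(Y,q)^{ab}+\tfrac12\mbo{\tau}\bar\mu_q q^{ab}$ into \eqref{constraints}, replacing $\textnormal{CK}(Y,q)$ by its flat form through (1) and using $\textnormal{CK}(Y,q_0)^{ab}=q_0^{bc}\varrho^a_c$ with $\varrho$ as in \eqref{varrho-def}. Trace-free-ness gives $\Vert\mbo{\pi}\Vert_q^2-\mathrm{Tr}_q(\mbo{\pi})^2=\Vert\varrho\Vert_{q_0}^2-\tfrac12\mbo{\tau}^2\bar\mu_q^2$ and $\mbo{\pi}^b_a=\varrho^b_a+\tfrac12\mbo{\tau}\bar\mu_q\delta^b_a$; the curvature term is handled by the two-dimensional law $R_q=e^{-2\mbo{\nu}}(R_{q_0}-2\Delta_{q_0}\mbo{\nu})$ with $R_{q_0}=0$, so $-\bar\mu_q R_q=2\bar\mu_{q_0}\Delta_0\mbo{\nu}$; and the momentum constraint needs the conformal invariance of the divergence of a trace-free symmetric density in two dimensions, $\leftexp{(q)}{\grad}_b\varrho^b_a=\leftexp{(q_0)}{\grad}_b\varrho^b_a$, which follows from the conformal change of the Christoffel symbols, the $\mbo{\nu}$-terms cancelling because $\varrho$ is trace-free. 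Collecting everything with $\bar\mu_q=e^{2\mbo{\nu}}\bar\mu_{q_0}$ and $q^{ab}=e^{-2\mbo{\nu}}q_0^{ab}$ turns $H=0$ into the semilinear Poisson equation for $\mbo{\nu}$ and $H_a=0$ into the vector Poisson equation for $Y$ displayed in the statement, since $\leftexp{(q_0)}{\grad}_b\varrho^b_a$ is, up to the density weight, the flat Laplacian of $Y$. The step I expect to be the main obstacle is the solvability in item (2) — surjectivity of the conformal Killing operator with no transverse–traceless remainder, together with control of the boundary terms at $\Gamma$ — whereas items (1) and (3) are essentially careful bookkeeping with conformal transformation laws.
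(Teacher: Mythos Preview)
Your proposal is correct. Parts (1) and (3) coincide with the paper's argument: the paper dismisses (1) as ``definitions and direct computations'' and handles (3) by exactly the substitutions you list, recording $R_q=-2e^{-2\mbo{\nu}}\Delta_0\mbo{\nu}$, $\Vert\mbo{\pi}\Vert_q^2=\tfrac12\mbo{\tau}^2e^{4\mbo{\nu}}\bar\mu_{q_0}^2+\Vert\varrho\Vert_{q_0}^2$, and the conformal invariance of the divergence acting on mixed trace-free densities.

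For (2) the routes differ in technique though not in substance. The paper first writes the York decomposition $\mbo{\pi}^{ab}=\tfrac12\mbo{\tau}\bar\mu_q q^{ab}+(\mbo{\pi}_{\mathrm{TT}})^{ab}+e^{-2\mbo{\nu}}\textnormal{CK}(Y,q)^{ab}$, obtains $Y$ from the \emph{second-order} elliptic equation $\leftexp{(q)}{\grad}_a\big(e^{-2\mbo{\nu}}\textnormal{CK}(Y,q)\big)=\leftexp{(q)}{\grad}_a\big(\mbo{\pi}-\tfrac12\mbo{\tau}\bar\mu_q q\big)$ via Fredholm theory (the right-hand side being $L^2$-orthogonal to the conformal Killing fields, which span the kernel), and then disposes of the residual $\mbo{\pi}_{\mathrm{TT}}$ by noting that the TT condition is conformally invariant and that TT tensors vanish on the flat $q_0$ with suitable boundary conditions. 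You instead solve the \emph{first-order} equation $\textnormal{CK}(Y,q_0)=e^{2\mbo{\nu}}S$ directly through its $\bar\partial$ interpretation on the simply connected orbit space, which packages the TT-vanishing and the existence of $Y$ into a single step. The paper's version makes the Fredholm alternative and the role of the boundary conditions at $\Gamma$ explicit; yours exposes the two-dimensional mechanism (one complex component, holomorphic obstruction space) more transparently. Both identify the non-uniqueness as exactly the conformal Killing fields.
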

\begin{proof}
Part (1) follows from the definitions and direct computations. Part (2) is based on the fact that the transverse-traceless tensors vanish for our form of the 2-metric.  Consider the decomposition of $\mbo{\pi}^{ab}$ into a trace part and a traceless part:
\begin{align} \label{split-pi}
\mbo{\pi}^{ab} =& \halb \tau \bar{\mu}_q q^{ab} + \not \text{Tr}\,\mbo{\pi}^{ab} \notag\\
=& \halb \mbo{\tau} \bar{\mu}_q q^{ab} + (\mbo{\pi}_{\text{TT}})^{ab} + e^{-2\nu} \textnormal{CK} (Y, q) 
\end{align}

\noindent where $(\mbo{\pi}_{\text{TT}})^{ab}$ is such that 

\begin{align} \label{tt-cond}
q_{ab} (\mbo{\pi}_{\text{TT}})^{ab} =0 \quad \text{and} \quad \leftexp{(q)}{\grad}_a (\mbo{\pi}_{\text{TT}})^{ab} =0.
\end{align}

The result (2) now follows from the fact that \eqref{tt-cond} is invariant under the conformal transformation \eqref{conformal} and the fact that transverse-traceless tensors vanish on the flat metric $q_0,$ with suitable boundary conditions. The existence of $Y$ follows from the following elliptic equation 
\begin{align}
\leftexp{(q)}{\grad}_a (e^{-2 \mbo{\nu}} \text{CK} (Y, q)) = \grad_a (\mbo{\pi}^{ab} -\halb \mbo{\tau} \bar{\mu}_q q^{ab}).
\end{align}
 and Fredholm theory. It may be noted that the the right hand side is $L^2-$orthogonal to the kernel of the linear, self-adjoint elliptic operator on the left hand side, which contains the conformal Killing vector fields of $q_0.$  
\begin{align} \label{conf-identity}
&\leftexp{(q)}{\grad}_b \left(e^{-2\mbo{\nu}} \bar{\mu}_q ( Y_a + \leftexp{(q)}{\grad}_a Y^b - \delta^b_a \leftexp{(q)}{\grad}_c Y^c) \right) \notag\\
&= \leftexp{(q_0)}{\grad}_b \left( \bar{\mu}_{q_0}( \leftexp{(q_0)}{\grad}_b Y^a) + \leftexp{(q_0)}{\grad}_a Y^b - \delta^b_a \leftexp{(q_0)}{\grad}_c Y^c ) \right).
\end{align}
It would now be convenient to define $\varrho$ as in \eqref{varrho-def}. Now then, using 
\begin{align}
\mbo{\pi}^a_b = \halb \mbo{\tau} e^{2 \mbo{\nu}} \bar{\mu}_{q_0} \delta^a_c + e^{-2 \mbo{\nu}} \varrho^a_c,
\end{align}
and \eqref{conf-identity}, the momentum constraint can now be transformed into the following elliptic operator for $Y$ on $(\Sigma, q_0)$
\begin{align}
H_a = -2\leftexp{(q_0)}{\grad}_b \varrho^b_a  - \ptl_a \mbo{\tau} e^{2 \mbo{\nu}} \bar{\mu}_{q_0} +  p_A \ptl_a U^A, \quad (\Sigma, q_0), \quad a= 1, 2.
\end{align}
The scalar curvature $R_q$ of $(\Sigma, q)$ and $\Vert \mbo{\pi} \Vert^2_q$ can be expressed explicitly as 
\begin{align}
R_q = -2 e^{-2 \mbo{\nu}} \Delta_0 \mbo{\nu}, \quad \Vert \mbo{\pi} \Vert^2_q = \halb \mbo{\tau}^2 e^{4 \nu} \bar{\mu}^2_{q_0} + \Vert  \varrho \Vert^2_{q_0}.
\end{align}
The Hamiltonian constraint can now be transformed to the elliptic operator 
\begin{align}
H =& \bar{\mu}^{-1}_{q_0} (e^{-2 \mbo{\nu}} \Vert \varrho \Vert^2_{q_0} - \halb \mbo{\tau}^2 e^{2 \mbo{\nu}} \bar{\mu}^2_{q_0} + \halb  p_A p^A) \notag\\
&+ \bar{\mu}_{q_0} (2  \Delta_0 \mbo{\nu} + \halb h_{AB} q_0^{ab} \ptl_a U^A \ptl_b U^B), \quad (\Sigma, q_0), 
\intertext{where}
\Delta_0 \mbo{\nu} \fdg=& \frac{1}{\bar{\mu}_{q_0}} \ptl_b(q^{ab}_0 \bar{\mu}_{q_0} \ptl_b \mbo{\nu}).
\end{align}
\end{proof}
The conditions for the dimensional reduction above are modeled along the Kerr metric \eqref{BL-Kerr}. 
Let us now consider the corresponding field equations for the Kerr metric. It follows that, for the Kerr wave map 
\begin{align}
 U \fdg (M, g) \to (N, h)
 \end{align}
 we have $p_1= p_2 \equiv 0;$  and $ \mbo{\pi}^{ab}  \equiv 0.$ As a consequence, the dimensionally reduced field equations for the Kerr metric \eqref{BL-Kerr} are
 \begin{align}
 \ptl_a ( N \bar{\mu}_q q^{ab} U^A) + N \bar{\mu}_q \leftexp{(h)}{\Gamma}^{A}_{BC} q^{ab} \ptl_a U^B \ptl_b U^C=&0 \label{Kerr-p-dot}
 \intertext{and}
 \bar{\mu}_q \left ( \leftexp{(q)}{\grad}^b \, \leftexp{(q)}{\grad}^a N - q^{ab} \, \leftexp{(q)}{\grad}_c \leftexp{(q)}{\grad}^c N \right) &\notag\\
 + \halb N \bar{\mu}_q (q^{ac} q^{bd} - \halb q^{ab} q^{cd}) h_{AB} \ptl_a U^A \ptl_b U^B=&0.
 \end{align}
\noindent The Hamiltonian constraint
\begin{align}
H = \bar{\mu}_q (-R_q  + \halb h_{AB} q^{ab} \ptl_a U^A \ptl_b U^B) =0
\end{align}
for $a, b$ and $A, B, C = 1, 2.$
The scalar $\mbo{\tau}$ is the mean curvature of the embedding $\Sigma \hookrightarrow M,$ whose evolution is governed by the equation:
\begin{align}\label{meancurv}
\ptl_t \mbo{\tau} = - \leftexp{(q)}{\grad}_a \leftexp{(q)}{\grad}^a N + N \bar{\mu}^{-1}_q ( \Vert \mbo{\pi} \Vert_q^2 + \halb p_A p^A).
\end{align}
Following the notation introduced in \cite{kaluz1}, the evolution equation \eqref{meancurv} can be represented as 
\begin{align}
e^{2\mbo{\nu}} \ptl_t \mbo{\tau} = - \Delta_0 N + N \mathfrak{q},
\end{align}
where 
\begin{align}
\mathfrak{q} \fdg= \bar{\mu}^{-1}_q e^{-2\mbo{\nu}} \left( \Vert \varrho \Vert^2_q + \halb \mbo{\tau}^2 e^{4\mbo{\nu}} \bar{\mu}_q +  \halb p_A p^A \right)
\end{align}
where we again used the splitting expression \eqref{split-pi}. It follows that, for the Kerr metric \eqref{BL-Kerr}, 
\begin{align}
\mbo{\tau} = \ptl_t \mbo{\tau} \equiv 0, \quad \varrho \equiv 0 \quad \text{and} \quad \Delta_q N =0. \label{kerr-maximal}
\end{align} 
The equation \eqref{qdot} can be decomposed as 
\begin{align}
\ptl_t (\bar{\mu}_q) =&  N \text{Tr}_q (\mbo{\pi}) - \halb \bar{\mu}_q q_{ab}\left(\leftexp{(q)}{\grad}^a N^b + \leftexp{(q)}{\grad}^b N^a \right)
\intertext{and the evolution of the densitized inverse metric}
\ptl_t (\bar{\mu}_q q^{ab}) =& 2 N (\mbo{\pi}^{ab} - \halb q^{ab} \text{Tr}_q(\mbo{\pi})) + \bar{\mu}_q \big( \leftexp{(q)}{\grad}^a N^b + \leftexp{(q)}{\grad}^b N^a - q^{ab} \leftexp{(q)}{\grad}_c N^c \big).
\end{align}

\section{A Hamiltonian Formalism for Axially Symmetric Metric Perturbations}
In this section we shall calculate the field equations and the Lagrangian and Hamiltonian variational principles for linear perturbation equations of the 2+1 Einstein-wave map system. Consider a smooth curve 
\begin{align}
\mbo{\gamma}_s \fdg [0, 1] \to C_{\text{EWM}}
\end{align}
parametrized by $s$ in the tangent bundle of configuration space $C_{\text{EWM}}$ of the Einstein-wave map system. Like previously, we shall start with the wave map system. Let $U_s \fdg (M, g) \to (N, h)$ be  a 1-parameter family of maps generated by the flow along $\mbo{\gamma}_s,$ such that 
\begin{subequations}
\begin{align}
U_0 \equiv& \, U \\
U_s \equiv& \, U, \quad \textnormal{outside a compact set $\Omega \subset M$} 
\end{align}
\end{subequations}
and $ U' \fdg = D_{\mbo{\gamma}_s} \cdot U_s \big\vert_{s=0},$ where $U \fdg (M, g) \to (N, h)$ is a given (e.g., Kerr) wave map. The deformations along $\mbo{\gamma}_s$ can be manifested, for instance, by the exponential map $\text{Exp}(s U)$. In the following, with a slight abuse of notation, we shall denote the manifestations of the deformations along $\mbo{\gamma}_s$ for the wave map $U \fdg (M,g) \to (N, h )$, by $\mbo{\gamma}_s$ itself.   Let us now denote the deformations along $\mbo{\gamma}_s$ of a point at $s=0$ in the tangent bundle of the wave map configuration space $C_{\text{WM}}$ as follows 
\begin{align}
C'_{ \text{WM}} \fdg = \Big\{
U^{'A} = D_{\mbo{\gamma}_s} \cdot U_s^{A} \Big \vert_{s=0}, \quad \dot{U}^{'A} =  D_{\mbo{\gamma}_s} \cdot \dot{U}_s^{A} (s) \Big \vert_{s=0} \Big\}.
\end{align}

 \noindent Now consider the Lagrangian action of wave map
\begin{align} \label{wm-action-again}
L_{\text{WM}} (C_{\text{WM}})  =  -\halb \int (g^{\mu \nu}h_{AB} \ptl_\mu U^A \ptl_\nu U^B ) \bar{\mu}_g.
\end{align}
For simplicity, we shall denote $L_{\text{WM}}(\gamma(s))$ as $L_{\text{WM}}(s).$ We have, 

\begin{align}\label{1-var-wm}
D_{\mbo{\gamma}_s} \cdot L_{\text{WM}}(s) =   \int h_{AB} (\square_g U^A + \leftexp{(h)}{\Gamma}^A_{BC} g^{\mu \nu} \ptl_\mu U^B \ptl_\nu U^C) U^{'B} \bar{\mu}_g 
\end{align}
where we have used the identity, 
\begin{align} \label{christof-intro}
& g^{\mu \nu} h_{AB}(U) \ptl_\mu U^A \ptl_\nu U^{'B} + \halb g^{\mu \nu}\ptl_C h_{AB} \ptl_\mu U^A \ptl_\nu U^B U^{'C} \notag\\
& = - h_{AB} U^{'B} (\square_g U^A + \leftexp{(h)}{\Gamma}^A_{BC} g^{\mu \nu} \ptl_\mu U^B \ptl_\nu U^C )
\end{align}
modulo boundary terms (see e.g., pp. 19-20 in \cite{diss_13}). The following geometric construction shall be useful to represent our formulas compactly \cite{Misn_78}. Firstly, let us define the notions of induced tangent bundle and the associated `total' covariant derivative on the target $(\mathbb{N}, h),$ under the wave mapping $U \fdg M \to N.$ The induced tangent bundle $T_U N$ on $M$ consists of the 2-tuple $(x, y),$ where $x \in M$ and $y \in T_{U(x)} N,$ with the bundle projection 

\begin{align}
P \fdg T_U N &\to M \notag\\
(x, y) &\to x.
\end{align}
Consider the vector field $\dot{V}_s \in TM,$ then the image of $\dot{V}_s$ under the wave map $U$ is a vector field $\dot{V}_s^A = \ptl_s U^A$ in a local coordinate system of $(N, h).$ As a consequence, we can define a covariant derivative on the induced bundle:
 
\begin{align}
\leftexp{(h)}{\grad}_\mu \dot{V}^A_s \fdg= \ptl_\mu \dot{V}_s^A +   \leftexp{(h)}{\Gamma}^A_{BC} \dot{V}_s^B \ptl_\mu U^C.
\end{align} 
It may be verified explicitly that the induced connection is metric compatible $\leftexp{(h)}{\grad}_A h^{AB} \equiv 0.$ Likewise, for a `mixed' tensor $$\Lambda \fdg= \Lambda^A_\mu \, \ptl_{x^A} \otimes  dx^\mu,$$

\begin{align}
\leftexp{(h)}{\grad}_\nu \Lambda_\mu^A \fdg= \leftexp{(g)}{\grad}_\nu \Lambda_\mu^A + \leftexp{(h)}{\Gamma}^A_{BC} \Lambda^B_\mu \ptl_\nu U^C.
\end{align} 
In particular, for $\mbo{e}_B \in TN,$ the second covariant derivative, 
\begin{align}
\leftexp{(h)}{\grad}_ \mu \leftexp{(h)}{\grad}_\nu \mbo{e}_B = \ptl_\mu ( \Gamma^A_{\nu B} \mbo{e}_A) - \leftexp{(g)}{\Gamma}^\a_{\mu \nu} \leftexp{(h)}{\Gamma}^A_{\a B} \mbo{e}_A + \leftexp{(h)}{\Gamma}^A_{\mu B} \leftexp{(h)}{\Gamma}^C_{\nu A} \mbo{e}_C
\end{align} 
provides the curvature for the induced connection: 
\begin{align}
\big[ \grad_\mu, \grad_\nu \big] \mbo{e}_B = R^A_{\,\,\,\, B \mu \nu} \, \mbo{e}_A
\end{align}

\noindent Now consider the `mixed' second covariant derivatives 
\[ \leftexp{(h)}{\grad}_\mu \leftexp{(h)}{\grad}_A \mbo{e}_B \quad \text{and} \quad \leftexp{(h)}{\grad}_A \leftexp{(h)}{\grad}_\mu  \mbo{e}_B. \] In view of the fact that $\mbo{e}_B$ and $\leftexp{(h)}{\grad}_A \mbo{e}_B $ do not have components in the tangent bundle of the domain $M,$ the quantities \[ \ptl_\mu U^C\leftexp{(h)}{\grad}_C \leftexp{(h)}{\grad}_A \mbo{e}_B \quad \text{and} \quad \leftexp{(h)}{\grad}_A ( \ptl_\mu U^C \leftexp{(h)}{\grad}_C) \mbo{e}_B \] are equivalent to \[ \leftexp{(h)}{\grad}_\mu \leftexp{(h)}{\grad}_A \mbo{e}_B \quad \text{and} \quad \leftexp{(h)}{\grad}_A \leftexp{(h)}{\grad}_\mu \mbo{e}_B \quad \text{respectively}.\] 
We have,
\begin{subequations} \label{mixed-curv}
\begin{align}
& U'^A \, \leftexp{(h)}{\grad}_A \leftexp{(h)}{\grad}_\mu \mbo{e}_B = U'^A \, \leftexp{(h)}{\grad}_A \left(\ptl_\mu U^C \, \leftexp{(h)}{\grad}_C \mbo{e}_B \right)  \notag\\
& \quad = U'^A \ptl_\mu U^C (\ptl_A \leftexp{(h)}{\Gamma}^D_{CB} + \leftexp{(h)}{\Gamma}^D_{AE} \leftexp{(h)}{\Gamma}^E_{CB} ) +
 U'^A \leftexp{(h)}{\grad}_A \ptl_\mu U^C \leftexp{(h)}{\grad}_C \mbo{e}_B 
 \intertext{likewise}
 & \ptl_\mu U^A \, \leftexp{(h)}{\grad}_A (U'^C\leftexp{(h)}{\grad}_C \mbo{e}_B) \notag\\
 &= \, \ptl_\mu U^A U'^C \leftexp{(h)}{\grad}_A \leftexp{(h)}{\grad}_C  \mbo{e}_B + \ptl_\mu U^A \leftexp{(h)}{\grad}_A U'^C \leftexp{(h)}{\grad}_C \mbo{e}_B \notag\\
 & = \, \ptl_\mu U^A U'^C (\ptl_C \leftexp{(h)}{\Gamma}^D_{AB} + \leftexp{(h)}{\Gamma}^D_{CE} \leftexp{(h)}{\Gamma}^E_{AB} ) 
 +  \ptl_\mu U^A \leftexp{(h)}{\grad}_A U'^C \leftexp{(h)}{\grad}_C \mbo{e}_B 
\end{align}
\end{subequations}
so that we have 
\begin{align}
U'^A \, \leftexp{(h)}{\grad}_A \leftexp{(h)}{\grad}_\mu \mbo{e}_B - \ptl_\mu U^A \, \leftexp{(h)}{\grad}_A (U'^C\leftexp{(h)}{\grad}_C \mbo{e}_B) = \leftexp{(h)}{R}^D_{\,\,\,\, B A \mu} \mbo{e}_D U'^A.
\end{align}

\noindent This `mixed' derivative construction is relevant for our wave map deformations. 
\noindent  Let us assume that

\begin{subequations} \label{s-t-coordinates}
 \begin{align}
 \big[ \ptl_\b U, U' \big] \equiv&\, 0
 \intertext{from which, it follows that} 
 \ptl_\b U^A \,\, \leftexp{(h)}{\grad}_A U'^B -  U'^A\,\,\leftexp{(h)}{\grad}_A \ptl_\b U^B \equiv&\, 0.
 \end{align}
 \end{subequations}
Now consider another analogous curve $\mbo{\gamma}_\lambda.$  The quantity $D^2_{\mbo{\gamma}_\lambda \mbo{\gamma}_s} \cdot L_{\text{WM}} $
involves the following terms

\begin{align}\label{2-var-start}
\square_g U'^{A} + \ptl_{U^D} \leftexp{(h)}{\Gamma}^A_{B C} g^{\mu \nu} \ptl_\mu U^B \ptl_\nu U^C U'^D + 2 \leftexp{(h)}{\Gamma}^A_{BC} g^{\mu \nu} \ptl_\mu U'^B \ptl_\nu U^C.
\end{align}
Assuming that the Kerr wave map is a critical point of \eqref{1-var-wm} at $s=0,$ the expression \eqref{wm-action-again} can consecutively be transformed as follows: 
\begin{align}
=& \, \square_g U'^{A} + \ptl_{U^D} \leftexp{(h)}{\Gamma}^A_{B C} g^{\mu \nu} \ptl_\mu U^B \ptl_\nu U^C U'^D + 2 \leftexp{(h)}{\Gamma}^A_{BC} g^{\mu \nu} \ptl_\mu U'^B \ptl_\nu U^C \notag\\
&\quad + \leftexp{(h)}{\Gamma}^{A}_{BC} U'^B (\square_g U^C + \leftexp{(h)}{\Gamma}^{C}_{DE} g^{\a\b} \ptl_\a U^D \ptl_\b U^E ) \notag\\
\end{align}
which can be transformed to 
\begin{align} \label{curv-term1}
g^{\mu \nu} U'^C \, \leftexp{(h)}{\grad}_C \leftexp{(h)}{\grad}_\mu \ptl_\nu U^A =&  g^{\mu \nu} U'^C \Big( \ptl_C  (\leftexp{(h)}{\grad}_\mu \ptl_\nu U^A) - \leftexp{(h)}{\Gamma}^D_{C \mu} \leftexp{(h)}{\grad}_D \ptl_\nu U^A  \notag\\
& + \leftexp{(h)}{\Gamma}^{A}_{CD} \leftexp{(h)}{\grad}_\mu \ptl_\nu U^A \Big)
\end{align}
Now consider the operator, 
\begin{align} \label{curv-term2}
g^{\mu\nu}\leftexp{(h)}{\grad}_\mu (\leftexp{(h)}{\grad}_C \ptl_\nu U^A) =& g^{\mu \nu} \leftexp{(g)}{\grad}_\mu (\leftexp{(h)}{\grad}_C \ptl_\nu U^A) \notag\\
&- g^{\mu \nu}  \left( \leftexp{(h)}{\Gamma}^D_{\mu C} \grad_D \ptl_\nu U^A + \leftexp{h}{\Gamma}^A_{\mu D} \leftexp{(h)}{\grad}_C \ptl_\nu U^D \right)
\end{align}
and performing the computations analogous to \eqref{mixed-curv}, we get that \eqref{2-var-start} is equivalent to 
\begin{align}
\leftexp{(h)}{\square}\, U'^A + \leftexp{(h)}{R}^A_{\,\,\,\,BCD} g^{\mu \nu} \ptl_\mu U^B \ptl_\nu U^D U'^C
\end{align}

where 
\begin{align}
\leftexp{(h)}{\square} \, U'^A \fdg =& g^{\mu \nu} \leftexp{(h)}{\grad}_\mu \leftexp{(h)}{\grad}_\nu U'^A  \notag\\
=&  \ptl_\mu ( \leftexp{(h)}{\grad}_\nu U'^A) - \leftexp{(g)}{\Gamma}^{\gamma}_{\mu \nu} \leftexp{(h)}{\grad}_\gamma U'^A + \leftexp{(h)}{\Gamma}^A_{\mu C} (\leftexp{(h)}{\grad}_\nu U'^C),
\intertext{which can be represented in terms of the covariant wave operator $(g^{\mu \nu}\leftexp{(g)}{\grad}_{\mu} \ptl_\nu U'^A )$ in the domain metric $g$ as}
=& \square_g U'^A +  g^{\mu \nu} \big(\ptl_\mu (\leftexp{(h)}{\Gamma}^A_{\nu C} U'^C) - \leftexp{(g)}{\Gamma}^\gamma_{\mu \nu} \leftexp{(h)}{\Gamma}^A_{\gamma C} U'^C + \leftexp{(h)}{\Gamma}^A_{\mu C} \ptl_\nu U'^C  \notag\\
&+ \leftexp{(h)}{\Gamma}^A_{\mu C} \leftexp{(h)}{\Gamma}^C_{\nu D} U'^D \big)
\end{align}

\begin{align}
\intertext{and $\leftexp{(h)}{R}$ is the induced Riemannian curvature tensor}
 \leftexp{(h)}{R}^A_{\,\,\,\,BCD} =& \ptl_C\leftexp{(h)}{\Gamma}^A_{DB} - \ptl_D\leftexp{(h)}{\Gamma}^A_{CB} + \leftexp{(h)}{\Gamma}^A_{CE} \leftexp{(h)}{\Gamma}^E_{DB} - \leftexp{(h)}{\Gamma}^A_{DE}\leftexp{(h)}{\Gamma}^E_{CB}
\end{align}

\noindent Now for the Kerr wave  map critical point of $D_{\mbo{\gamma}_s} \cdot L_{\text{WM}}$ at $s=0,$ we then have, 
\begin{align} \label{div-var}
D^2_{\mbo{\gamma}_\lambda \mbo{\gamma}_s} \cdot L_{\text{WM}} (s=0) = \int h_{AB} U'^B ( \leftexp{(h)}{\square} U'^A + \leftexp{(h)}{R}^A_{\,\,\,\,BCD} g^{\mu \nu} \ptl_\mu U^B \ptl_\nu U^D U'^C) \bar{\mu}_g
\end{align}
as the Lagrangian variational principle for small linear deformations of the wave map $U_s \fdg (M, g) \to (N, h).$ 
In view of the divergence identity,
\begin{align}
\leftexp{(h)}{\grad}_\mu (h_{AB} U'^B \, \leftexp{(h)}{\grad}^\mu U'^A)= h_{AB} \leftexp{(h)}{\grad}^\mu U'^A \leftexp{(h)}{\grad}^{\mu} U'^B + 
U'^B \leftexp{(h)}{\grad}_{\mu} ( h_{AB} \leftexp{(h)}{\grad}^\mu U'^A)
\end{align}
the variational principle \eqref{div-var} can equivalently be transformed into a self-adjoint variational form: 

\begin{align}\label{quad-var}
& D^2_{\mbo{\gamma}_\lambda \mbo{\gamma}_s} \cdot L_{\text{WM}} (s=0) \notag\\
&= - \halb  \int \big( g^{\mu \nu}\, h_{AB}\leftexp{(h)}{\grad}_\mu U'^A \leftexp{(h)}{\grad}_\mu U'^B - h_{AB} U'^B \, \leftexp{(h)}{R}^A_{BCD} g^{\mu \nu} \ptl_\mu U^B \ptl_\nu U^C U'^D \big) \bar{\mu}_g. 
\end{align}
Let us now calculate the Hamiltonian field equations for the linear perturbation theory, using the ADM decomposition of the background $(M, g)$
\begin{align}
g = -N^2 dt^2 + q_{ij} (dx^i + N^i dt) \otimes (dx^j + N^j dt)
\end{align}
Let us denote the variational principle \eqref{div-var} and \eqref{quad-var} by $L_{\text{WM}}(U').$ The Legendre transformation on $C'_{\text{WM}}$ results in the phase space 
\begin{align}
X'_{\text{WM}} \fdg = \{ (U'^{A}, p'_A) \}, \quad \text{where $(U'^A, p'_A)$ are canonical pairs}
\end{align}
the conjugate momenta $p'_A =  D_{\mbo{\gamma}} \cdot ( p_{A} (s)) \big \vert_{s=0}$  are given by 
\begin{align} \label{u-prime-dot-2}
p'_A=& \frac{1}{N} \bar{\mu}_q h_{AB}(U) (\ptl_t U'^B + \leftexp{(h)}{\Gamma}^B_{t C} U'^C) - \frac{\bar{\mu}_q}{N} h_{AB}(U) \mathcal{L}_N U'^B \notag\\
&- \frac{\bar{\mu}_q}{N} h_{AB}(U) N^a \leftexp{(h)}{\Gamma}^B_{a C} 
\end{align} 
on account of the fact that the time derivative terms in the second term of \eqref{quad-var} only occur for background wave map $U.$ Now then, using the quantity
\begin{align}
h_{AB} \ptl_t U'^B = \bar{\mu}^{-1}_q N p'_A - h_{AB}(U)\leftexp{(h)}{\Gamma}^B_{tC} U'^C + h_{AB} \mathcal{L}_N U'^B + h_{AB} N^a \leftexp{(h)}{\Gamma}^B_{a C} U'^C,
\end{align}
the Lagrangian and Hamiltonian densities, $\mathcal{L}'_{\text{WM}}$ and $\mathcal{H}'_{\text{WM}},$ can be expressed in terms of the phase space variables $X'_{\text{WM}} = \{ (U'^A, p'_A) \},$ in a recognizable ADM form as follows 

\begin{subequations}
\begin{align}
\mathcal{L}'_{\text{WM}} (U') \fdg= 
&  \halb p'_A \ptl_t U'^A  - \halb p'_A \mathcal{L}_N U'^A + \left(\leftexp{(h)}{\Gamma}^A_{tC} U'^C - \leftexp{(h)}{\Gamma}^A_{a C} N^a U'^C \right)\halb p'_A \notag\\
&-\halb h_{AB}(U) N \bar{\mu}_q q^{ab} \left(\,\leftexp{(h)}{\grad}_a U^A \, \leftexp{(h)}{\grad}_b U^B \right) \notag\\
&+ N \bar{\mu}_q h_{AE}(U) U'^A   R^E_{\,\,\,\,BCD} q^{a b} \ptl_a U^B U'^C \ptl_b U^D \notag\\
& -\frac{1}{N} \bar{\mu}_q h_{AE}(U) U'^A   R^E_{\,\,\,\,BCD} \,  \mathcal{L}_N U^B U'^C \mathcal{L}_N U^D \notag\\
&- \frac{1}{N} \bar{\mu}_q h_{AE}(U) U'^A   R^E_{\,\,\,\,BCD}  \ptl_t U^B U'^C \ptl_t U^D \notag\\
& + \frac{2}{N}\bar{\mu}_q h_{AE}(U) U'^A   R^E_{\,\,\,\,BCD} q^{a b} \ptl_t U^B U'^C \mathcal{L}_N U^D
\intertext{likewise the Hamiltonian energy density can be expressed as,}
\mathcal{H}'_{\text{WM}} 
 \fdg=&  \halb p'_A \ptl_t U'^A  + \halb p'_A \mathcal{L}_N U'^A - \left(\leftexp{(h)}{\Gamma}^A_{tC} U'^C - \leftexp{(h)}{\Gamma}^A_{a C} N^a U'^C \right)\halb p'_A \notag\\
&+\halb h_{AB}(U) N \bar{\mu}_q q^{ab} \left(\,\leftexp{(h)}{\grad}_a U^A \, \leftexp{(h)}{\grad}_b U^B \right) \notag\\
& - N \bar{\mu}_q h_{AE}(U) U'^A   R^E_{\,\,\,\,BCD} q^{a b} \ptl_a U^B U'^C \ptl_b U^D \notag\\
& +\frac{1}{N} \bar{\mu}_q h_{AE}(U) U'^A   R^E_{\,\,\,\,BCD} \,  \mathcal{L}_N U^B U'^C \mathcal{L}_N U^D \notag\\
& + \frac{1}{N} \bar{\mu}_q h_{AE}(U) U'^A   R^E_{\,\,\,\,BCD}  \ptl_t U^B U'^C \ptl_t U^D \notag\\
& - \frac{2}{N}\bar{\mu}_q h_{AE}(U) U'^A   R^E_{\,\,\,\,BCD} q^{a b} \ptl_t U^B U'^C \mathcal{L}_N U^D
\end{align}
\end{subequations}
so that the critical point of $L'_{\text{WM}}$

\begin{align}
L'_{\text{WM}} = \int^{t_2}_{t_1} \int_{\Sigma} \mathcal{L}'_{\text{WM}} \,\, d^2x dt
\end{align}

\noindent with respect to $U'^A$ gives the field equation

\begin{align} \label{p-prime-dot-2}
\ptl_t p'_A =& \mathcal{L}_N p'_A + (\leftexp{(h)}{\Gamma} ^C_{tA} - \leftexp{(h)}{\Gamma}^C_{aA} N^a )p'_C + h_{AB} \leftexp{(h)}{\grad}_a ( N \bar{\mu}_q q^{ab} \grad_b  U'^B) \notag\\
& + N \bar{\mu}_q h_{AE}(U)  R^E_{\,\,\,\,BCD} q^{a b} \ptl_a U^B U'^C \ptl_b U^D \notag\\
& -\frac{1}{N} \bar{\mu}_q h_{AE}(U)   R^E_{\,\,\,\,BCD} \,  \mathcal{L}_N U^B U'^C \mathcal{L}_N U^D \notag\\
&- \frac{1}{N} \bar{\mu}_q h_{AE}(U)    R^E_{\,\,\,\,BCD}  \ptl_t U^B U'^C \ptl_t U^D \notag\\
& + \frac{2}{N}\bar{\mu}_q h_{AE}(U)   R^E_{\,\,\,\,BCD} q^{a b} \ptl_t U^B U'^C \mathcal{L}_N U^D.
\end{align}


\noindent Analogously, it is straightforward to note that the field equations \eqref{u-prime-dot-2} and \eqref{p-prime-dot-2} are generated by the Hamiltonian $H'_{\text{WM}} = \int \mathcal{H}'_{\text{WM}} d^2 x,$ i.e., 

\begin{align}
D_{p'^A} \cdot H'_{\text{WM}} =  \ptl_t U'^A, \quad D_{U'^A} \cdot H'_{\text{WM}} = - \ptl_t p'_A,
\end{align}
respectively. Specializing to our stationary Kerr background metric, we have 

\begin{subequations}
\begin{align}
 h_{AB}(U) \ptl_t U'^B=& \bar{\mu}^{-1}_q N p'_A, \\
\ptl_t p'_A =& h_{AB} (U) \leftexp{(h)}{\grad}_a ( N \bar{\mu}_q q^{ab} \grad_b  U'^B) \notag \\
& \quad \quad + N \bar{\mu}_q h_{AE}(U)  R^E_{\,\,\,\,BCD} q^{a b} \ptl_a U^B U'^C \ptl_b U^D
\end{align}
\end{subequations}

 Let us now construct the variational principle for the fully coupled Einstein-wave map perturbations. 
Now suppose,
\begin{align}
q'_{ab} =  D_{\mbo{\gamma}_s} \cdot (q_{ab} (s)) \Big \vert_{s=0}, \quad \mbo{\pi}'_{ab} =  D_{\mbo{\gamma}_s} \cdot (\mbo{\pi}_{ab} (s)) \Big \vert_{s=0},
\end{align}

\noindent let us then denote the phase space corresponding to the perturbative theory of Kerr metric as $X'_{\text{EWM}}:$ 
\begin{align}
X' \fdg = \big \{ (U'^{A}, p'_A), (q'_{ab}, \mbo{\pi}'_{ab}) \big \}.
\end{align} 

\noindent Using the gauge-condition that the densitized metric $\bar{\mu}^{-1}_q q_{ab}$ is fixed, we can construct $D_{\mbo{\gamma}_s} \cdot H $ and $D_{\mbo{\gamma}_s} \cdot H_a $ at $s=0$

\begin{align}
H' \fdg =& D_{\mbo{\gamma}_s} \cdot H (s=0) = - \bar{\mu}^{-1}_q q_{ab} \mbo{\pi}'^{ab}  - (\bar{\mu}_q R_q)' \notag\\
&+  \halb \bar{\mu}_q q^{ab} \ptl_{U^C} h_{AB} (U) \ptl_a U^A \ptl_b U^B U'^C 
\bar{\mu}_q q^{ab} h_{AB}(U)  \ptl_a U'^A \ptl_b U^B \label{H'}
\intertext{and}
H'_a \fdg=& D_{\mbo{\gamma}_s} \cdot H_a (s=0) = \leftexp{(q)}{\grad}_b \mbo{\pi}'^b_a + p'_A U^A,
\intertext{where}
(\bar{\mu}_q R_q)' =& \bar{\mu}_q \left( - \Delta_q q' + \leftexp{(q)}{\grad}^a \leftexp{(q)}{\grad}^b q'_{ab}  \right),\quad q' \fdg = \text{Tr}_q q'_{ab}.
\end{align}

\noindent Again, after imposing that the Kerr metric is a critical point at $s=0,$ we get

\begin{align}
& D^2_{\mbo{\gamma}_\lambda \mbo{\gamma}_s} \cdot H(s=0) \notag\\
&= \bar{\mu}^{-1}_q ( 2 \Vert \mbo{\pi}' \Vert_q^2 -  2(q_{ab} \mbo{\pi}'^{ab})^2 + p'_A p'^A )  
\notag\\ 
& \quad -  (\bar{\mu}_q R_q)'' + \halb \bar{\mu}_q q^{ab} \ptl^2_{U^D U^C} h_{AB}(U) \ptl_a U^A \ptl_b U^B U'^C U'^D \notag\\
& \quad + \bar{\mu}_q q^{ab} \ptl_{U^C} h_{AB}(U) \ptl_a U'^A \ptl_b U^B U'^C \notag\\
& \quad + \halb \bar{\mu}_q q^{ab} \ptl_{U^C} h_{AB}(U) \ptl_a U^A \ptl_b U^B U''^C \notag\\
& \quad + \bar{\mu}_q q^{ab} \ptl_{U^C} h_{AB} (U) \ptl_a U'^A \ptl_b U^B U'^C \notag\\
& \quad+ \bar{\mu}_q q^{ab} h_{AB}(U) \ptl_a U''^A \ptl_b U^B  
 + \bar{\mu}_q q^{ab}  h_{AB} (U) \ptl_a U'^A \ptl_b U'^B \label{H''}\\
&D^2_{\mbo{\gamma}_\lambda \mbo{\gamma}_s} \cdot H_a (s=0)  \notag\\
&= -4 \leftexp{(q')}{\grad}_b \mbo{\pi}'^b_a -2 \leftexp{(q)}{\grad}_b \mbo{\pi}''^b_a + 2p'_A \ptl_a U'^A + \ptl_a U^A p''_A \label{Ha''} \notag\\
\intertext{where}
\leftexp{(q')}{\grad}_b V^a \fdg =&\, \ptl_b V^a+ \halb q^{ad}  \left ( \leftexp{(q)}{\grad}_b q'_{dc} + \leftexp{(q)}{\grad}_c q'_{bd} -\leftexp{(q)}{\grad}_d q'_{bc}  \right) V^c. 
\end{align}

\noindent We arrive at the following theorem

\begin{corollary}\label{second-var}
Suppose $X'$ is the first variation phase space, then the field equations for the dynamics in $X'$ are given by the variational principle: 
\begin{align}
J_{\textnormal{EWM}} (X'_{\textnormal{EWM}}) \fdg = \int \left( \mbo{\pi}'^{ab} \ptl_t q'_{ab} + p'_A U^{'A} - \halb N H'' - N' H' - N'_a H'_a \right)
\end{align} 
where $H''$, $H'$ $H'_a$ are  \eqref{H''}, $D_{\mbo{\gamma}_s} \cdot H$ and $D \cdot H_a$  at $s=0$ respectively, $N' \fdg = D_{\mbo{\gamma}_s} \cdot N \big\vert_{s=0}$ and $ N'_a \fdg = D_{\mbo{\gamma}_s} \cdot N_a \big\vert_{s=0}.$

\end{corollary}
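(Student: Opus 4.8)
The plan is to realise $J_{\textnormal{EWM}}(X'_{\textnormal{EWM}})$ as one half of the \emph{second} variation of the nonlinear Hamiltonian variational principle \eqref{ham-var} along the curves $\mbo{\gamma}_s,\mbo{\gamma}_\lambda$ at the Kerr critical point, in the spirit of the Fischer--Marsden--Moncrief linearization-stability machinery \cite{FM_75,Mon_75,FMM_80}. Writing \eqref{ham-var} schematically as $\int(\mbo{\pi}^{ab}\ptl_t q_{ab} + p_A\ptl_t U^A - N H - N^a H_a)$, I would apply the second $s$-derivative $\ptl^2_s$ along $\mbo{\gamma}_s$ (and $\mbo{\gamma}_\lambda$) at $s=0$ term by term and divide by two. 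Since the Kerr data is a critical point, the contributions linear in the second-order fields $q''_{ab},\mbo{\pi}''^{ab},U''^A,p''_A$ occur only paired with the background evolution equations and hence reduce to total spatial divergences, while in the remaining terms one invokes the Kerr facts $\mbo{\pi}^{ab}\equiv 0$, $p_A\equiv 0$, $\ptl_t U^A\equiv 0$, $\ptl_t q_{ab}\equiv 0$, $N^a\equiv 0$, $H=H_a=0$ (stationarity of Kerr together with \eqref{kerr-maximal}, already used when specialising \eqref{u-prime-dot-2}--\eqref{p-prime-dot-2}): for instance $\ptl^2_s(-N H)$ at $s=0$ equals $-2N'H' - N H''$ because $H$ vanishes on the background, which after the overall factor $\halb$ produces the coefficients $-N'H' - \halb N H''$; likewise $\ptl^2_s(\mbo{\pi}^{ab}\ptl_t q_{ab})$ and $\ptl^2_s(p_A\ptl_t U^A)$ reduce at $s=0$ to $2\,\mbo{\pi}'^{ab}\ptl_t q'_{ab}$ and $2\,p'_A\ptl_t U'^A$ after discarding the pieces annihilated by $\ptl_t q_{ab}\equiv 0$, $\mbo{\pi}^{ab}\equiv 0$, $\ptl_t U^A\equiv 0$, $p_A\equiv 0$, and $\ptl^2_s(-N^a H_a)$ reduces to $-2N'_a H'_a$ (which is why no $\halb N^a H''_a$ term survives). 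Collecting everything yields precisely $J_{\textnormal{EWM}}(X'_{\textnormal{EWM}})$.

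As in the nonlinear case the computation separates into a wave-map part and a geometric part. The wave-map part is already in hand: the second variation of $L_{\textnormal{WM}}$ is the self-adjoint quadratic form \eqref{quad-var}, whose Legendre transform over $C'_{\textnormal{WM}}$ furnishes the canonical pair $(U'^A,p'_A)$, the kinetic term $p'_A\ptl_t U'^A$ and the Hamilton equations \eqref{u-prime-dot-2}--\eqref{p-prime-dot-2}; this fixes the $U'$-dependent part of $-\halb N H''$. For the geometric part I would compute the first and second variations of the ADM constraints to obtain $H',H'_a$ as in \eqref{H'} and $H'',H''_a$ as in \eqref{H''}; the only genuinely new ingredient is the second variation of the curvature density $(\bar{\mu}_q R_q)''$, obtained by differentiating the first-variation identity $(\bar{\mu}_q R_q)' = \bar{\mu}_q(-\Delta_q q' + \leftexp{(q)}{\grad}^a\leftexp{(q)}{\grad}^b q'_{ab})$ once more.

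It then remains to confirm directly that $J_{\textnormal{EWM}}(X'_{\textnormal{EWM}})$ is a variational principle for the linearised dynamics in $X' = \{(q'_{ab},\mbo{\pi}'^{ab}),(U'^A,p'_A)\}$: varying in $N'$ returns the linearised Hamiltonian constraint $H'=0$; varying in $N'_a$ returns the linearised momentum constraint $H'_a=0$; varying in $\mbo{\pi}'^{ab}$ and $q'_{ab}$ returns the linearisations of \eqref{qdot} and of the $\ptl_t\mbo{\pi}^{ab}$ evolution equation; and varying in $p'_A$ and $U'^A$ returns \eqref{u-prime-dot-2} and \eqref{p-prime-dot-2}. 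The main obstacle, and the bulk of the labour, lies in the variations of $-\halb N H''$ in $q'_{ab}$ and $U'^A$: one must verify that the second-order fields occurring inside $H''$ and $H''_a$ are inert for this variational principle, i.e.\ they enter the action only through additive terms independent of the first-order variables or through pure spatial divergences---for example $U''^C$ appears against $\bar{\mu}_q q^{ab}\ptl_{U^C}h_{AB}\ptl_a U^A\ptl_b U^B$, which after one integration by parts is the background wave-map equation \eqref{Kerr-p-dot}, while the $q''$-linear part of $(\bar{\mu}_q R_q)''$ pairs, after two integrations by parts against $N$, with the Kerr relation $\Delta_q N = 0$ of \eqref{kerr-maximal}. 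Tracking these cancellations, and above all controlling the boundary terms on the axis $\Gamma$ and at spatial infinity produced by the numerous integrations by parts, is the delicate point; this is exactly where the asymptotic and regularity analysis of \cite{GM17} is required, whereas the remaining steps amount to the routine ADM bookkeeping recorded in the displays above.
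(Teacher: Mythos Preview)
Your proposal is correct and follows essentially the same route as the paper. The paper does not give a separate proof of this corollary; it simply records the expressions for $H'$, $H'_a$, $H''$ in \eqref{H'}--\eqref{Ha''} and then states the variational principle, remarking immediately afterwards that ``the approach used above is the classical Jacobian method, as remarked by Moncrief \cite{Moncrief_74}.'' Your argument---take $\halb\,\ptl_s^2$ of the nonlinear ADM functional \eqref{ham-var} at the Kerr critical point, use $H=H_a=0$, $\mbo{\pi}^{ab}=p_A=0$, $N^a=0$, $\ptl_t q_{ab}=\ptl_t U^A=0$ to kill the unwanted cross terms, and then verify that the Euler--Lagrange equations of the resulting quadratic functional reproduce the linearised evolution and constraint equations---is exactly this Jacobian method spelled out in detail, and your identification of why no $\halb N^a H''_a$ term appears (namely $N^a\equiv 0$ for the reduced Kerr metric) is the right one.
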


The approach used above is the classical Jacobian method, as remarked by Moncrief \cite{Moncrief_74}. Separately, it may be noted that the construction of  the wave map field equations is analogous to that of the geodesic deviation equations or the `Jacobi' fields \cite{synge_34, Misn_78}. In view of the fact that the Hamiltonian formulation of the geodesic deviation equations is relatively uncommon, our derivation may also be adapted for this purpose. Finally, we would like to emphasize that our assumption that \eqref{s-t-coordinates} holds, is not (effectively) a restriction in the class of perturbations. In case this condition is relaxed, we shall also pick up the Riemann curvature of the target, but with torsion. The fact that we pick only the curvature term of the target is crucial for our work. We would also like to remark that the deformations which correspond to the coordinate directional derivatives along the curves $\mbo{\gamma}_\lambda$ are equivalent to (induced) covariant deformations on the target, on account of the fact the Kerr wave map is a critical point of \eqref{wm-action-again}.   

The variational principle in Corollary \ref{second-var} and its field equations correspond to a general Weyl-Papapetrou gauge. If we consider further gauge-fixing \eqref{conformal}, where the densitized metric $ \bar{\mu}^{-1}_q q_{ab}$ or equivalently the densitized inverse metric $\bar{\mu}_q q^{ab}$ is fixed, we obtain 

\begin{align}
H' =&   \bar{\mu}_{q_0} (2 \Delta_0 \mbo{\nu}') + \halb \bar{\mu}_{q_0} \ptl_{U^C} h_{AB} q_0^{ab} \ptl_a U^A \ptl_b U^B U'^C \\
&+ \bar{\mu}_{q_0} q_0^{ab} h_{AB} \ptl_a U'^A \ptl_b U^B    \notag\\
H'' =&\bar{\mu}^{-1}_{q_0} ( 2 e^{-2\mbo{\nu}} \Vert \varrho' \Vert^2_{q_0} - \tau'^2 e^{2\nu} \bar{\mu}^2_{q_0} + p'_A p'^A)
+ \bar{\mu}_{q_0} \big( 2 \Delta_0 \mbo{\nu}'' \notag \\
&+ \ptl^2_{U^C U^D} h_{AB} q^{ab}_0 \ptl_a U^A \ptl_b U^B U'^C U'^D + \ptl_C h_{AB} q^{ab}_0 \ptl_a U'^A \ptl_b U^B U'^C \notag\\
&+ \halb \ptl_Ch_{AB} q_{0}^{ab} \ptl_a U^A \ptl_b U^B U''^C + 2 h_{AB} q^{ab}_{0} \ptl_a U''^A \ptl_b U^B \notag\\
&+ 2h_{AB} q^{ab}_0 \ptl_a U'^A \ptl_bU'^B + 2 \ptl_C h_{ab} q^{ab}_0 \ptl_a U'^A \ptl_b U^B U'^C \big) .
\end{align}
The aim of our work is to construct an energy for the linear perturbative theory of Kerr black hole spacetimes, for which the Hamiltonian formulation is naturally suited. In contrast with the Lagrangian variational principles (e.g., \eqref{EH} and \eqref{EWM-Lag}), the Hamiltonian variation principles are not spacetime diffeomorphism invariant. 
In this work, we shall work in the $2+1$ maximal gauge condition. We point out that this gauge condition was also used by Dain-de Austria for the extremal case \cite{DA_14}. We shall need the following statement.

\begin{claim} \label{2+1max}
Suppose $N' \in C^{\infty} (\Sigma),$
\begin{subequations}
\begin{align}
\Delta_0 N' =& \, 0, \quad \textnormal{in the interior of} \quad (\Sigma, q_0), \label{Lap-N}\\
N' \big|_{\ptl \Sigma}=& \,0, \label{Lap-bdry} 
\end{align}
\end{subequations}
then $N' \equiv 0$ on $(\Sigma, q_0).$
\end{claim}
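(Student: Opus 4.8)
The plan is to recognize \eqref{Lap-N}--\eqref{Lap-bdry} as the uniqueness statement for the Dirichlet problem of the two-dimensional flat Laplacian on a planar domain, and to dispose of it by reflection together with Liouville's theorem (equivalently, the maximum principle). First I would fix the geometric picture. In the Weyl--Papapetrou coordinates $(\bar{\r},\bar z)$ used above, the quotient surface $\Sigma$ is identified with the closed half-plane $\{\bar{\r}\ge0\}$, its boundary $\partial\Sigma=\Gamma$ being the degeneration locus $\{\bar{\r}=0\}$ of the rotational orbits; for the Kerr metric this locus is the union of the two axial rays with the horizon rod joining them. Since $q=e^{2\mbo{\nu}}q_0$ with $q_0$ the flat metric, one has $\bar{\mu}_{q_0}\equiv 1$ and $\Delta_0=\ptl_{\bar{\r}}^2+\ptl_{\bar z}^2$, so \eqref{Lap-N} says exactly that $N'$ is harmonic on the open half-plane $\{\bar{\r}>0\}$, smooth up to $\{\bar{\r}=0\}$, and, by \eqref{Lap-bdry}, vanishing there. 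Harmonicity in two dimensions is conformally invariant, so nothing below depends on which of the Weyl-type systems of \cite{GM17} is used to conformally flatten $q$.

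I would then take the odd reflection $\tilde N'$ of $N'$ across the straight boundary line $\{\bar{\r}=0\}$ (so $\tilde N'=N'$ on $\{\bar{\r}\ge0\}$ and $\tilde N'(\bar{\r},\bar z)=-\tilde N'(-\bar{\r},\bar z)$). By the Schwarz reflection principle for harmonic functions -- which needs only that $N'$ is harmonic on one side, continuous up to the line, and vanishing on it -- $\tilde N'$ is harmonic on all of $\R^2$. The fall-off conditions built into the perturbative phase space $X'$ force $N'$, hence $\tilde N'$, to be bounded with $\sup_{|x|=R}|N'|\to0$ as $R\to\infty$; therefore $\tilde N'$ is a bounded entire harmonic function on $\R^2$, hence constant by Liouville's theorem, and since it vanishes on $\{\bar{\r}=0\}$ the constant is $0$. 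Thus $\tilde N'\equiv0$ and a fortiori $N'\equiv0$ on $\Sigma$. Two equivalent routes are available: (i) directly, the maximum principle on the half-disk $\Sigma\cap B_R$, where $N'\le\sup_{\partial B_R\cap\Sigma}|N'|$ on the spherical part and $N'=0$ on $\Gamma\cap B_R$, gives $|N'|\le\sup_{\partial B_R\cap\Sigma}|N'|$ throughout, and letting $R\to\infty$ yields $N'\equiv0$; (ii) Green's identity, $0=\int_\Sigma N'\,\Delta_0 N'\,\bar{\mu}_{q_0}=-\int_\Sigma q_0^{ab}\ptl_aN'\,\ptl_bN'\,\bar{\mu}_{q_0}+\oint_{\partial\Sigma}N'\,\ptl_nN'$, in which the $\Gamma$-term vanishes by \eqref{Lap-bdry} and the contribution from infinity vanishes by the decay of $N'$ and $\ptl N'$, so $\ptl N'\equiv0$ and then $N'\equiv0$ again by \eqref{Lap-bdry}.

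The maximum principle (or Liouville) step is routine; the part that genuinely needs attention -- and where I expect the only real, if minor, obstacle to sit -- is the control of $N'$ at the non-compact end and the regularity of $N'$, hence of $\tilde N'$, along the axis and at the two corner points where the axial rays meet the horizon rod. The decay $\sup_{|x|=R}|N'|\to0$ is not a consequence of $N'\in C^\infty(\Sigma)$ alone but of the asymptotic conditions that define the admissible perturbations (the densitized metric $\bar{\mu}^{-1}_q q_{ab}$ being held fixed and $\bar g$ asymptotically flat), set up in detail in \cite{GM17}; if instead $\Sigma$ is taken with a boundary at a conformal compactification of spatial or null infinity, then $\partial\Sigma$ already carries only the zero boundary value and the maximum principle applies directly on a compact domain with nothing further to check. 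Granted the asymptotic input, the reflection is legitimate because the odd reflection across a straight boundary segment of a one-sided harmonic function vanishing on that segment is automatically harmonic across it, and the two corner points are removable singularities of the bounded harmonic function $\tilde N'$; with these inputs in place the conclusion $N'\equiv0$ is immediate.
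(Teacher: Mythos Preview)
Your proposal is correct, and your route~(ii) via Green's identity is exactly what the paper does: multiply \eqref{Lap-N} by $N'$, integrate by parts, use \eqref{Lap-bdry} to kill the boundary term, conclude $\int_\Sigma |\grad_0 N'|^2=0$, hence $N'$ is constant and therefore zero. The paper's proof is three lines and does not dwell on the contribution from infinity.

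Your primary route via Schwarz reflection and Liouville is a genuinely different argument. It is more elaborate, but it buys you something: it makes explicit the role of the asymptotic decay of $N'$, which is needed for Liouville (or for the maximum principle on exhausting half-disks) and which you correctly flag as not following from $N'\in C^\infty(\Sigma)$ alone. The paper's integration-by-parts argument silently requires the same input to discard the boundary term at infinity, so the two approaches rest on the same implicit hypothesis; you are simply more careful about naming it. Your discussion of the corner points where the axial rays meet the horizon rod is also a refinement the paper does not address.
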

\begin{proof}
If we multiply \eqref{Lap-N} with $N'$ and integrate by parts, we get $\int \vert \grad_0 N \vert^2 =0$ in the interior of $\Sigma,$ after using \eqref{Lap-bdry}. It follows that $N'$ is a constant in $\Sigma.$
\end{proof}

\noindent The variational principle in Corollary \ref{second-var} now gives the following field equations (for smooth and compactly supported variations)
\begin{subequations}
\begin{align}
 h_{AB}(U) \ptl_t U'^B=& \,  e^{2 \mbo{\nu}}\bar{\mu}^{-1}_{q_0} N p'_A + h_{AB}(U) \mathcal{L}_{N'} U^B, \label{u-prime-dot} \\
\ptl_t p'_A =& h_{AB} (U) \leftexp{(h)}{\grad}_a ( N \bar{\mu}_{q} q^{ab} \grad_b  U'^B) \notag \\
& \quad \quad + N \bar{\mu}_q  h_{AE}(U)  R^E_{\,\,\,\,BCD} q^{a b} \ptl_a U^B U'^C \ptl_b U^D \notag\\
=&h_{AB} (U) \leftexp{(h)}{\grad}_a ( N \bar{\mu}_{q_0} q_0^{ab} \grad_b  U'^B) \notag \\
& \quad \quad + N \bar{\mu}_{q_0} h_{AE}(U)  R^E_{\,\,\,\,BCD} q_0^{a b} \ptl_a U^B U'^C \ptl_b U^D  \label{p-prime-dot} \\
\ptl_t q'_{ab}=&   2N  \bar{\mu}^{-1}_{q} \text{CK}_{ab}(Y', q)+ \leftexp{(q)}{\grad} _a N'_b + \leftexp{(q)}{\grad}_b N'_a  \notag\\
=& 2N e^{-2 \nu} \bar{\mu}^{-1}_{q_0} \text{CK}_{ab}(Y', q_0)+ \mathcal{L}_{N'} ( e^{ 2 \mbo{\nu}}(q_0)_{ab}) \\
\ptl_t \mbo{\pi}'^{ab} =&  (\bar{\mu}_q q^{bc} q^{ad})' ( \ptl^2_{dc} N - \leftexp{(q)}{\Gamma} ^f_{cd} \ptl_f N) \notag\\
&+ \bar{\mu}_q q^{bc} q^{ad} ( q^{fl} (\leftexp{(q)}{\grad} _d q'_{l c} + \leftexp{(q)}{\grad} _c q'_{l d}+\leftexp{(q)}{\grad} _l q'_{cd}) \ptl_f N) \notag\\
&+ (\halb N \bar{\mu}_q ( q^{ac} q^{bd} - \halb q^{ab} q^{cd}))' h_{AB}(U) \ptl_a U^a \ptl_b U^B \notag\\
&+ \halb N \bar{\mu}_q (q^{ac} q^{bd} - \halb q^{ab} q^{cd}) \notag\\
& \quad \cdot (2 h_{AB} (U) \ptl_a U'^A \ptl_b U^B + \ptl_{U^C} h_{AB} (U) \ptl_a U^A \ptl_b U^B U'^C)
\end{align}
\end{subequations}
together with the constraints
\begin{subequations}
\begin{align}
H'=& \, 0 \\
\intertext{and}
H'_a=& \, \leftexp{(q)}{\grad}_b \mbo{\pi}'^b_a + p'_A \ptl_a U^A= 0 \label{mom-conf},
\end{align}
in the $2+1$ maximal gauge. 
\end{subequations}
We have, 
\begin{align}
q'= \text{Tr} \, q'_{ab}, \quad \tau' = \bar{\mu}^{-1}_q q_{ab} \mbo{\pi}'^{ab},
\intertext{and}
\ptl_t q' = -2N \tau' + 2 \, \leftexp{(q)}{\grad}^c N'_c,& \quad \ptl_t \tau' = - \Delta_0 N' + N \mathfrak{q}'.
\intertext{In the $2+1$ maximal gauge (cf. Claim \ref{2+1max} )}
\ptl_t q' = 2\leftexp{(q)}{\grad}^c N'_c,& \quad \Delta_0 N' =0.
\end{align}

\noindent 
Let us now formally discuss the structures associated to our Hamiltonian framework. 
 The phase space $X'_{\text{EWM}}$ is such that $(q'_{ab}, U'^A)$ are $C^{\infty}(\Sigma)$ symmetric covariant 2-tensor and smooth vector field respectively and $(\mbo{\pi}'^{ab}, p'_A)$ are $C^{\infty}(\Sigma)$ symmetric 2-tensor densities and scalar density (for each $A$)  respectively, which together form the cotangent bundle $T^*\mathcal{M},$ which we had represented as $X'_{\text{EWM}}.$ The Hamiltonian and momentum constraint spaces $\mathscr{C}_{H'},\mathscr{C}_{H'_a}$ are defined as follows 
 \begin{subequations} \label{constraint-set}
\begin{align}
\mathscr{C}_{H'} =& \Big\{ (q'_{ab}, \mbo{\pi}'^{ab})( U'^A, p'_A) \in T^*\mathcal{M} \,  \big| \, H' =0  \Big\}, \\
\mathscr{C}_{H'_a} =& \Big\{ (q'_{ab}, \mbo{\pi}'^{ab})( U'^A, p'_A) \in T^*\mathcal{M} \, \big| \, H'_a =0,\, a =1,2  \Big\}.
\end{align} 
\end{subequations}
Furthermore, we consider our time coordinate gauge condition to be `$2+1$ maximal': 

\begin{align}
\mathscr{C}_{\tau'} =& \Big\{ (q'_{ab}, \mbo{\pi}'^{ab})( U'^A, p'_A) \in T^*\mathcal{M} \big| \, 
\tau' =0  \Big\}.
\end{align}
In our work we shall be interested in the space 
\begin{align}
\mathscr{C}_{H'} \cap \mathscr{C}_{H'_a} \cap \mathscr{C}_{\tau'}
\end{align}
for our initial value framework. 
In general, proving local existence of Einstein equations using the Hamiltonian initial value problem is a complex problem. 
We note the following statement from the Lagrangian framework of Einstein's equations from the classical result of Choquet-Bruhat and Geroch \cite{Bruhat_Geroch_classic} in $3+1$ dimensions. 

Suppose $\{ (\bar{q}'_{ab}, \bar{\mbo{\pi}}'^{ab})\}_0 \in \mathscr{C}_{\bar{H}'} \cap \mathscr{C}_{\bar{H}'_i},$ then it follows from the classic results of Choquet-Bruhat and Geroch, adapted to our linear perturbation problem, that there exists a unique, regular, maximal development of $\{ (\bar{q}'_{ab}, \bar{\mbo{\pi}}'^{ab}) \}_0,$ $ \iota \fdg \olin{\Sigma} \to \olin{\Sigma} \times \mathbb{R},$ such that  $\{ (\bar{q}'_{ab}, \bar{\mbo{\pi}}'^{ab}) \}_t \in \mathscr{C}_{\bar{H}'} \cap \mathscr{C}_{\bar{H}'_i} $ is causally determined from the initial data $\{ (q'_{ab}, \mbo{\pi}'^{ab}), (U'^A, p'_A) \}_0 $ in a suitable gauge; where $\mathscr{C}_{\bar{H}'}$ and $\mathscr{C}_{\bar{H}'_i}$ are defined analogous to \eqref{constraint-set}. 

\noindent Let us introduce the following notions from the machinery of linearization stability. 
Let us define the constraint map $\olin{\Psi}$ of $(\overline{\Sigma}, \bar{q})$ as a map from the cotangent bundle to a $4-$tuple of scalar densities, $ \olin{\Psi} \fdg T^* \olin{\mathcal{M}} \to \mathcal{C}^{\infty} (\olin{\Sigma}) \times \mathcal{T} \olin{\Sigma},$ such that 
\begin{align}
\olin{\Psi} (\bar{q}, \bar{\mbo{\pi}}) =  (\bar{H}, \bar{H}_i), \quad i = 1, 2, 3.
\end{align}
Let us denote the  deformation of the constraint map as $D \cdot \olin{\Psi} (\bar{q}, \bar{\mbo{\pi}}).$ 
Then the $L^2-$adjoint, $D^\dagger \cdot \olin{\Psi} (\bar{q}, \bar{\mbo{\pi}}) (\bar{C}, \bar{Z}), $ of the deformations  $D \cdot \Psi$  of the constraint map is a
2-tuple (an element of a Banach space) consisting of a covariant symmetric 2-tensor and a contravariant symmetric 2-tensor density and is given by 
\begin{align} \label{3+1adjoint}
D^\dagger \cdot \Psi (\bar{q}, \bar{\mbo{\pi}}) (\bar{C}, \bar{Z}) \fdg =&\, \Big( \bar{\mu}^{-1}_{\bar{q}} \Big(  \halb \Big( \Vert \mbo{\pi} \Vert_{\bar{q}}^2 - \text{Tr}_{\bar{q}}(\bar{\mbo{\pi}} )^2 \Big) \bar{q}^{ij} \bar{C} - 2 \Big( \bar{\mbo{\pi}}^{ik} \bar{\mbo{\pi}}_k^{j} - \halb \mbo{\pi}^{ij} \text{Tr}_{\bar{q}} (\bar{\mbo{\pi}}) \Big) \bar{C} \Big)   \notag\\
&\quad -\bar{\mu}_q \Big( \bar{q}^{ij} \Delta_{\bar{q}} C - \leftexp{(\bar{q})}{\grad}^{i} \leftexp{(\bar{q})}{\grad}^{j} \bar{C} + R^{ij} \bar{C} - \halb \bar{q}^{ij} R_{\bar{q}} \bar{C} \Big) \notag\\
& \quad + \leftexp{(\bar{q})}{\grad}_k (\bar{Z}^k \bar{\mbo{\pi}}^{ij}) - \leftexp{(\bar{q})}{\grad}_k \bar{Z}^i \bar{\mbo{\pi}}^{kj} - \leftexp{(\bar{q})}{\grad}_k \bar{Z}^i \bar{\mbo{\pi}}^{jk}, \notag\\
& \quad   -2 \bar{\mu}_{\bar{q}}^{-1} \bar{C} \Big(\bar{\mbo{\pi}}_{ij} - \halb \text{Tr} (\bar{\mbo{\pi}}) \bar{q}_{ij} \Big) - \leftexp{(q)}{\grad}_i \bar{Z}_j - \leftexp{(q)}{\grad}_j \bar{Z}_i \Big)
\end{align}
The expression \eqref{3+1adjoint} is closely related to the $L^2-$adjoint of the Lichnerowicz operator. 
Moncrief had characterized the splitting theorem, established by Fischer-Marsden \cite{FM_75}, of the (Banach) spaces acted on by the constrant map 
\begin{align}
 \text{ker}\,D^\dagger \cdot \Psi  (\bar{q}, \bar{\mbo{\pi}} ) (\bar{C}, \bar{Z})  \,\oplus\, \text{range} \, D \cdot \Psi  (\bar{q}, \bar{\mbo{\pi}} )(\bar{q}', \bar{\mbo{\pi}}' ),
\end{align}
by associating the kernel of the adjoint operator ($ \text{ker}\, D^\dagger \cdot \olin{\Psi} (\bar{q}, \bar{\mbo{\pi}} ) (\bar{C}, \bar{Z})$) to the existence of spacetime Killing isometries. In particular, Moncrief proved that $\text{ker} \, D^\dagger \cdot \olin{\Psi} (\bar{C}, \bar{Z})$  is non-empty if and only if there exists a spacetime Killing vector. 
 This result is crucial for our work, but in the dimensionally reduced framework. In the following, we shall establish equivalent results in our dimensionally reduced perturbation problem.
\begin{lemma}
Suppose  $(M, g)$ is the $2+1$ spacetime obtained from the dimensional reduction of the axially symmetric, Ricci-flat $3+1$ spacetime $(\bar{M}, \bar{g})$ and $D \cdot \Psi$ is the deformation around the Kerr metric of the constraint map $\Psi$ of the dimensionally reduced $2+1$ Einstein-wave map system on $(M, g)$ then 

\begin{enumerate}
\item The adjoint $D^\dagger \cdot \Psi (q', \mbo{\pi}') (C, Z) $ of the constraint map $\Psi$ is given by 

\begin{align} \label{2+1adjoint}
D^{\dagger} \cdot \Psi =& \Big( \bar{\mu}_q (\leftexp{(q)}{\grad}^b \,\leftexp{(q)}{\grad}^a C -q^{ab}\, \leftexp{(q)}{\grad}_c \leftexp{(q)}{\grad}^c C ) \notag\\
&\quad + \halb C \bar{\mu}_q h_{AB} (q^{ac}q^{bd} - \halb q^{ab} q^{cd})\ptl_a U^A \ptl_b U^B,  \notag\\
&  \quad - \leftexp{(q)}{\grad}_a Z_b - \leftexp{(q)}{\grad}_b Z_a \Big)
\end{align}

\item The kernel $(\text{ker} (D^\dagger \Psi))$ of the adjoint of the constraint map $\Psi$  is one dimensional and is equal to
$(N, 0)^T$
\end{enumerate}  
\end{lemma}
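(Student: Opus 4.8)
The plan is to prove both assertions by a direct computation at the Kerr background, where everything simplifies drastically because the Kerr data satisfy $\mbo\pi^{ab}\equiv 0$, $p_A\equiv 0$ and $\tau\equiv 0$ (hence also $\ptl_t\tau\equiv 0$, $\varrho\equiv 0$, $\Delta_q N=0$, and the reduced shift vanishes in the Weyl--Papapetrou chart, cf. \eqref{kerr-maximal}).

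For part (1), I would linearise $\Psi=(H,H_a)$ at the Kerr data. Since $H$ is quadratic in the momenta $(\mbo\pi,p)$, every term of $D\cdot\Psi$ quadratic in them drops out, and working in the $2+1$ maximal gauge also removes $\tau'=\bar\mu^{-1}_q q_{ab}\mbo\pi'^{ab}$; what survives of $D\cdot H$ is the variation of the curvature density $(\bar\mu_q R_q)'=\bar\mu_q(-\Delta_q q'+\leftexp{(q)}{\grad}^a\leftexp{(q)}{\grad}^b q'_{ab})$ together with the variation (in $q$ and in $U$) of the wave-map potential $\halb\bar\mu_q h_{AB}q^{ab}\ptl_aU^A\ptl_bU^B$, while $D\cdot H_a=-2\,\leftexp{(q)}{\grad}_b\mbo\pi'^b_a+p'_A\ptl_aU^A$. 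Pairing $D\cdot\Psi(q',\mbo\pi',U',p')$ against $(C,Z)$ in $L^2$, integrating by parts twice in the $(\bar\mu_q R_q)'$ term and once in $\leftexp{(q)}{\grad}_b\mbo\pi'^b_a$, and collecting the coefficients of $q'_{ab}$ and $\mbo\pi'^{ab}$ yields exactly \eqref{2+1adjoint}. (The coefficients of $U'^A$ and $p'_A$ produce the remaining, wave-map, blocks of the adjoint; at $(C,Z)=(N,0)$ these vanish by the Kerr wave-map equation \eqref{Kerr-p-dot}, so they are irrelevant for the kernel.) The delicate point is that all boundary integrals over $\ptl\Sigma$ — over the axis $\Gamma$, over the horizon, and at the asymptotic end — must vanish; this is where the weighted function spaces enter and is deferred to \cite{GM17}.

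For part (2), membership of $(N,0)^T$ in the kernel is immediate: with $Z=0$ the second slot of \eqref{2+1adjoint} vanishes identically, and with $C=N$ the first slot equals $\bar\mu_q(\leftexp{(q)}{\grad}^b\leftexp{(q)}{\grad}^a N-q^{ab}\leftexp{(q)}{\grad}_c\leftexp{(q)}{\grad}^c N)+\halb N\bar\mu_q h_{AB}(q^{ac}q^{bd}-\halb q^{ab}q^{cd})\ptl_aU^A\ptl_bU^B$, which is precisely the stationary field equation $\ptl_t\mbo\pi^{ab}=0$ obeyed by the Kerr background. Conversely, let $(C,Z)\in\ker D^\dagger\cdot\Psi$. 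Vanishing of the second slot says $Z$ is a Killing field of $(\Sigma,q)$; by regularity at $\Gamma$ and the imposed fall-off this forces $Z\equiv 0$. Vanishing of the first slot gives $\bar\mu_q(\leftexp{(q)}{\grad}^b\leftexp{(q)}{\grad}^a C-q^{ab}\Delta_q C)+\halb C\bar\mu_q h_{AB}(q^{ac}q^{bd}-\halb q^{ab}q^{cd})\ptl_aU^A\ptl_bU^B=0$; taking the $q$-trace annihilates the wave-map term (its trace is $h_{AB}(q^{cd}-q^{cd})\ptl_cU^A\ptl_dU^B=0$) and leaves $\Delta_q C=0$, whereupon the full equation becomes $\leftexp{(q)}{\grad}_a\leftexp{(q)}{\grad}_b C=C\,S_{ab}$ with $S_{ab}:=-\halb h_{AB}(\ptl_aU^A\ptl_bU^B-\halb q_{ab}\langle dU^A,dU^B\rangle_q)$ symmetric and trace-free. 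The Kerr lapse $N$ solves the same overdetermined system with the same $S_{ab}$, so forming the Wronskian one-form $W_a:=C\,\leftexp{(q)}{\grad}_a N-N\,\leftexp{(q)}{\grad}_a C$ and symmetrising its covariant gradient, the Hessian contributions cancel as $C(NS_{ab})-N(CS_{ab})=0$ and the first-order contributions cancel by symmetry, so $\leftexp{(q)}{\grad}_{(a}W_{b)}=0$; hence $W^{\sharp}$ is again a Killing field of $(\Sigma,q)$, so $W\equiv 0$, so $d(C/N)=0$ on the connected dense set $\{N\neq 0\}$ (the interior of $\Sigma$), giving $C\equiv\lambda N$. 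Therefore $\ker D^\dagger\cdot\Psi=\{(\lambda N,0)^T:\lambda\in\mathbb R\}$ is one-dimensional. (This is the dimensionally reduced analogue of Moncrief's theorem: the single residual dimension corresponds to the stationary Killing vector $\ptl_t$ of Kerr — the axial one having been quotiented — which equals $N\,n$ here since the reduced shift vanishes.)

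The main obstacle is twofold. First, part (1) is only a calculation modulo boundary terms, and controlling the boundary integrals at the axis $\Gamma$, at the horizon, and at the asymptotic end — i.e.\ fixing the right weighted classes for $(q',\mbo\pi',U',p')$ and for $(C,Z)$ — is the substantive analytic input, handled in \cite{GM17}. Second, within part (2) both steps "$Z$ Killing $\Rightarrow Z\equiv 0$" and "$W^{\sharp}$ Killing $\Rightarrow W\equiv 0$" rest on the absence of nontrivial Killing fields of $(\Sigma,q)$ admissible in the functional setup; everything else — the linearised-constraint integration by parts and the short Wronskian identity — is mechanical once these cancellations are organised as indicated.
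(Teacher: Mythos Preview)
Your computation for part (1) matches the paper's: both obtain \eqref{2+1adjoint} by integrating $D\cdot\Psi$ against $(C,Z)$ and moving derivatives across, specialising the general $2{+}1$ adjoint to the Kerr background where $\mbo\pi=0$, $p_A=0$. You are more explicit than the paper about the wave-map blocks of the adjoint (dual to $U'^A,p'_A$) and why they vanish at $(N,0)$ via the background equation \eqref{Kerr-p-dot}; the paper's statement only displays the $(q',\mbo\pi')$-blocks and handles the wave-map sector by the remark that those variables are ``not constrained'' after introducing the twist potential.

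For part (2) your route is genuinely different. The paper invokes the $2{+}1$ analogue of Moncrief's theorem: it shows that $D^\dagger\cdot\Psi(C,Z)=0$ is equivalent to $K=C\,\mbo n+Z$ satisfying the spacetime Killing equation $\leftexp{(g)}{\grad}_{(\alpha}K_{\beta)}=0$ on $(M,g)$, and then concludes by noting that the only Killing vector surviving the $SO(2)$ quotient is $\partial_t$, whence $(C,Z)=(N,0)$. You instead work purely on the slice: first force $Z\equiv0$ from the spatial Killing equation, then for $C$ take the trace to get $\Delta_qC=0$, and run a Wronskian argument with $N$ to obtain $C=\lambda N$. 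Your Wronskian identity is correct --- the Hessian terms cancel exactly because both $C$ and $N$ satisfy $\leftexp{(q)}{\grad}_a\leftexp{(q)}{\grad}_bC=C\,S_{ab}$ with the same $S_{ab}$ --- and the conclusion follows once you know $(\Sigma,q)$ admits no nontrivial Killing fields in the relevant class. The trade-off: the paper's argument is more conceptual and reduces the kernel question to the classification of \emph{spacetime} isometries of the Kerr quotient $(M,g)$, which is standard; your argument is more hands-on but leans twice on the absence of Killing fields of the \emph{slice} $(\Sigma,q)$, a fact you flag but do not prove. Both reach the same one-dimensional kernel, and your explicit identification of the first slot at $(N,0)$ with the equation $\partial_t\mbo\pi^{ab}=0$ is a nice touch that the paper leaves implicit.
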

\begin{proof}
Consider the constraint map $\Psi$
\begin{align}
\Psi (g, \mbo{\pi}) =& (H , H_i)
\end{align}
then from the deformation of $\Psi, D \cdot \Psi (q, \mbo{\pi}) (q', \mbo{\pi}') \fdg = (H', H'_a),$ around a general metric,
it follows that its $2+1$ $L^2-$adjoint is given by

\begin{align} \label{2+1adjoint-general}
D^{\dagger} \cdot \Psi =& \Big( \halb C \bar{\mu}^{-1}_q q^{ab} (\Vert \mbo{\pi} \Vert^2_q - \text{Tr}(\mbo{\pi})^2)- 2C \bar{\mu}^{-1}_q \left(  \mbo{\pi}^{ac} \mbo{\pi}^{b}_c - \mbo{\pi}^{ab} \textnormal{Tr}_q(\mbo{\pi}) \right) \notag\\
&+\bar{\mu}_q (\leftexp{(q)}{\grad}^b \,\leftexp{(q)}{\grad}^a C -q^{ab}\, \leftexp{(q)}{\grad}_c \leftexp{(q)}{\grad}^c C ) \notag\\
&+ \leftexp{(q)}{\grad}_c(\mbo{\pi}^{ab} Z^c) - \leftexp{(q)}{\grad}_c Z^a \mbo{\pi}^{cb} - \leftexp{(q)}{\grad}_c Z^b \mbo{\pi}^{ca}\notag\\
&+ \frac{1}{4}\bar{\mu}_q^{-1} C q^{ab} p_A p^A + \halb C \bar{\mu}_q h_{AB} (q^{ac}q^{bd} - \halb q^{ab} q^{cd})\ptl_a U^A \ptl_b U^B,  \notag\\
& -2C \bar{\mu}^{-1}_q (\mbo{\pi}_{ab} - q_{ab} \textnormal{Tr}_q \mbo{\pi}) - \leftexp{(q)}{\grad}_a Z_b - \leftexp{(q)}{\grad}_b Z_a \Big),
\end{align}

\noindent analogous to \eqref{3+1adjoint}, while noting that the (dimensionally reduced) wave map variables are not constrained due to the introduction of the twist potential, after using the Poincar\'e Lemma (see e.g., \eqref{2+1constraints} and then \eqref{2+1-no-constraints}). The expression \eqref{2+1adjoint} follows for the case of dimensionally reduced Kerr metric. 
Now assume that $(C, Z) \in \text{ker}\, D^\dagger \cdot \Psi.$ It follows from \eqref{2+1adjoint} that a vector $K = K_{\perp} \mbo{n} + K_{\parallel} $
satisfies 
\begin{align}\label{Killing}
\leftexp{(g)}{\grad}_\a K_\b + \leftexp{(g)}{\grad}_\b K_\a =0
\end{align} 
with $K_{\perp} = C$ and $K_{\parallel} = Z$
which implies that $K = (C, Z)$ is a (spacetime) Killing vector in $(M, g).$ Conversely, assuming that \eqref{Killing} holds
it follows that the LHS of \eqref{2+1adjoint} vanishes, which implies $K \in \text{ker}\, D^\dagger \cdot \Psi.$ In particular, for the dimensionally reduced Kerr metric $(M ,g)$ the only remaining linearly independent Killing vector is $\ptl_t,$ so $(C, Z)^{\text{T}} \equiv (N, 0)^{\text{T}},$ which, as will be shown later, resolves \textbf{(P2)}.
\end{proof}

In the following, we shall establish that the Hamilton vector field $(H', H'_a)$ is tangential to the flow of the phase space variables $\{ (q', \pi'), (U'^A, p'_A) \}$ in $\mathscr{C}_{H'} \cap \mathscr{C}_{H'_a} \cap \mathscr{C}_{\tau'}.$ 

\begin{lemma}
Suppose $H'$ and $H'_a$ are the linearized Hamiltonian and momentum constraints of the $2+1$ Einstein-wave map system, then their propagation equations are
\begin{subequations} 
\begin{align}
\frac{\ptl}{\ptl t} H' =& q^{ab}  \ptl_a N H'_b  + \ptl_b (N q^{ab} H'_a) \label{h-dot}\\
\frac{\ptl}{\ptl t} H'_a=& \ptl_a N H' \label{ha-dot}
\intertext{and}
N H' =& \ptl_b ( N \bar{\mu}_q q^{ab} h_{AB}  U'^{A} \ptl_b U^B - 2 \bar{\mu}_q q^{ab} \ptl_a N \mbo{\nu}') \label{LS}
\end{align}
\end{subequations}
\end{lemma}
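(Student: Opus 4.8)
The three identities are of two kinds, so I would establish them by two different mechanisms. Equations \eqref{h-dot} and \eqref{ha-dot} are the propagation equations for the linearized constraints along the background flow; the plan is to linearize, around the reduced Kerr metric, the nonlinear constraint–propagation identities of the $2+1$ Einstein–wave map system \eqref{constraints}. For the exact system the constraints are first class: computing $\ptl_t H$ and $\ptl_t H_a$ by inserting the nonlinear Hamilton equations \eqref{qdot} and their companions (equivalently, using the contracted Bianchi identity for $E_{\mu\nu}=T_{\mu\nu}$ together with the background wave map equation) exhibits $\ptl_t H$ and $\ptl_t H_a$ as linear combinations of $H$, $H_a$ and their spatial derivatives, with coefficients built from $N$, $N^a$, $q_{ab}$ and the matter fields. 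I would write these identities once and then apply $D_{\mbo{\gamma}_s}\cdot(\;\cdot\;)|_{s=0}$. The essential simplification is that the reduced Kerr data are highly degenerate: by \eqref{kerr-maximal} and the Weyl–Papapetrou form of Kerr (whose reduced $2+1$ metric carries no $dt\,dx^a$ cross terms), on the background one has $H=H_a=0$, $\mbo{\pi}^{ab}\equiv0$, $p_A\equiv0$, $\mbo{\tau}\equiv0$, $\varrho\equiv0$, vanishing shift $N^a\equiv0$ and maximal slicing $\Delta_q N=0$. Hence every term in the linearized propagation identities whose coefficient vanishes on Kerr drops out, leaving exactly $\ptl_t H'=q^{ab}\ptl_a N\,H'_b+\ptl_b(Nq^{ab}H'_a)$ and $\ptl_t H'_a=\ptl_a N\,H'$. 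As an independent check one can instead differentiate the explicit expressions \eqref{H'} and \eqref{mom-conf} in $t$, insert the linearized Hamilton equations \eqref{u-prime-dot}, \eqref{p-prime-dot} (and the corresponding ones for $q'$, $\mbo{\pi}'$), and simplify with the same vanishing background data; this is the classical Jacobian route.

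For \eqref{LS} there is first a structural reason: pairing $D\cdot\Psi(q',\mbo{\pi}',U',p')=(H',H'_a)$ in $L^2$ with $(C,Z)=(N,0)^{\mathrm T}$ and using the definition of the adjoint gives $\int_\Sigma NH'=\int_\Sigma\langle D^\dagger\cdot\Psi(N,0),(q',\mbo{\pi}',U',p')\rangle+(\text{boundary})$, and since $(N,0)^{\mathrm T}\in\ker D^\dagger\cdot\Psi$ by the preceding Lemma the bulk term vanishes identically, so $NH'$ must be an exact spatial divergence. To produce the explicit divergence \eqref{LS} I would carry out the integration by parts by hand in the $2+1$ maximal conformal gauge \eqref{conformal}, where $\mbo{\tau}'=0$ (so the $q_{ab}\mbo{\pi}'^{ab}$ term of \eqref{H'} drops), $q_{ab}=e^{2\mbo{\nu}}(q_0)_{ab}$ with $q'_{ab}$ pure trace, $q'=4\mbo{\nu}'$, and $\bar{\mu}_q q^{ab}=\bar{\mu}_{q_0}q_0^{ab}$, so that \eqref{H'} reduces to $H'=2\bar{\mu}_{q_0}\Delta_0\mbo{\nu}'+\halb\bar{\mu}_{q_0}\ptl_{U^C}h_{AB}\,q_0^{ab}\ptl_a U^A\ptl_b U^B\,U'^C+\bar{\mu}_{q_0}q_0^{ab}h_{AB}\ptl_a U'^A\ptl_b U^B$. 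For the wave map part one expands $\ptl_b\!\big(N\bar{\mu}_q q^{ab}h_{AB}U'^A\ptl_a U^B\big)$ by the product rule, substitutes the background Kerr wave map equation \eqref{Kerr-p-dot} for $\ptl_b(N\bar{\mu}_q q^{ab}\ptl_a U^B)$, and reorganizes the Christoffel contribution via $h_{AE}\leftexp{(h)}{\Gamma}^E_{BC}=\halb(\ptl_B h_{AC}+\ptl_C h_{AB}-\ptl_A h_{BC})$ together with the symmetry of $q^{ab}$; the result is precisely the last two terms of $NH'$. For the geometric part one uses $R_q=-2e^{-2\mbo{\nu}}\Delta_0\mbo{\nu}$ and the pure-trace form of $q'_{ab}$ to get $-N(\bar{\mu}_q R_q)'=2N\bar{\mu}_{q_0}\Delta_0\mbo{\nu}'$, then integrates by parts twice, discarding bulk contributions by the maximal-slicing identity $\Delta_0 N=0$, so that this piece too becomes a divergence of a vector built from $N$, $\mbo{\nu}'$ and $\ptl N$, $\ptl\mbo{\nu}'$. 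Collecting the two pieces yields \eqref{LS}.

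The step I expect to be the main obstacle is the boundary bookkeeping: every integration by parts above is performed on the manifold-with-boundary $\Sigma$, whose boundary $\Gamma$ consists of the two components of the axis, and one must verify both that no axis flux is generated and — crucially for the later use of \eqref{LS} as a Hamiltonian — that \eqref{LS} holds \emph{pointwise}, not merely after integration over $\Sigma$. A secondary difficulty is keeping the general Weyl–Papapetrou expression \eqref{H'} and its maximal-conformal-gauge reduction strictly consistent, and carrying out the Christoffel/curvature reorganization in the wave map part so that exactly the symmetric, torsion-free target curvature structure of \eqref{2+1adjoint} appears; both of these lean on the facts recorded above, that the reduced Kerr data are maximal with vanishing shift and that the background wave map satisfies \eqref{Kerr-p-dot}.
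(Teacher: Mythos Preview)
Your proposal is correct and lands on the same computations the paper carries out, with one small shift of emphasis. For \eqref{h-dot} and \eqref{ha-dot} the paper does \emph{not} linearize an abstract nonlinear constraint--propagation identity; it takes exactly what you call the ``independent check'': it time--differentiates the explicit conformal--gauge expressions for $H'$ and $H'_a$ (in terms of $\mbo{\nu}'$, $\varrho'^a_b$, $U'^A$, $p'_A$), substitutes the linearized evolution equations for $\varrho'$, $\mbo{\nu}'$, $U'$, $p'$, and simplifies using the Kerr background identities $\Delta_0 N=0$, $p_A=\mbo{\pi}^{ab}=N^a=0$ and the background wave map equation \eqref{Kerr-p-dot}. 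Your structural route via linearizing the first--class algebra would of course also work and is arguably cleaner, but the paper opts for the direct computation. For \eqref{LS} your plan matches the paper essentially line for line: the Christoffel reorganization you describe is precisely the paper's identity \eqref{var-from-chris}, the geometric piece $2N\bar{\mu}_{q_0}\Delta_0\mbo{\nu}'$ is turned into a divergence by integrating by parts twice against $\Delta_0 N=0$, and the paper even records your structural remark that $(N,0)^{\mathrm T}\in\ker D^\dagger\cdot\Psi$ is what underlies the identity.

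One correction to your anticipated ``main obstacle'': there is no boundary bookkeeping in this lemma. All three identities are pointwise equalities of densities on $\Sigma$; no integration over $\Sigma$ occurs in the proof, so axis fluxes on $\Gamma$ never enter. The boundary analysis you are worried about belongs to the subsequent construction of $H^{\text{Reg}}$ and the divergence--free current $J^{\text{Reg}}$, not to the present statement.
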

\begin{proof}
The statements \eqref{h-dot} and \eqref{ha-dot}  follow from the linearized  and background (exact) field equations of our $2+1$ Einstein-wave map system. For simplicity in computations, we shall perform our computations  with $q'_0$
held fixed. 
Recall, 
\begin{align}
\ptl_t \varrho'^a_b =& N \bar{\mu}_{q_0} ( q^{ac}_0 \delta^d_b - \halb q_0^{cd} \delta^a_b) (h_{AB} \ptl_c U'^A \ptl_d U^B + \halb \ptl_{U^C} h_{AB} (U) \ptl_c U^A \ptl_d U^B U'^C ) \notag\\
&+ \bar{\mu}_{q_0} q_0^{cd} \delta^a_b \ptl_c N \ptl_d \mbo{\nu}' - \bar{\mu}_{q_0} q^{ac}_0 ( \ptl_b N \ptl_c \mbo{\nu}') \\
\ptl_t \mbo{\nu}'=& \frac{1}{2 \bar{\mu}_{q_0}} \ptl_c (\bar{\mu}_{q_0} N'^c) + 2 \mathcal{L}_{N'} \mbo{\nu}
\end{align}
Consider the quantities, 
\begin{align}
\ptl_t \big(2 \bar{\mu}^{-1}_{q_0} \ptl_b ( \bar{\mu}_{q_0} q^{ab}_0 \ptl_a \mbo{\nu}') \big) =& \bar{\mu}^{-1}_{q_0} \ptl_b ( \bar{\mu}_{q_0} q^{ab}_0 \ptl_a ( \bar{\mu}^{-1}_{q_0} \ptl_c (\bar{\mu}_{q_0} N'^c) + 2 N'^c\ptl_c \mbo{\nu}))
\end{align}
\begin{align}
\halb \bar{\mu}_q q^{ab}  \ptl_{U^C} h_{AB} \ptl_a U^A \ptl_b U^B  \ptl_t U'^C =&  \halb N q^{ab}  \ptl_{U^C} h_{AB} \ptl_a U^A \ptl_b U^B   p'^C \notag\\
&+ \halb \bar{\mu}_q q^{ab}  \ptl_{U^C} h_{AB} \ptl_a U^A \ptl_b U^B \mathcal{L}_{N'} U^C  \\
  \bar{\mu}_q q^{ab} h_{AB}(U) \ptl_a (\ptl_t U'^A) \ptl_b U^B=& \bar{\mu}_q q ^{ab} h_{AB}(U) \ptl_a ( \bar{\mu}^{-1}_q N p'^A)
  \ptl_b U^B \notag\\
  &+ \bar{\mu}_q q ^{ab} h_{AB}(U) \ptl_a  (\mathcal{L}_{N'} U^A) \ptl_b U^B \notag\\
  =& \bar{\mu}_q q ^{ab} h_{AB}(U) \ptl_a ( \bar{\mu}^{-1}_q N p'^A)
  \ptl_b U^B \notag\\
  &+ \bar{\mu}_q q ^{ab} h_{AB}(U) \mathcal{L}_{N'} (\ptl_a U^A) \ptl_b U^B
\end{align}
Combining the results above and noting that for our gauge,
\begin{align}
\text{CK}^{ab} (N', q_0) = \bar{\mu}_{q_0} (\leftexp{(q_0)}{\grad}^a N'^b + \leftexp{(q_0)}{\grad}^b N'^a - q^{ab}_0 \leftexp{(q_0)}{\grad}_c N'^c ) = - 2 N e^{-2 \nu} q^{bc}_0 \varrho'^a_c
\end{align}
we get 
\begin{align}
\ptl_t H' =& q^{ab} \ptl_b N (-2 \leftexp{(q_0)}{\grad}_c \varrho'^c_a + p'_A \ptl_a U^A) -2 \ptl_b ( N q^{ab} \leftexp{(q_0)}{\grad}_c \varrho'^c_a) + \ptl_b (N q^{ab} p'_A \ptl_a U^A)  \\
=&  q^{ab}  \ptl_a N H'_b  + \ptl_b (N q^{ab} H'_a)
\end{align}
Likewise, for \eqref{ha-dot}, consider, 

\begin{align}
 \leftexp{(q_0)}{\grad}_a (\ptl_t \varrho'^a_b) =& \leftexp{(q_0)}{\grad}_a \big(N \bar{\mu}_{q_0} ( q^{ac}_0 \delta^d_b - \halb q_0^{cd} \delta^a_b) (h_{AB} \ptl_c U'^A \ptl_d U^B 
 \notag\\
 &+ \halb \ptl_{U^C} h_{AB} (U) \ptl_c U^A \ptl_d U^B U'^C ) \notag\\
 &+\bar{\mu}_{q_0} q_0^{cd} \delta^a_b \ptl_c N \ptl_d \mbo{\nu}' - \bar{\mu}_{q_0} q^{ac}_0 ( \ptl_b N \ptl_c \mbo{\nu}') \big)  \\
\ptl_a U^A \ptl_t p'_A =& h_{AB} \ptl_a U^A \leftexp{(h)}{\grad}_c ( N \bar{\mu}_q q^{cb} \leftexp{(h)}{\grad}_b U'^B) \notag\\
&+ N \bar{\mu}_q h_{AB} (U) \ptl_a U'^B R^{E}_{\,\,\,\,BCD}   q^{ab} \ptl_a U^B \ptl_b U^D U'^C 
\end{align}

Now combining all the above, we have
\begin{align}
\ptl_t H'_c = \ptl_c N H'
\end{align}
in view of the background field equations \eqref{Kerr-p-dot} and \eqref{kerr-maximal}. 
For \eqref{LS}, first note that 

\begin{align} \label{var-from-chris}
 N \bar{\mu}_{q_0} \ptl_{U^C} h_{AB} (U) q^{ab}_0 \ptl_a U^A \ptl_b U^C U'^B =& N \bar{\mu}_{q_0} h_{AB}\leftexp{(h)}{\Gamma}^A_{CD} (U) q^{ab}_0 \ptl_a U^C \ptl_b U^D U'^B \notag\\
 &- \halb N \bar{\mu}_{q_0} \ptl_{U^C} h_{AB}(U) q^{ab}_0 \ptl_a U^A \ptl_b U^B U'^C
\end{align}
after a suitable relabelling of the indices. Now consider 

\begin{align} \label{LS-computation}
N H' =& 2N \bar{\mu}_{q_0} \Delta_0 \mbo{\nu}' +  N \bar{\mu}_q h_{AB}\leftexp{(h)}{\Gamma}^A_{CD} (U) q^{ab}_0 \ptl_a U^C \ptl_b U^D U'^B \notag\\
&- N \bar{\mu}_{q_0} \ptl_{U^C} h_{AB} (U) q^{ab}_0 \ptl_a U^A \ptl_b U^C U'^B + \bar{\mu}_{q_0} q^{ab}_0 h_{AB} \ptl_a U'^A \ptl_b U^B \notag\\
=&  2N \bar{\mu}_{q_0} \Delta_0 \mbo{\nu}'+ \ptl_b (N \bar{\mu}_{q_0} q_0^{ab}h_{AB} \ptl_a U^A U'^B) \notag\\
=& \ptl_b ( -2 \bar{\mu}_{q_0} q_0^{ab} \ptl_a N \mbo{\nu}' + 2 \ptl_a \mbo{\nu}' + N \bar{\mu}_q q^{ab} h_{AB} \ptl_a U^A U'^B )
\end{align}
where we have used \eqref{var-from-chris} and the background field equations \eqref{Kerr-p-dot} and \eqref{kerr-maximal}. 
Fundamentally, underlying the statement \eqref{LS-computation} is the fact that $(N, 0)^{\text{T}}$ is the kernel of the adjoint of the constraint map of our linear perturbation theory.  
\end{proof}


\section{A Positive-Definite Hamiltonian Energy from Negative Curvature of the Target and the Hamiltonian Dynamics}

In arriving at the variational principles and their corresponding field equations, we have used smooth compactly supported deformations. In the construction of a Hamiltonian energy function the underlying computations are bit more subtle, in connection with the boundary terms and the initial value problem. We impose the regularity conditions on the axis of initial hypersurface $\Sigma$ by fiat, so that the fields smoothly lift up to the original $\olin{\Sigma}.$  We shall assume the following conditions on the two disjoint segments of the axes $\Gamma = \Gamma_1 \cup \Gamma_2$. In the wave map $U \fdg M \to \mathbb{N},$ one of the components corresponds to the norm of the Killing vector $\vert \Phi \vert$ and the other the `twist'. For the twist component we assume
\begin{align}
U'^A\vert_{\Gamma_1} = U'^A \vert_{\Gamma_2}, \quad \text{for the corresponding A}
\end{align}
on account of our assumption that the perturbation of the angular momentum is zero. Without (effective) loss of generality we assume,
\begin{align}
U'^A = 0 \quad \text{on} \quad \Gamma \quad \text{which implies} \quad \ptl_{\vec{t}} U'^A=0 
\end{align}
where $\ptl_{\vec{t}}$ is the derivative tangent to the axis. To prevent conical singularity on the axis, which, as we remarked, allows us to smoothly lift our fields up to the original manifold $\olin{\Sigma},$ we assume

\begin{align}
\vert \Phi \vert' =0, \quad \ptl_{\vec{t}} \vert \Phi \vert' =0
\end{align} 
for the `norm' component of $U$; and
\begin{align}
\ptl_{\vec{n}} U'^A =0, \quad p'_A =\ptl_{\vec{t}} p'_A =0, \quad \ptl_{\vec{n}} p'_A =0,
\end{align} 
where $\ptl_{\vec{n}}$ is the derivative normal to the axes $\Gamma.$
In this work, for $\mbo{\nu}'$ we shall assume 
\begin{align}
\ptl_{\vec{n}} \mbo{\nu}' =0
\end{align}
which corresponds to the preservation of the condition that inner (horizon) boundary is a minimal surface. Now define an `alternative' Hamiltonian constraint $H'^{\text{Alt}}$

\begin{align}
H'^{\text{Alt}} \fdg = \bar{\mu}_{q_0} (2 \Delta_0 \mbo{\nu}'  + h_{AB}(U) U'^B (\Delta_0 U^A + \leftexp{(h)}{\Gamma}^A_{BC} q^{ab}_0 \ptl_a  U^B \ptl_b U^C))
\end{align}
where we have now used the following identity to transform from $H':$

\begin{align}
&\halb \ptl_{U^C} h_{AB} q^{ab} \ptl_a U^A \ptl_b U^B U'^C + h_{AB}(U) q^{ab} \ptl_a U'^A \ptl_b U^B \notag\\
&= h_{AB} U'^B (\Delta_q U'^A + \leftexp{(h)}{\Gamma}^A _{CD} q^{ab} \ptl_a U^C \ptl_b U^D ) 
\end{align}
which is analogous to \eqref{christof-intro}, but now for the $q$ metric. 
Analogously define

\begin{align}
H''^{\text{Alt}} \fdg =& \bar{\mu}^{-1}_{q_0} (2 e^{-2 \mbo{\nu}} \Vert \varrho' \Vert^2_{q_0} - \tau'^2 e^{2 \mbo{\nu}} \bar{\mu}^2_{q_0} + p'_A p'^A) \notag\\
&- \bar{\mu}_{q_0} h_{AB} U'^B( \leftexp{(h)}{\Delta} U'^A +R^A_{BCD} q_0^{ab} \ptl_a U^B \ptl_b U^C U'^D)
\end{align}
Using a further divergence identity: 
\begin{align}
& \leftexp{(h)}{\grad}_a (h_{AB} U'^B \leftexp{(h)}{\grad} ^a U'^A) - h_{AB} U'^B  \leftexp{(h)}{\grad}_a  \leftexp{(h)}{\grad}^a U'^A \notag\\ 
&= h_{AB} q^{ab} \leftexp{(h)}{\grad}_a U'^A \leftexp{(h)}{\grad}_b U'^B , \quad (\Sigma, q),
\end{align}
let us now define our `regularized' Hamiltonian energy density as

\begin{align} \label{e-reg-den}
\mathbf{e}^{\text{Reg}} \fdg=& N \bar{\mu}^{-1}_{q_0} e^{-2 \mbo{\nu}} \left( \Vert \varrho' \Vert^2_{q_0} + \halb p'_A p'^A  \right) - \halb N e^{2 \mbo{\nu}} \bar{\mu}_{q_0} \tau'^2 \notag\\
&+ \halb N \bar{\mu}_{q_0} q_0^{ab} h_{AB}(U) \leftexp{(h)}{\grad}_a U'^A \leftexp{(h)}{\grad}_b U'^B \notag\\
&- \halb N \bar{\mu}_{q_0} q^{ab}_0 h_{AE} (U) U'^A \leftexp{(h)}R^E _{BCD} \ptl_a U^B \ptl_b U^C U'^D 
\end{align} 
and the `regularized' Hamiltonian $H^{\text{Reg}}$
\begin{align} \label{e-reg}
H^{\text{Reg}} \fdg = \int_{\Sigma} \mathbf{e}^{\text{Reg}} \, d^2 x.
\end{align}
It is evident that $H^{\text{Reg}}$ is manifestly positive-definite in the maximal gauge $\tau' \equiv 0,$ in view of the fact that the target is the (negatively curved)   hyperbolic 2-plane. Indeed, we obtain a similar energy expression \eqref{e-reg-den} and \eqref{e-reg} in the higher dimensional case where the target for wave maps is $SL(n-2)/SO(n-2).$
As we already alluded to, the purpose of distinguishing the quantities $H^{\text{Reg}}$ is that they are transformed, using divergence identities, from $H$ , and thus differ by boundary  terms. In case the perturbations are compactly supported in $(\Sigma)$ it is immediate that they have the same value. In general, dealing with all the boundary terms and their evolution in our problem is considerably subtle \cite{GM17}. 
  
It is conjectured that our Hamiltonian energy functional and the boundary terms constitute deformations of the ADM mass of $(\olin{\Sigma}, \bar{q})$ at the outer boundary. These technical aspects in our problem, related to the boundary behaviour in the quotient space, shall be completed systematically in a separate work.  The aim of this work is the construction of the positive-definite energy functional $H^{\text{Reg}}.$ In the following, we shall establish the validity and consistency of our approach to construct the energy using two separate methods. Firstly, we shall show that $H^{\text{Reg}}$ serves as a Hamiltonian that drives the dynamics of the unconstrained or `independent' phase-space variables. Secondly, we shall establish that there exists a spacetime divergence-free vector density, whose flux through $\Sigma$ is $H^{\text{Reg}}.$

We would like to point out that, in our problem, the $2+1$ geometric phase space variables (e.g., $\mbo{\nu}', \varrho'^a_b$) are completely determined by the constraints and gauge-conditions. Therefore, their Hamiltonian dynamics are governed by the `independent' or `unconstrained' dynamical variables $(U'^A, p'_A)$. In the following, we shall prove that $H^{\text{Reg}}$ drives the \emph{coupled} Einstein-wave map dynamics of $(U'^A, p'_A)$. The proof that the equivalent $H^{\text{Reg}}$ serves as the Hamiltonian for the reduced Einstein-Maxwell phase space can be found in Section 5 of \cite{GM17}.

\begin{theorem}
Suppose the globally regular, maximal development of the smooth, compactly supported perturbation initial data in the domain of outer communications of the Kerr metric is such that $\big \{ (q'_{ab}, \mbo{\pi}'^{ab}), (U'^A, p'_A) \big \}_{t} \in \mathscr{C}_{H'} \cap \mathscr{C}_{H'_a} \cap \mathscr{C}_{\tau'},$ then functional $H^{\text{Reg}}$ is a Hamiltonian for the coupled dynamics of $(U'^A, p'_A):$
\begin{subequations}
\begin{align}
D_{p'_A} \cdot H^{\textnormal{Reg}} =&\, \ptl_t U'^A \\
D_{U'^{A}} \cdot H^{\textnormal{Reg}} =&\,- \ptl_t p'_A.
\end{align}
\end{subequations}
\end{theorem}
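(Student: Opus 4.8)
The plan is to verify the two canonical equations by computing the variational derivatives of $H^{\textnormal{Reg}}$ directly from the explicit density \eqref{e-reg-den}, using that in the $2+1$ maximal gauge the geometric perturbations are completely slaved to $(U'^A,p'_A)$. Indeed, on $\mathscr{C}_{H'}\cap\mathscr{C}_{H'_a}\cap\mathscr{C}_{\tau'}$ the linearized Hamiltonian constraint $H'=0$ determines $\mbo{\nu}'$ in terms of $U'^A$; the linearized momentum constraint \eqref{mom-conf} determines $\varrho'^{a}_{b}$ (equivalently the vector field $Y'$) in terms of $p'_A$; and preservation of the conformal gauge \eqref{conformal} fixes the linearized shift $N'^{a}$ in terms of $\varrho'^{a}_{b}$ through the identity $\textnormal{CK}^{ab}(N',q_0)=-2Ne^{-2\mbo{\nu}}q^{bc}_0\varrho'^{a}_{c}$ recorded in the proof of the preceding Lemma. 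Thus $H^{\textnormal{Reg}}$ restricts to a functional of the unconstrained pair $(U'^A,p'_A)$, and I take the variations $\delta U'^A,\delta p'_A$ smooth and compactly supported in the interior of $\Sigma$; I also use that on the Kerr background $\ptl_t U^A=0$ and the $2+1$ shift vanishes, which is why \eqref{u-prime-dot}--\eqref{p-prime-dot} take their simple form. As an \emph{a priori} sanity check one may note that, modulo boundary integrals that vanish under the regularity conditions imposed on $\Gamma$ and on the inner and outer boundaries, $H^{\textnormal{Reg}}$ coincides with $\halb\int_\Sigma NH''\,d^2x$, the generator of the perturbative dynamics of Corollary \ref{second-var}: the second-order fields enter $H''$ linearly and $\int_\Sigma NH'[\,\cdot\,]\,d^2x$ is a pure boundary integral by the linearization-stability identity \eqref{LS} (a manifestation of $(N,0)^{\textnormal{T}}\in\ker D^{\dagger}\cdot\Psi$), while the passage from the remaining quadratic terms to $\mathbf{e}^{\textnormal{Reg}}$ is effected by the same divergence identities used to obtain $H''^{\textnormal{Alt}}$ and then \eqref{e-reg-den}.

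For the second equation, $D_{U'^A}\cdot H^{\textnormal{Reg}}=-\ptl_t p'_A$, the key observation is that $\mathbf{e}^{\textnormal{Reg}}$ depends on $U'^A$ only through its two wave-map terms: the background conformal factor $e^{-2\mbo{\nu}}$ carries no $U'$-dependence, $\varrho'^{a}_{b}$ is $U'$-independent because the momentum constraint \eqref{mom-conf} does not contain $U'$, and $\mbo{\nu}'$ — which does depend on $U'$ — does not occur in $\mathbf{e}^{\textnormal{Reg}}$ at all. Varying $\halb N\bar{\mu}_{q_0}q^{ab}_0 h_{AB}(U)\leftexp{(h)}{\grad}_a U'^A\leftexp{(h)}{\grad}_b U'^B$ and integrating by parts with the induced connection (using $\leftexp{(h)}{\grad}_C h_{AB}=0$ and discarding the boundary integral by compact support of $\delta U'$) produces $-h_{AB}(U)\leftexp{(h)}{\grad}_a(N\bar{\mu}_q q^{ab}\grad_b U'^B)$; varying the curvature term $-\halb N\bar{\mu}_{q_0}q^{ab}_0 h_{AE}(U)U'^A\leftexp{(h)}{R}^E_{BCD}\ptl_a U^B\ptl_b U^C U'^D$ in both of its $U'$ slots and using the symmetries of the curvature tensor of the two-dimensional target (which make the relevant bilinear forms coincide up to the sign carried by $\leftexp{(h)}{R}^E_{BCD}=-\leftexp{(h)}{R}^E_{BDC}$) produces $-N\bar{\mu}_q q^{ab}h_{AE}(U)\leftexp{(h)}{R}^E_{BCD}\ptl_a U^B U'^C\ptl_b U^D$. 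The sum is exactly $-\ptl_t p'_A$ by \eqref{p-prime-dot}.

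For the first equation, $D_{p'_A}\cdot H^{\textnormal{Reg}}=\ptl_t U'^A$, the density $\mathbf{e}^{\textnormal{Reg}}$ depends on $p'_A$ both explicitly, through $\halb N\bar{\mu}^{-1}_q p'_B p'^B$, and implicitly, through $\varrho'^{a}_{b}=\varrho'^{a}_{b}[p']$; the term $-\halb Ne^{2\mbo{\nu}}\bar{\mu}_{q_0}\tau'^2$ and its variations vanish identically in the gauge $\tau'=0$. The explicit piece gives $N\bar{\mu}^{-1}_q h^{AB}(U)p'_B$, which is the $Np'_A$ term of \eqref{u-prime-dot}. For the implicit piece I differentiate $\int_\Sigma N\bar{\mu}^{-1}_q\Vert\varrho'\Vert^2_{q_0}\,d^2x$: the induced variation $\delta\varrho'^{b}_{a}$ solves the linearization of the momentum constraint \eqref{mom-conf}, so that $2\leftexp{(q_0)}{\grad}_b\delta\varrho'^{b}_{a}=\delta p'_A\,\ptl_a U^A$; substituting the gauge relation $2Ne^{-2\mbo{\nu}}\varrho'^{a}_{b}=-(q_0)_{bc}\textnormal{CK}^{ac}(N',q_0)$, integrating by parts, and using that $\delta\varrho'$ is symmetric and trace-free together with the constraint, one collapses the integral to $\int_\Sigma(\mathcal{L}_{N'}U^A)\,\delta p'_A\,d^2x$ plus a boundary term. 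Hence $D_{p'_A}\cdot H^{\textnormal{Reg}}=N\bar{\mu}^{-1}_q h^{AB}(U)p'_B+\mathcal{L}_{N'}U^A=\ptl_t U'^A$ by \eqref{u-prime-dot}, which is the \emph{coupled} (Einstein--wave map) evolution of $U'^A$.

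The step I expect to be the main obstacle is the implicit $p'$-dependence in the first equation: one must chase the chain rule through the elliptic momentum constraint, invoke the gauge relation tying $\varrho'$ to the linearized shift $N'$, and — the genuinely delicate point — show that the boundary integral produced by the integration by parts vanishes under the imposed asymptotic and axis regularity, which is precisely where the systematic analysis of boundary terms deferred to \cite{GM17} is required. By comparison, the index and sign bookkeeping in the curvature-term variation of the second equation, and the vanishing of the compact-support boundary integrals elsewhere, are routine.
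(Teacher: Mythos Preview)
Your proposal is correct and follows essentially the same route as the paper: for $D_{U'^A}\cdot H^{\textnormal{Reg}}$ you vary only the two wave-map terms and integrate by parts, and for $D_{p'_A}\cdot H^{\textnormal{Reg}}$ you pick up the explicit $p'_Ap'^A$ contribution and then extract the $\mathcal{L}_{N'}U^A$ term from the implicit $\varrho'[p']$-dependence via the linearized momentum constraint, the gauge relation $\textnormal{CK}(N',q_0)\sim -2Ne^{-2\mbo{\nu}}\varrho'$, and an integration by parts. Your added explanations of why $\mbo{\nu}'$ and $\tau'$ do not contribute, and the sanity check relating $H^{\textnormal{Reg}}$ to $\tfrac12\int NH''$, are not in the paper's proof but are consistent with its framework.
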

\begin{proof}
The first variation $D_{p'_A} \cdot H^{\text{Reg}}$ contains the terms:
\begin{align}
\varrho'^a_b \varrho''^a_b =& \halb N^{-1} \bar{\mu}_q (\leftexp{(q)}{\grad}^b N'_a + \leftexp{(q)}{\grad}_a N'^b - \delta_a^b \leftexp{(q)}{\grad}_c N'^c)  \notag\\
& ( \leftexp{(q)}{\grad}^a N''_b + \leftexp{(q)}{\grad}_b N''^a - \delta_b^a \leftexp{(q)}{\grad}_c N''^c) \notag\\
&= \halb N^{-1} \bar{\mu}_q \leftexp{(q)}{\grad}_a N'^b  ( \leftexp{(q)}{\grad}^a N''_b + \leftexp{(q)}{\grad}_b N''^a - \delta_b^a \leftexp{(q)}{\grad}_c N''^c)
\end{align}
We have the divergence identity:
\begin{align}
& \leftexp{(q)}{\grad}_a (N^{-1} N'^b \bar{\mu}_q  ( \leftexp{(q)}{\grad}^a N''_b + \leftexp{(q)}{\grad}_b N''^a - \delta_b^a \leftexp{(q)}{\grad}_c N''^c) ) \notag\\
& N^{-1} N'^b ( \leftexp{(q)}{\grad}_a ( \leftexp{(q)}{\grad}^a N''_b + \leftexp{(q)}{\grad}_b N''^a - \delta_b^a \leftexp{(q)}{\grad}_c N''^c))  \notag\\
&+ N^{-1} \bar{\mu}_q \leftexp{(q)}{\grad}_a N'^b  ( \leftexp{(q)}{\grad}^a N''_b + \leftexp{(q)}{\grad}_b N''^a - \delta_b^a \leftexp{(q)}{\grad}_c N''^c) \notag\\
&= -N'^b (\ptl_b U^A p''_A) +  N^{-1} \bar{\mu}_q \leftexp{(q)}{\grad}_a N'^b  ( \leftexp{(q)}{\grad}^a N''_b + \leftexp{(q)}{\grad}_b N''^a - \delta_b^a \leftexp{(q)}{\grad}_c N''^c)
\end{align}
after using the momentum constraint; and 
\begin{align}
D_{p'_A} \cdot \halb p'_A p'^A = p'^A.
\end{align}
Collecting the terms above, gives the Hamilton equation 
\begin{align}
D_{p'_A} \cdot H^{\text{Reg}} =& \, N \bar{\mu}^{-1}_q p'^A + N'^b \ptl_b U^A  \notag\\
=& \,  \ptl_t U'^A.
\end{align}
The quantity $D_{U'^A} \cdot H^{\text{Reg}}$ contains the terms: 

\begin{align}
D_{U'^A} \cdot \halb  h_{AB}(U) \leftexp{(h)}{\grad}_a U'^A \leftexp{(h)}{\grad}_b U'^B =& N \bar{\mu}_q q^{ab} h_{AB} \leftexp{(q)}{\grad}_a U''^A \leftexp{(q)}{\grad}_b U'^B \notag\\
\intertext{note that}
\leftexp{(q)}{\grad}_a (N \bar{\mu}_q q^{ab} h_{AB} U''^A \ptl_b U'^B)=& U''^A \leftexp{(q)}{\grad}_a (N \bar{\mu}_q q^{ab} h_{AB} \ptl_a U'^B) \notag\\
&+ N \bar{\mu}_q q^{ab} h_{AB} \leftexp{(q)}{\grad}_a U''^A \leftexp{(q)}{\grad}_b U'^B
\end{align}
and
\begin{align}
&D_{U'^A} \cdot (- \halb N \bar{\mu}_{q} q^{ab} h_{AE} (U) U'^A \leftexp{(h)}R^E _{BCD} \ptl_a U^B \ptl_b U^C U'^D ) \notag\\
=& \, -  N \bar{\mu}_q q^{ab} h_{AE} (U) \leftexp{(h)}R^E _{BCD} \ptl_a U^B \ptl_b U^C U'^D 
\end{align}
which combine to give 
\begin{align}
D_{U'^A} \cdot H^{\text{Reg}} =& - \leftexp{(q)}{\grad}_a (N \bar{\mu}_q q^{ab} h_{AB} \ptl_a U'^B) \notag\\
-&  N \bar{\mu}_q q^{ab} h_{AE} (U) \leftexp{(h)}R^E _{BCD} \ptl_a U^B \ptl_b U^C U'^D 
\intertext{which is the Hamilton equation}
=& -\ptl_t p'_A.
\end{align} 
\end{proof}
\begin{theorem}
Suppose the variables $\{ (q'_{ab}, \mbo{\pi}'^{ab}), (U'^A, p'_A) \} \in \mathscr{C}_{H'} \cap \mathscr{C}_{H'_a}  \cap \mathscr{C}_{\tau'},$ then there exists a (spacetime) divergence-free vector field density such that its flux through $t-$constant hypersurfaces is  $H^{\text{Reg}}$ (positive-definite). 
\end{theorem}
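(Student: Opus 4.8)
The plan is to exhibit the spacetime vector density explicitly and then verify that $\partial_t (J^{\text{Reg}})^t + \partial_a (J^{\text{Reg}})^a = 0$ holds modulo the linearized constraints, in the spirit of Moncrief's linearization-stability construction. I would set $(J^{\text{Reg}})^t \fdg = \mathbf{e}^{\text{Reg}}$ as in \eqref{e-reg-den} and take for the spatial flux $(J^{\text{Reg}})^a$ the long expression already recorded in the Introduction --- the $N^2 \bar{\mu}^{-1}_q$-weighted momentum-flux terms, the Lie-derivative terms $\gamma' \mathcal{L}_{N'}(\cdots)$ and $\omega' \mathcal{L}_{N'}(\cdots)$, the $\mbo{\nu}'$-terms and the $\varrho'$--$N'$ terms --- so that $\int_{\Sigma_t} (J^{\text{Reg}})^t \, d^2x = H^{\text{Reg}}$ by \eqref{e-reg} automatically. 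The entire content is therefore the divergence identity; positive-definiteness is already manifest from \eqref{e-reg-den} in the maximal gauge $\tau' \equiv 0$ since the target is the negatively curved hyperbolic $2$-plane.

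First I would differentiate $\mathbf{e}^{\text{Reg}}$ in $t$ term by term, substituting the linearized Hamilton equations \eqref{u-prime-dot}--\eqref{p-prime-dot} for $(U'^A, p'_A)$, the evolution equations for $\varrho'^a_b$ and $\mbo{\nu}'$ recorded in the proof of the preceding lemma, and the maximal-gauge relations $\tau' \equiv 0$, $\Delta_0 N' = 0$ (the latter because $\mathfrak{q}'$ vanishes at the Kerr background, $\mathfrak{q}$ being quadratic in $\varrho$, $\tau$, $p_A$, all of which vanish there). Working in the gauge $\tau' \equiv 0$ kills the $-\halb N e^{2\mbo{\nu}} \bar{\mu}_{q_0} \tau'^2$ contribution and simplifies $\partial_t \varrho'^a_b$. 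The delicate piece is the potential term $-\halb N \bar{\mu}_{q_0} q_0^{ab} h_{AE}(U) U'^A \leftexp{(h)}{R}^E_{\,\,\,\,BCD} \partial_a U^B \partial_b U^C U'^D$: its $t$-derivative must be paired with the curvature term sitting on the right-hand side of \eqref{p-prime-dot}, so that the $p'^A \partial_t p'_A$-type contributions and $\partial_t$ of the potential recombine into a spatial divergence.

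Next I would peel off spatial divergences by repeated integration by parts and use the linearized momentum constraint $H'_a = \leftexp{(q)}{\grad}_b \mbo{\pi}'^b_a + p'_A \partial_a U^A = 0$ to cancel the $N'^b(\partial_b U^A p'_A)$-type terms that arise from the $\varrho'^a_b \varrho''^a_b$ manipulation (exactly as in the proof that $H^{\text{Reg}}$ generates the dynamics), together with $H' = 0$ and the Killing-kernel identity $N H' = \partial_b(N \bar{\mu}_q q^{ab} h_{AB} U'^A \partial_b U^B - 2 \bar{\mu}_q q^{ab} \partial_a N \, \mbo{\nu}')$ of \eqref{LS}. The propagation identities \eqref{h-dot}--\eqref{ha-dot} guarantee that $H' = H'_a = 0$ are preserved along the flow, so these substitutions are legitimate. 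After collecting everything, the non-divergence terms should cancel identically --- conceptually this cancellation is forced by $(N,0)^{\text{T}} \in \ker D^\dagger \cdot \Psi$, i.e. by the existence of the timelike Killing field $\partial_t$ of Kerr --- and the surviving divergences assemble precisely into $-\partial_a (J^{\text{Reg}})^a$. Integrating the resulting identity over $\Sigma_t$ and invoking the boundary and regularity conditions imposed by fiat in Section~4 (vanishing of $U'^A$, $\ptl_{\vec{n}} U'^A$, $p'_A$ and of the normal derivative of $\mbo{\nu}'$, and absence of conical singularity on $\Gamma$) makes the flux of $(J^{\text{Reg}})^a$ through $\partial\Sigma_t$ vanish, so $H^{\text{Reg}}$ is conserved.

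The main obstacle is bookkeeping: organizing the large number of boundary terms generated by the curvature term, the shift-dependent Lie-derivative terms, and the $\mbo{\nu}'$--$N'$ couplings so that they recombine into the stated $(J^{\text{Reg}})^a$ rather than into an unrecognizable expression, while tracking which steps are valid only modulo $H' = H'_a = 0$ and $\tau' = 0$. A secondary subtlety is the boundary behaviour in the quotient at $\Gamma$ and at infinity, which is precisely what makes the passage from the local divergence identity to a genuinely conserved $H^{\text{Reg}}$ nontrivial (and which the paper defers to \cite{GM17} in full generality). An alternative, more conceptual route would obtain $J^{\text{Reg}}$ as the Noether current associated to $\partial_t$ for the quadratic action \eqref{quad-var} augmented by the gravitational second variation \eqref{H''}, but the direct Hamiltonian computation is the one consistent with the explicit $(J^{\text{Reg}})^a$ already recorded in the paper.
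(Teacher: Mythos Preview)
Your approach is essentially the paper's: set $(J^{\text{Reg}})^t=\mathbf{e}^{\text{Reg}}$, differentiate term by term using the linearized Hamilton equations and the $\varrho',\mbo{\nu}'$ evolution, peel off spatial divergences, and absorb the residual non-divergence piece using the constraints. Two small corrections are worth noting. First, the spatial flux you should end up with is not the Introduction's $(J^{\text{Reg}})^a$ (which is written in the specific $(\gamma,\omega)$ target gauge and involves $\mbo{r}'$, $N'$ explicitly), but the target-gauge-invariant expression the paper actually derives at the end of the proof; the two are related but not identical, and the point of this section is precisely the gauge-invariant version. Second, the residual term the paper isolates is $-H'\,\mathcal{L}_{N'}N$, killed directly by $H'=0$; the identity \eqref{LS} is not invoked here, and the momentum constraint enters only through the $\varrho'$ manipulations rather than via an explicit $N'^b p'_A\partial_b U^A$ cancellation. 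Finally, the paper stops at the local divergence identity and does \emph{not} carry out the boundary-flux argument you sketch in your last paragraph; that analysis is explicitly deferred to \cite{GM17}, so including it here goes beyond what the theorem asserts.
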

\begin{proof}
In the proof we shall use the perturbation evolution equations and the background (Kerr metric) field equations.  Consider $\ptl_t \mathbf{e}^{\text{Reg}}$ and it contains the following terms: 
\begin{enumerate}
\item \begin{align} \label{term1}
N \bar{\mu}^{-1}_q p'^A \ptl_t p'_A =&  N \bar{\mu}^{-1}_q p'^A \big (h_{AB} \leftexp{(h)}{\grad}_a (N \bar{\mu}_q q^{ab} \leftexp{(h)}{\grad}_b U'^B) \notag\\
& + N \bar{\mu}_q h_{AB} R^E_{\,\,\,\, BCD} q^{ab} \ptl_a U^B \ptl_b U^D U'^C \big),
\end{align}
\item \begin{align} \label{term2}
 &N \bar{\mu}_q q^{ab} \leftexp{(h)}{\grad}_a (\ptl_t U'^A) \leftexp{(h)}{\grad}_b U'^B \notag\\
 &= N \bar{\mu}_q q^{ab} h_{AB}(U) \leftexp{(h)}{\grad}_b U'^B (\leftexp{(h)}{\grad}_a (\bar{\mu}^{-1}_q Np'^A + \mathcal{L}_{N'} U^A)) 
 \end{align}
 Note the divergence relation involving the terms from \eqref{term1} and \eqref{term2}. 
 \begin{align}
 \leftexp{(h)}{\grad}_a (N^2 q^{ab} h_{AB} p'^A \leftexp{(h)}{\grad}_b U'^B) =& \bar{\mu}^{-1}_q N p'^A \leftexp{(h)}{\grad}_a (N \bar{\mu}_q q^{ab} \leftexp{(h)}{\grad}_b U'^B) \notag\\
 & \quad+ N \bar{\mu}_q q^{ab} h_{AB} \leftexp{(h)}{\grad}_b U'^B \leftexp{(h)}{\grad}_a (N \bar{\mu}^{-1}_q p'^A)
 \end{align}
\item  \begin{align}
  &N \bar{\mu}_q h_{AE}(U) \ptl_t U'^A R^A_{\,\,\,\,BCD} q^{ab} \ptl_a U^B \ptl_b U^D U'^C  \notag\\
  &=N \bar{\mu}_q h_{AE} (\bar{\mu}^{-1}_q N p'_A + \mathcal{L}_{N'} U^A) R^E_{\,\,\,BCD} q^{ab} \ptl_a U^B \ptl_b U^D U'^C
\end{align}

\item 
\begin{align}
& e^{-2 \mbo{\nu}} N \bar{\mu}^{-1}_{q_0} \varrho'^c_a \ptl_t \varrho'^a_c  \notag\\
&= N (\leftexp{(q)}{\grad}_a N'^b + \leftexp{(q)}{\grad}_a N'^b - \delta^b_a \leftexp{(q)}{\grad}_c N'^c) \Big(  N \bar{\mu}_q  (q^{ac} \delta^d_b - \halb \delta^a_b q^{cd}) \notag\\ 
& \quad \quad \cdot ( h_{AB} \ptl_b U'^A \ptl_d U^B + \halb  \ptl_{U}h_{AB} \ptl_b U'^A \ptl_d U^B  )  +  2 \bar{\mu}_q q^{ac}\ptl_c N \ptl_b \mbo{\nu}' \notag\\
& \quad \quad \quad \quad - \bar{\mu}_q \delta^a_b q^{cd} \ptl_c N \ptl_d \mbo{\nu}'\Big) \\
&= \mathcal{L}_{N'} (\bar{\mu}_{q_0} q^{ab}_0) (h_{AB} \ptl_a U'^A \ptl_b U^B + \halb  \ptl_{U^C} h_{AB} \ptl_a U^A \ptl_b U^B U'^C -2 \ptl_a N \ptl_b \mbo{\nu}' ).
\end{align}

\end{enumerate}
Consider the following divergence identities: 
\begin{subequations} \label{div-ident-mom}
\begin{align}
\leftexp{(q_0)}{\grad}^a (N'^b  \ptl_a N \ptl_b \mbo{\nu}' \bar{\mu}_{q_0}) =& \bar{\mu}_{q_0}  \leftexp{(q_0)}{\grad}^a N'^b  \ptl_a N \ptl_b \mbo{\nu}' \notag\\
& \quad + N'^b \leftexp{(q_0)}{\grad}^a ( \bar{\mu}_{q_0} \ptl_a N \ptl_b \mbo{\nu}') \\
\leftexp{(q_0)}{\grad}^b ( N'^a \ptl_a N \ptl_b \mbo{\nu}' \bar{\mu}_{q_0})=& \bar{\mu}_{q_0}  \leftexp{(q_0)}{\grad}^b N'^a  \ptl_a N \ptl_b \mbo{\nu}'  \notag\\
& \quad+ N'^a \leftexp{(q_0)}{\grad}^b (\bar{\mu}_{q_0} \ptl_a N \ptl_b \mbo{\nu}') \\
\leftexp{(q_0)}{\grad}_c ( N'^c q^{ab}_0 \ptl_a N \ptl_b \mbo{\nu}' \bar{\mu}_{q_0}) =& \leftexp{(q_0)}{\grad}_c N'^c( q^{ab}_0 \ptl_a N \ptl_b \mbo{\nu}' \bar{\mu}_{q_0} ) \notag\\
&\quad + N'^c \leftexp{(q_0)}{\grad}_c ( q^{ab}_0 \ptl_a N \ptl_b \mbo{\nu}' \bar{\mu}_{q_0}).
\end{align}
\end{subequations}
Based on the right-hand sides of the divergence identities in \eqref{div-ident-mom} we get, after using the background field equation \eqref{kerr-maximal},

\begin{align}
&-2  \mathcal{L}_{N'} (\bar{\mu}_q q^{ab})  \ptl_a N \ptl_b \mbo{\nu}' \notag\\
&= -2 N'^b \ptl_b N \ptl_a (\bar{\mu}_q q^{ab} \ptl_a \mbo{\nu}') + 2 \leftexp{(q_0)}{\grad}_a ( N'^c \ptl_c \mbo{\nu}' \bar{\mu}_{q_0} q^{ab}_0 \ptl_b N ) \notag\\
& \quad + 2 \leftexp{(q_0)}{\grad}_a ( N'^c \ptl_c N \bar{\mu}_{q_0} q^{ab}_0 \ptl_b \mbo{\nu}' ) - 2 \leftexp{(q_0)}{\grad}_c ( N'^c \ptl_a \mbo{\nu}' \bar{\mu}_{q_0} q^{ab}_0 \ptl_b N ) \notag\\
&= \mathcal{L}_{N'} N (- H' + h_{AB} \ptl_a U'^A \ptl_b U^B + \halb  \ptl_{U^C} h_{AB} \ptl_a U^A \ptl_b U^B U'^C ) \notag\\
& \quad + 2 \leftexp{(q_0)}{\grad}_b (\mathcal{L}_{N'} \mbo{\nu}' \bar{\mu}_{q_0} \ptl^b N+ \mathcal{L}_{N'} N \bar{\mu}_{q_0} \ptl^b \mbo{\nu'} - \bar{\mu}_{q_0} N'^b \ptl_a N \ptl^a \mbo{\nu}').
\end{align}
Now let us focus on the remaining `shift' terms: 
\begin{align}
&N \bar{\mu}_q q^{ab} h_{AB} \leftexp{(h)}{\grad}_b U'^B \leftexp{(h)}{\grad}_a (\mathcal{L}_{N'} U^A), \label{grad-square}\\ 
&- N \bar{\mu}_q h_{AE} ( \mathcal{L}_{N'} U^A) R^E_{\,\,\,BCD} q^{ab} \ptl_a U^B \ptl_b U^D U'^C \\
\intertext{and}
& \mathcal{L}_{N'} (\bar{\mu}_{q_0} q^{ab}_0) (h_{AB} \ptl_a U'^A \ptl_b U^B + \halb  \ptl_{U}h_{AB} \ptl_a U'^A \ptl_b U^B)
\end{align}
Consider the quantity $N \bar{\mu}_q q^{ab} h_{AB} \ptl_a U'^A (\mathcal{L}_{N'}(\ptl_b U^B) )$ that occurs in \eqref{grad-square}, we have 
\begin{align}
&N \bar{\mu}_q q^{ab} h_{AB} \ptl_a U'^A (\mathcal{L}_{N'}(\ptl_b U^B)) + \ptl_{U^C} h_{AB} \mathcal{L}_{N'} U^C N \bar{\mu}_qq^{ab}\ptl_a U'^A \ptl_b U^B    \notag\\
=& \ptl_a U'^A \mathcal{L}_{N'} ( N \bar{\mu}_q q^{ab} h_{AB}(U) \ptl_b U^B) \notag\\
& \quad -\mathcal{L}_{N'} N (h_{AB} \bar{\mu}_q q^{ab} \ptl_b U'^B)  - \mathcal{L}_{N'} (\bar{\mu}_q q^{ab}) N h_{AB} \ptl_a U'^A\ptl_b U^B 
\end{align}
likewise
\begin{align}
&  U'^A \mathcal{L}_{N'} \big( \ptl_b (N \bar{\mu}_q q^{ab} h_{AB} \ptl_a U^B) \big) \notag\\
&= U'^A \mathcal{L}_{N'} h_{AB}\ptl_a ( N \bar{\mu}_q q^{ab} \ptl_b U^B ) + U'^A \mathcal{L}_{N'} ( N \bar{\mu}_q q^{ab} \ptl_b U^B \ptl_{U^C} h_{AB} \ptl_a U^C ) .
\end{align}

Collecting the terms above, while using the background field equations \eqref{Kerr-p-dot} and computations analogous to the ones in Section 3; and
\begin{align}
& \ptl_a (U'^A \mathcal{L}_{N'} (N \bar{\mu}_q q^{ab} h_{AB} \ptl_a U^A)) \notag\\
& = \ptl_a U'^A  \mathcal{L}_{N'} (N \bar{\mu}_q q^{ab} h_{AB} \ptl_a U^A ) + \mathcal{L}_{N'} \big( \ptl_a (N \bar{\mu}_q q^{ab} h_{AB} \ptl_a U^A) \big) \notag \\
& = \ptl_a U'^A  \mathcal{L}_{N'} (N \bar{\mu}_q q^{ab} h_{AB} \ptl_a U^A ) + \ptl_a (\mathcal{L}_{N'} (N \bar{\mu}_q q^{ab} h_{AB} \ptl_a U^A))  
\end{align}
 we have,
\begin{align}
\ptl_t \mathbf{e}^{\text{Reg}} =& \ptl_b ( N^2 \bar{\mu}^{-1}_q ( \bar{\mu}_{q_0} q_{0}^{ab} p'_A \ptl_a U'^A) + U'^A \mathcal{L}_{N'} ( \bar{\mu}_{q_0} q^{ab}_0 h_{AB} \ptl_b U^B)  ) \notag\\
 & \mathcal{L}_{N'} (N) (2 \bar{\mu}_{q_0} q_{0}^{ab} \ptl_a \mbo{\nu}' + 2 \mathcal{L}_N \mbo{\nu}' \bar{\mu}_q q^{ab} \ptl_a N - 2 N'^b \bar{\mu}_q q^{bc} \ptl_a \mbo{\nu}' \ptl_c N) \notag \\
 &- H' \mathcal{L}_{N'}(N)
 \intertext{for $ \big \{ (q'_{ab}, \mbo{\pi}'^{ab}), (U'^A, p'_A) \big \} \in \mathscr{C}_{H'}$ this reduces to  }
 =& \ptl_b ( N^2 \bar{\mu}^{-1}_q ( \bar{\mu}_{q_0} q_{0}^{ab} p'_A \ptl_a U'^A) + U'^A \mathcal{L}_{N'} ( \bar{\mu}_{q_0} q^{ab}_0 h_{AB} \ptl_b U^B)  ) \notag\\
 & \mathcal{L}_{N'} (N) (2 \bar{\mu}_{q_0} q_{0}^{ab} \ptl_a \mbo{\nu}') + 2 \mathcal{L}_N \mbo{\nu}' \bar{\mu}_q q^{ab} \ptl_a N - 2 N'^b \bar{\mu}_q q^{bc} \ptl_a \mbo{\nu}' \ptl_c N). \notag \\
\end{align}
Thus, if we define, 
\begin{align}
(J^ t)^ {\text{Reg}} \fdg =& \, \mathbf{e}^{\text{Reg}} \notag\\
(J^b)^{\text{Reg}} \fdg=& \, N^2 e^{-2 \mbo{\nu}}(q_{0}^{ab} p'_A \ptl_a U'^A) + U'^A \mathcal{L}_{N'} ( \bar{\mu}_{q_0} q^{ab}_0 h_{AB} \ptl_b U^B)  ) \notag\\
 & \mathcal{L}_{N'} (N) (2 \bar{\mu}_{q_0} q_{0}^{ab} \ptl_a \mbo{\nu}') + 2 \mathcal{L}_N \mbo{\nu}' \bar{\mu}_q q^{ab} \ptl_a N - 2 N'^b \bar{\mu}_q q^{bc} \ptl_a \mbo{\nu}' \ptl_c N,
\end{align}
it follows that $J^{\text{Reg}}$ is a divergence-free vector field density for $\big \{ (q'_{ab}, \mbo{\pi}'^{ab}), (U'^A, p'_A) \big \} \in \mathscr{C}_{H'} \cap \mathscr{C}_{H'_a} \cap \mathscr{C}_{\tau'}.$ 

\end{proof}

\subsection*{Acknowledgements}
I acknowledge the gracious hospitality of Institut des Hautes \'Etudes Scientifiques (IHES) at Bures-sur-Yvette in Fall 2016 and the Department of Mathematics, Yale University during my postdoctoral stay,  where parts of this work were done. 
 Special gratitude is due to Vincent Moncrief for the encouragement. 
\bibliography{central-bib}
\bibliographystyle{plain}

\end{document}